\newtheorem{Thm}{Theorem}
\newtheorem{Lem}{Lemma}
\newtheorem{Cor}{Corollary}
\newtheorem{Def}{Definition}
\newtheorem{Alg}{Algorithm}
\newtheorem{Prob}{Problem}
\newtheorem{Rem}{Remark}
\newtheorem{Asump}{Assumption}
\begin{document}

\title{\huge{\textcolor{black}{Delay-aware BS Discontinuous Transmission Control and User Scheduling for Energy Harvesting Downlink Coordinated MIMO Systems}}}
\author{\thanks{This work was supported by Huawei Technologies Co. Ltd.}\authorblockN{Ying Cui, \IEEEmembership{MIEEE}, Vincent K.~N.~Lau,
\IEEEmembership{FIEEE}, Yueping Wu, \IEEEmembership{StMIEEE}}
\authorblockA{ECE Department, Hong Kong University of Science and Technology, Hong Kong\\
Email: cuiying@ust.hk, eeknlau@ece.ust.hk, eewyp@ust.hk} }

\maketitle

\begin{abstract}

In this paper, we propose a {\em two-timescale} delay-optimal 
base station Discontinuous Transmission (BS-DTX) control and user
scheduling for downlink coordinated MIMO systems with energy harvesting capability. \textcolor{black}{To reduce the complexity and signaling overhead in practical systems,} the BS-DTX control is adaptive to both the energy state
information (ESI) and the data queue state information (QSI) over a
longer \textcolor{black}{timescale}. The user scheduling is adaptive to the ESI,
the QSI and the channel state information (CSI) over a shorter
timescale. We show that the two-timescale delay-optimal control
problem can be modeled as an infinite horizon average cost Partially
Observed Markov Decision Problem (POMDP), which is well-known to be
a difficult problem  in general. By using sample-path analysis and exploiting specific problem
structure, we first obtain some structural
results on the optimal control policy and derive an {\em equivalent
Bellman equation} with reduced state space.  To reduce the
complexity and facilitate distributed implementation, we obtain a delay-aware
distributed solution with the BS-DTX control at the \textcolor{black}{BS controller} (BSC) and the user
scheduling at each cluster manager (CM) using approximate dynamic programming and
distributed stochastic learning.  We show that the proposed distributed
two-timescale algorithm converges almost surely. Furthermore, using queueing theory,
stochastic geometry and optimization techniques, we derive sufficient conditions
for the data queues to be stable in the \textcolor{black}{coordinated} MIMO network
and discuss various design insights. Finally, we compare the
proposed algorithm with various baseline schemes and show that
significant delay performance gain can be achieved.
\end{abstract}

\begin{keywords}
delay-aware, base station discontinuous transmission control (BS-DTX), interference network,
renewable energy, energy harvesting system, distributed stochastic learning, queueing theory, stochastic geometry.
\end{keywords}

\newpage

\section{Introduction}\label{sec:intro}

\textcolor{black}{Inter-cell interference is a critical performance bottleneck in cellular networks.
The interference mitigation techniques can be roughly classified into two types, namely coordinated MIMO techniques and cooperative MIMO techniques, according to the required backhaul consumption \cite{GesbertCoMIMOnewlook:2010}. For coordinated MIMO techniques, only the channel state information (CSI) is shared among MIMO base stations (BSs) through backhaul for the coordinated beamforming design at each BS to combat interference\cite{WeiYuCoordinatedBF:2010}. On the other hand, for cooperative MIMO techniques, both the CSI and the payload data are shared among MIMO BSs through backhaul for joint precoder designs at all the BSs  to combat interference\cite{RobertNetworkMIMO:09}. Since CSI sharing is performed for each transmission frame, while data sharing is operated for each data symbol, coordinated MIMO consumes much less backhaul capacity than cooperative MIMO at the expense of performance (e.g., degrees of freedom). }

\textcolor{black}{Due to the limited degrees of freedom and the limited backhaul capacity at each BS, global cooperation or coordination of all the BSs in the network is not possible and the BSs are organized into disjoint {\em clusters} \cite{KBCellularCoodinateMIMO:2011,RobertNetworkMIMO:09,Jun2007:networkmimo,Howard2007:networkmimo,Brueck2010:networkmimo}. The BSs within each cluster cooperatively serve the users associated with them, which lowers the system complexity and completely eliminates intra-cluster interference. For example,  in \cite{KBCellularCoodinateMIMO:2011},  multi-antenna BSs in each fixed cluster adopt coordinated beamforming to serve the single-antenna users in their own cells and avoid the interference to the users served by other BSs in the same cluster.  In \cite{RobertNetworkMIMO:09,Jun2007:networkmimo}, the authors propose a BS cooperation strategy for fixed clusters, including full intra-cluster cooperation to eliminate intra-cluster interference and limited inter-cluster coordination to reduce the interference for the cluster edge users based on the per-cluster CSI and the CSI of the edge users in the neighboring clusters. In \cite{Howard2007:networkmimo,Brueck2010:networkmimo}, the authors consider different types of static cluster-based cooperation schemes  in a multi-cell system with multiple sectors per cell.}

However, all these works focus on physical layer performance (such as sum throughput, transport capacity) in cellular networks. 
They \textcolor{black}{ignore} the bursty  data arrivals and \textcolor{black}{assume} infinite
backlogs of packets at the transmitter. In other words, the
information flows are assumed to be delay insensitive. The resulting control policy is adaptive to the CSI only  and it cannot guarantee good delay performance for delay-sensitive applications  \cite{KBCellularCoodinateMIMO:2011,RobertNetworkMIMO:09,Jun2007:networkmimo,Howard2007:networkmimo,Brueck2010:networkmimo}. 
In practice,
a lot of applications have bursty arrivals and they are
delay-sensitive. It is very important to take into account the
delay performance in designing the cross-layer interference control
algorithms for the coordinated MIMO systems. 
The control policy for delay-sensitive applications should be adaptive
to both the CSI and the queue state information\footnote{The CSI
gives the knowledge about good opportunity to transmit whereas the
QSI gives the knowledge about the urgency of the data flow.} (QSI). 
\textcolor{black}{The motivation can be illustrated by the following example, as illustrated in Fig. \ref{fig:system-model} (a). Under cluster-based cooperative or coordinated MIMO, MSs only suffer from inter-cluster interference, as intra-cluster interference is eliminated. Therefore, cluster edge MSs suffer much more interference than cluster center MSs. In this work, we are interested to investigating delay-aware BS-discontinuous transmission (BS-DTX) control and user scheduling to reduce inter-cluster interference and save energy of the whole network. To maximize the sum throughput, the CSI-based BS-DTX control and user scheduling always favors cluster center MSs while starves cluster edge MSs. This may lead to infinite delay of cluster edge MSs and hence, infinite average delay of all the MSs. However, the QSI and CSI based design will dynamically favor different types of MSs to capture the urgency of data flows and the good opportunity of channels. Therefore, it can guarantee good delay performance.}
However, the design framework taking \textcolor{black}{into account} the queueing
delay and the physical layer performance is far from trivial as it
involves both queuing theory (to model the queuing dynamics) and information theory (to model the physical layer dynamics). 

In addition, recent initiatives towards green communications have
driven the design of wireless infrastructure to be more
energy-efficient. One energy-efficient design is to exploit
renewable energy at BSs. There are many recent works on power management  in
energy harvesting networks. For example, in
\cite{NeelyEHMobihoc11,TassiulasEH10}, the authors extend the
Lyapunov optimization framework  to derive an
efficient energy management algorithm for energy harvesting
networks. In  \cite{KarEH06}, the authors consider dynamic node
activation in energy harvesting sensor networks and propose a \textcolor{black}{simple} 
threshold-based node activation policy to achieve near-optimal system throughput.  Similarly, all these papers have focused on physical layer
throughput performance and the nodes are powered by renewable energy
source only with infinite energy storage size.

In this paper, we consider delay-optimal 
BS-DTX control and user scheduling
algorithm in downlink \textcolor{black}{energy harvesting} coordinated MIMO systems with limited
renewable energy storage. Each BS is powered by both conventional grid
and renewable power sources.  There are various first-order technical
challenges involved in solving the problem.

$\bullet$ {\bf Renewable and \textcolor{black}{Grid} Power Control:} The transmit power
of a BS comes from both renewable and \textcolor{black}{grid power} sources, which
have very different properties. For instance, the \textcolor{black}{grid power}  has
stable power supply but there is cost associated with it. On the
other hand, the renewable power is virtually free but it has random
supply and hence, an energy storage is needed for efficient
utilization of the renewable energy. In practice, the energy storage
has limited capacity and hence, the BS power control and user
scheduling algorithm should be adaptive to \textcolor{black}{the
renewable energy state information (ESI)} and the data QSI as well as the CSI. It is
highly nontrivial to strike a balance between these factors in the
control algorithm design.

$\bullet$ {\bf Delay-aware Low Complexity Distributed Algorithm:} While the \textcolor{black}{delay-optimal} control problem can be casted into an Markov Decision Process (MDP),
brute force solutions such as value iteration and policy iteration
 will suffer from the {\em curse of dimensionality}\cite{Bertsekas:2007}.
For example, a very large state space (exponential to the number of
users in the network) will be involved. In addition to the
complexity issue, the solution obtained will be centralized and it
requires knowledge of global system state information (ESI, QSI, CSI). However,
these system state information is usually distributed locally at
various BSs and huge signaling overhead will be involved in
collecting these information. Therefore, it is highly desirable to obtain a delay-aware low complexity and distributed algorithm with guaranteed delay performance.

$\bullet$ {\bf Performance Analysis:} Besides algorithm development, it is important to  analyze the system performance to understand how it is affected by the renewable energy storage size
and the interference coupling in cellular networks. One
challenge on the system performance analysis is  the statistical
characterization of  interference. In \cite{JeffSGCellularNet10},
the authors study the coverage and  rate of cellular networks without BS coordination using
stochastic geometry\cite{SGbookHaenggi08Now}. The locations of the
BSs are modeled as \textcolor{black}{a} homogeneous Poisson point process
(PPP) and the locations of the mobile stations (MSs) \textcolor{black}{are modeled} as some
independent (of the point process of BSs) point process. The analysis  
for coordinated MIMO network is more challenging due to the asymmetric topology induced by clustering. In addition, the analysis becomes more involved when queueing
dynamics of   data queues and renewable energy queues are considered.

In this paper, \textcolor{black}{considering the limited backhaul capacity and the latency in information exchange through backhaul in practical cellular systems \cite{Brueck2010:networkmimo}, we adopt cluster-based coordinated\footnote{\textcolor{black}{The design framework proposed in this paper does not rely on specific physical layer transmission schemes and can be easily extended to cluster-based cooperative MIMO.}} MIMO to eliminate intra-cluster interference.} We propose a {\em two-timescale} delay-aware 
BS-DTX control and user scheduling for energy harvesting downlink coordinated
MIMO systems as illustrated in Fig. \ref{fig:system-model} (a). The
BS-DTX control is  adaptive to both the ESI and the QSI over a
longer timescale. The user scheduling is adaptive to the ESI,
the QSI and the CSI over a shorter timescale. We show that the
two-timescale delay-optimal control problem can be modeled as an
infinite horizon average cost Partially Observed Markov Decision
Process (POMDP), which is well-known to be a difficult problem
\cite{Meuleau:1999}. By \textcolor{black}{using sample-path analysis} and exploiting the specific problem
structure, we first obtain some structural
results on the optimal control policy and derive an {\em equivalent
Bellman equation} with reduced state space.  
To derive a distributed control policy, we approximate the Q-factor and potential function associated with the equivalent Bellman equation by the {\em per-flow functions}. The per-flow functions are estimated online using distributed {\em stochastic learning} at each BS. 
We prove the almost-sure convergence of the
proposed distributed  algorithm. Furthermore, using
queueing theory, stochastic geometry and optimization techniques, we
characterize the sufficient conditions for data queues in the \textcolor{black}{coordinated} MIMO networks to be stable. 
Based on the analysis, we discuss the impacts of  the interference coupling and the size of renewable energy storage on network performance. Finally, we compare
the proposed algorithm with various baseline schemes and show that
significant delay performance gain can be achieved.

\section{System Models}\label{sec:system_model}
In this section, we shall elaborate \textcolor{black}{on} the 
system architecture, the physical layer model as well as the bursty source \textcolor{black}{model} for the coordinated MIMO networks.


\subsection{Architecture of \textcolor{black}{Downlink} Distributed MIMO Systems}
We consider a downlink coordinated MIMO system consisting of $B$ multi-antenna BSs and $K$ single-antenna MSs as illustrated in Fig. \ref{fig:system-model} (a). 
Each BS has $N_t$ transmit antennas. Let $\mathcal{K}_b$ denote the set of $K_b$ MS indices associated with the $b$-th BS and $\mathcal K$ denote the set of $K=\sum_b K_b$ MS indices in the network.  The set of BSs $\mathcal B = \{1,\cdots, B\}$ are partitioned into $N = B/N_t$ {\em coordination clusters}\footnote{For simplicity, we assume $B$ is a multiple of $N_t$.}, i.e., $\mathcal{B} = \cup_{n=1}^N \mathcal B_n$ and $\mathcal B_n\cap \mathcal B_{n'}=\emptyset $ $\forall n\neq n'$, where  $\mathcal{B}_n$ \textcolor{black}{denotes} the set of $B_n$ BSs in cluster $n$.  Each coordination cluster contains $N_t$ neighboring BSs and is managed by a cluster manager (CM) and all the $N$ CMs are managed by a BS controller (BSC). 
The BSs in the same cluster share the CSI and perform coordinated beamforming\cite{GesbertCoMIMOnewlook:2010} to combat intra-cluster interference. Besides
conventional \textcolor{black}{grid power} source, each BS is able to harvest energy from
the environment, \textcolor{black}{e.g.,} using solar panels
\cite{Gozalvezgreenradio10}. At each BS, there is a renewable energy
queue (battery) with limited capacity  for storing the harvested energy. In addition, at each BS, there are multiple data queues for buffering the packets to all
the MSs associated with the BS (one queue for each MS) as
illustrated in Fig. \ref{fig:system-model} (a).

\subsection{Physical Layer Model}

Let $\mathbf h_{k,b}\in \mathcal H $ and $L_{k,b}$ denote the $N_t\times 1$ complex 
small-scale fading vector  and the long-term path gain  between
the $b$-th  BS and the $k$-th  MS, where
$\mathcal H\subset \mathbb C^{N_t\times 1}$ denotes the finite discrete complex CSI state space.
Let $\mathbf H_{n}=\{\mathbf h_{k,b}: k\in \mathcal K_b, b\in \mathcal
B_n\}\in \boldsymbol{\mathcal H}_n \triangleq \mathcal H^{\sum_{b\in \mathcal B_n}K_b} $
and $\mathbf H=\cup_{n=1}^N \mathbf H_n\in \boldsymbol{\mathcal H}
\triangleq \mathcal H^{K}$ denote the
intra-cluster CSI  at $n$-th CM and the aggregation of \textcolor{black}{the} CSI
 over $N$ clusters, respectively. In this paper, the time
dimension is partitioned into scheduling slots indexed by $t$ with
slot duration $\tau$ (second).
\begin{Asump}[Quasi-static Fading]
$\mathbf h_{k,b}(t)$  is quasi-static in each scheduling slot for all
$(k,b)\in \mathcal K\times \mathcal B$. Furthermore, each element of vector $\mathbf h_{k,b}(t)$
follows a general distribution with mean 0 and vairiance 1. The distribution of each element of vector $\mathbf h_{k,b}(t)$  is i.i.d.  over scheduling \textcolor{black}{slots} and
independent w.r.t. $\{k,b\}$.  The
long-term path gain $L_{k,b}$ remains
constant for the duration of the communication session. 
~\hfill\QED\label{Asump:H}
\end{Asump}

We assume all the BSs in the system share a common spectrum.  Let  $p_b\in\mathcal P
\triangleq\{0,1\}$ denote the binary  BS-DTX control action of the $b$-th BS, where $p_b =1$ indicates the $b$-th BS is active and $p_b=0$ otherwise. Between the coordination clusters, the {\em
inter-cluster interference}  is managed by a binary {\em BS-DTX
control action}  $\mathbf p=\{p_b:p_b\in\mathcal
P,b\in \mathcal B \}\in\boldsymbol{\mathcal P}$, where
$\boldsymbol{\mathcal P} \subseteq \mathcal P^B$ is the aggregate BS-DTX
control action space and specifies the BS-DTX
patterns\cite{BPCmultiCELL}.
Since each BS has renewable and \textcolor{black}{grid power} sources, we have $
p_b=p_b^{E}+p_b^{G}$, where $p_b^{E}\in \mathcal P$ and $p_b^{G}
\in \mathcal P$ denote the power contribution from the renewable
power and \textcolor{black}{grid power} sources of the $b$-th BS, respectively. 
Let  $s_k\in\mathcal S\triangleq\{0,1\}$ denote the user scheduling action of the $k$-th MS, where $s_k=1$ indicates the $k$-th MS is selected  to receive packets and $s_k=0$ otherwise. Thus, users  are selected according to a
{\em user scheduling action}  $\mathbf
s=\{s_k:s_k\in\mathcal S,k\in \mathcal K \}\in \boldsymbol{\mathcal
S}$, where  $ \boldsymbol{\mathcal
S} \subseteq \mathcal S^K$ is the aggregate user scheduling action space.
The BS-DTX control and user scheduling are performed according to a
control policy to be defined in Definition
\ref{Defn:feasible_policy}.

In each slot, each active BS selects one MS to serve. Within each coordination cluster, the active BSs combat the intra-cluster interference using coordinated beamforming \cite{GesbertCoMIMOnewlook:2010,WeiYuCoordinatedBF:2010,KBCellularCoodinateMIMO:2011}. Let
$P_b$ and $x_k$ denote the instantaneous transmit power of
the $b$-th BS and the information symbols for the $k$-th MS,
respectively. The received signal \textcolor{black}{at} the $k$-th MS of the $b$-th
cell in the $n$-th cluster is given by
\begin{align}
y_k = &\underbrace{ p_b \sqrt{P_b} \sqrt{L_{k,b}} \mathbf h_{k,b}^T
\mathbf w_{k,b}  s_k x_k}_{\text{desired signal}} +
\underbrace{  \sum_{\substack{b'\in \mathcal B_n, b'\neq b}} p_{b'} \sqrt{P_{b'}}
\sqrt{L_{k,b'}} \mathbf h_{k,b'}^T \left(\sum_{\substack{k'\in \mathcal K_{b'}}}\mathbf w_{k',b'}
s_{k'}x_{k'}\right)}_{\text{intra-cluster interference}}
\nonumber\\
 &+ \underbrace{\sum_{\substack{n'\neq n}}\sum_{\substack{b'\in \mathcal B_{n'}}} p_{b'} \sqrt{P_{b'}}
\sqrt{L_{k,b'}} \mathbf h_{k,b'}^T \left(\sum_{\substack{k'\in \mathcal K_{b'}}}\mathbf w_{k',b'}
s_{k'}x_{k'}\right)}_{\text{inter-cluster interference}} +
\underbrace{z_k}_{\text{noise}}, \  k \in \mathcal K_b, b\in \mathcal B_n
\nonumber
\end{align}
where $z_k \sim \mathcal{CN}(0,N_0)$ is the AWGN noise and $\mathbf w_{k,b}\in\mathbb C^{N_t\times 1}$
is the {\em zero-forcing beamforming weight}  for  the $k$-th MS at the $b$-th BS.
Specifically, 
$\{\mathbf w_{k,b}\}$ is given by the solution\footnote{If there are more than one solutions, we choose the one maximizes $||\mathbf h_{k,b}^T \mathbf  w_{k,b} ||^2$.} of the zero-forcing problem: 
$\sum_{k \in
\mathcal K_b} ||\mathbf w_{k,b}||^2 s_k = p_b$  and $s_k\mathbf h_{k,b'}^T \left(\sum_{\substack{k'\in \mathcal K_{b'}}}\mathbf w_{k',b'}
s_{k'}\right)=0$ ($\forall b'\in \mathcal B_n, b'\neq b$).

The receive SINR at the $k$-th MS of the $b$-th
cell in the $n$-th cluster is given by
\begin{align}
\rho_k(\mathbf H, \mathbf p, \mathbf s) = \frac{P^{rx}_k} {N_0+I_k},\ k \in \mathcal K_b, b\in \mathcal B_n
\label{eqn:SINR_per_user}
\end{align}
where the receive power $P^{rx}_k$ and the inter-cluster
interference power $I_k$ are given by
\begin{align}P^{rx}_k&= p_b P_b L_{k,b}||\mathbf h_{k,b}^T \mathbf  w_{k,b} ||^2
s_k\label{eqn:pwr-k}
\end{align}
\begin{align}
I_k&=\sum_{\substack{n'\neq n}}\sum_{\substack{b'\in \mathcal B_{n'}}} p_{b'} P_{b'}
L_{k,b'} ||\mathbf h_{k,b'}^T \big(\sum_{\substack{k'\in \mathcal K_{b'}}}\mathbf w_{k',b'}
s_{k'}\big)||^2\label{eqn:inf-k}
\end{align}
We have the following assumption regarding packet transmission.
\begin{Asump} [Packet Transmission Model] One data packet with
certain fixed packet size can be successfully received by the $k$-th
MS  if the receive SINR $\rho_k$ exceeds a certain
threshold\footnote{In general, we allow different MSs with different
packet sizes, and hence the threshold is indexed by $k$ and may be
different for different MSs.} $\delta_k$, i.e., $\rho_k\geq
\delta_k$. There exists a state-action pair $(\mathbf H, \mathbf p, \mathbf s)\in \boldsymbol{\mathcal H}\times \boldsymbol{\mathcal P}\times \boldsymbol{\mathcal S}$, such that $\Pr[\rho_k(\mathbf H, \mathbf p, \mathbf s)\geq \delta_k]>0$.
~\hfill\QED \label{Asump:pck-tx-model}
\end{Asump}

\subsection{Bursty Source Model and Queue Dynamics}

Let $\mathbf{A}^Q(t)=\{A^Q_k(t):  k\in \mathcal{K}\}$ and
$\mathbf{A}^E(t)=\{A^E_b(t):  b\in \mathcal{B}\}$ be the number of
packets \textcolor{black}{arriving to} the $K$ MSs and the number of renewable energy
units\footnote{One unit of energy for the $b$-th BS corresponds to
the amount of energy consumed in downlink transmission at each slot
for the $b$-th BS, i.e. $P_b\tau$ Joule. \textcolor{black}{
Note that the instantaneous transmit power from the renewable power source is finite (i.e., $P_b$). 
The notion ``unit of energy'' can be easily extended from binary (on-off) power control to handle (multi-level) power control. 
}
} \textcolor{black}{arriving to} the $B$ BSs at the end of the $t$-th scheduling slot, respectively. We have the
following assumptions\footnote{Note that under Assumption \ref{Asump:general_A-Q} and Assumption \ref{Asump:general_A-E}, we have $\Pr[A^Q_k(t)=0]>0$ and $\Pr[A^E_b(t)=0]>0$ for all $k\in \mathcal K$ and $b\in \mathcal B$, respectively.} regarding the bursty data and renewable
energy arrival processes.
\begin{Asump} [Bursty Data Source Model]
The arrival process  $A^Q_k(t)$ is  i.i.d. over scheduling \textcolor{black}{slots}
and independent w.r.t.  $k$ according to a general distribution
$P_{A^Q_k}(\cdot)$  with average arrival rate $\mathbb E[A^Q_k(t)] =
\lambda^Q_k<1$. The statistics of $A^Q_k(t)$ is   unknown to the
controller.~\hfill\QED \label{Asump:general_A-Q}
\end{Asump}
\begin{Asump} [Bursty Renewable Energy Model] The arrival process  $A^E_b(t)$ is  i.i.d. over scheduling \textcolor{black}{slots}
and independent w.r.t.  $b$ according to a general distribution
$P_{A^E_b}(\cdot)$  with average arrival rate $\mathbb E[A^E_b(t)] =
\lambda^E_b<1$. The statistics of $A^E_b(t)$ is  unknown to the
controller.~\hfill\QED\label{Asump:general_A-E}
\end{Asump}

\textcolor{black}{
\begin{Rem} [Interpretation of Assumption \ref{Asump:general_A-E}]
Assumption \ref{Asump:general_A-E} implies that the renewable power source is stationary. Although the renewable energy source is not stationary over a very long time horizon in practice,  it  is stationary over a typical communication session, which lasts for \textcolor{black}{less} than 30 mins.~\hfill\QED
\end{Rem}
}

Let $\mathbf{Q}_n(t)=\{Q_k(t):k \in \mathcal K_n\}\in
\boldsymbol{\mathcal Q}_n\triangleq \mathcal Q^{\sum_{b\in \mathcal B_n}K_b}$ be the $n$-th
cluster QSI  and $\mathbf{Q}(t)=\cup_{n=1}^N \mathbf{Q}_n(t)
\in \boldsymbol{\mathcal Q}\triangleq \mathcal Q^{K}$ be the
aggregation of \textcolor{black}{the} QSI over $N$ clusters at the beginning
of the $t$-th slot, where $Q_k(t)\in \mathcal Q\triangleq \{0, 1,
\cdots, N_Q\}$ denotes the number of data packets at the data queue
for the $k$-th MS and $N_Q$ denotes the data buffer size. At slot
$t$, there is $\mathbf I[\rho_k(t)\geq \delta_k]\in \{0,1\}$ packet
successfully received at the $k$-th MS, where $\mathbf I[\cdot]$
denotes the indicator function. Hence, the data queue dynamics of
the $k$-th MS is given by
\begin{align}
Q_k(t + 1) = \min \Big \{\big[Q_k(t)-\mathbf I[\rho_k(t)\geq
\delta_k]\big]^+ + A^Q_k(t),\ N_Q \Big\},\ k \in \mathcal K
\label{eqn:data-queue-dyn}
\end{align}
where $\rho_k(t)\triangleq\left(\mathbf H(t), \mathbf p(t), \mathbf s(t)\right)$ and $x^+\triangleq\max\{x,0\}$.

Similarly, let $\mathbf{E}_n(t)=\{E_b(t):b \in \mathcal B_n\}\in
\boldsymbol{\mathcal E}_n\triangleq \mathcal E^{B_n}$ be the $n$-th
cluster ESI  and $\mathbf{E}(t)=\cup_{n=1}^N \mathbf{E}_n(t)
\in \boldsymbol{\mathcal E}\triangleq \mathcal E^B$ be the
aggregation of the \textcolor{black}{ESI} over $N$ clusters at the beginning
of the $t$-th slot, where $E_b(t)\in \mathcal E\triangleq \{0, 1,
\cdots, N_E\}$ denotes the number of renewable energy units in the
energy queue for the $b$-th BS and $N_E$ denotes the energy storage
size. At slot $t$, $p_b^E(t)\in \mathcal P$ unit of renewable energy
is consumed from the $b$-th energy queue for packet transmission.
Hence, the energy queue dynamics of the $b$-th BS is given by
\begin{align}
E_b(t + 1) = \min \Big \{\big[E_b(t)-p_b^E(t)\big]^+ + A^E_k(t),\
N_E \Big\},\ b \in \mathcal B \label{eqn:energy-q-dyn}
\end{align}

\subsection{BS-DTX Control and User Scheduling Policy}

For notation convenience, we denote
$\boldsymbol{\chi}(t)=\big(\mathbf{E}(t), \mathbf{Q}(t),
\mathbf{H}(t)\big)\in \boldsymbol{\mathcal X}=  \boldsymbol{\mathcal
E}\times\boldsymbol{\mathcal Q }\times \boldsymbol{\mathcal H}$ as
the {\em global system state} at the $t$-th slot. We first define
the centralized control policy. Specifically, at the beginning of each slot, the
controller determines the {\em renewable power DTX control action}
$\mathbf p^E=\{p_b^E:p_b^E\in\mathcal P,b\in \mathcal B
\}\in\boldsymbol{\mathcal P}$, {\em \textcolor{black}{grid power} DTX control action}
$\mathbf p^{G}=\{p_b^{G}:p_b^{G}\in\mathcal P,b\in \mathcal B
\}\in\boldsymbol{\mathcal P}$  as well as the {\em user scheduling
action}  $\mathbf s=\{s_k:s_k\in\mathcal S,k\in \mathcal K \}\in
\boldsymbol{\mathcal S}$  based on the global system state
$\boldsymbol{\chi}(t)$ according to the control \textcolor{black}{policy} defined
below.

\begin{Def}[BS-DTX Control and User Scheduling Policy] A BS-DTX control and
user scheduling policy consists of a sequence of mappings
$\pi=\{\Omega^1,\Omega^2,\cdots\}$.  The mapping for the $t$-th slot
$\Omega^t=(\Omega_p^{E,t},\Omega_p^{G,t},\Omega_s^t)$ 
is a mapping from the system state $\boldsymbol{\chi}(t)\in
\boldsymbol{\mathcal X}$ to the renewable power DTX control action
$\Omega_p^{E,t}(\mathbf E(t),\mathbf{Q}(t))= \mathbf p^E(t)\in
\boldsymbol{\mathcal P}$, the \textcolor{black}{grid power} DTX control action
$\Omega_p^{G,t}(\mathbf E(t),\mathbf{Q}(t))= \mathbf p^{G}(t)\in
\boldsymbol{\mathcal P}$ and the user scheduling action
$\Omega_s^t(\boldsymbol{\chi}(t))=\mathbf s(t)\in
\boldsymbol{\mathcal S}$. A policy $\pi$ is called {\em feasible} if
for all $t$, the following constraints are satisfied:
\begin{enumerate}
\item $p_b^E(t)=0$ if $E_b(t)=0$ for all $b\in \mathcal B$ (no renewable energy available for transmission).
\item $p_b(t)=p_b^E(t)+p_b^{G}(t)\in \mathcal P$ for all  $b \in \mathcal B$ (binary BS-DTX  control).
\item $\sum_{k \in \mathcal K_b}s_k(t) =p_b(t)$ for all $b\in \mathcal B$ (each active BS selects one MS in its cell).~\hfill\QED
\end{enumerate}
\label{Defn:feasible_policy}
\end{Def}

\textcolor{black}{
\begin{Rem} [Motivation of Two-Timescale Control Policy]
The two-timescale control is a constraint we impose due to the following practical reasons. 
The QSI and ESI are changing on a longer timescale (e.g., several slots) while the CSI is changing on a shorter timescale (e.g., per-slot). The BS-DTX control is usually implemented at the BSC for interference reduction and energy saving of the whole network. As a result, the BS-DTX control cannot afford to be running on a per-slot basis, due to the high complexity and signaling overhead in collecting the local CSI from all the BSs. Therefore, it is desirable to make it a function of the ESI and QSI only.
On the other hand, the low complexity distributed user scheduling is implemented locally at each CM (similar to HSDPA in current 3G networks) and they can afford to run on a per-slot basis and adapt to the ESI, QSI and CSI.~\hfill\QED
\end{Rem}
}

\section{Problem Formulation and Optimal Solution}\label{sec:CMDP_formulation}
In this section, we shall first elaborate \textcolor{black}{on} the dynamics of the system
state under a control policy $\pi$. Based on that, we shall 
formulate the delay-optimal control problem and derive some
structural properties for the optimal solution.

\subsection{Delay-Optimal  Problem Formulation}

Under Assumptions \ref{Asump:H}, 
\ref{Asump:general_A-Q} and \ref{Asump:general_A-E}, the induced random process
$\{\boldsymbol{\chi}(t)\}$ for a given feasible control policy
$\pi=\{\Omega^1,\Omega^2,\cdots \}$ is a Markov chain with the
following transition probability
\begin{align}
&\Pr[\boldsymbol{\chi}(t+1)|\boldsymbol{\chi}(t),\Omega^t(\boldsymbol{\chi}(t))]\nonumber\\
=&
\Pr[\mathbf{H}(t+1)|\boldsymbol{\chi}(t),\Omega^t(\boldsymbol{\chi}(t))]\Pr[\mathbf{E}(t+1)|\boldsymbol{\chi}(t),\Omega^t(\boldsymbol{\chi}(t))]
\Pr[\mathbf{Q}(t+1)|\boldsymbol{\chi}(t),\Omega^t(\boldsymbol{\chi}(t))]\nonumber\\
=&\Pr[\mathbf{H}(t+1)]
\Pr[\mathbf{E}(t+1)|\boldsymbol{\chi}(t),\Omega^t(\boldsymbol{\chi}(t))]
\Pr[\mathbf{Q}(t+1)|\boldsymbol{\chi}(t),\Omega^t(\boldsymbol{\chi}(t))]\label{eqn:transition-prob1}
\end{align}

As a result, given a feasible control policy $\pi$, the average
delay cost per stage of the $k$-th MS starting from a given initial
state $\boldsymbol \chi(1)$ is given by
\begin{equation}
\overline{D_{\pi,k}}\big(\boldsymbol \chi(1)\big) =
\limsup_{T\rightarrow\infty}
\frac{1}{T}\mathbb{E}^{\pi}\left[\sum_{t=1}^T
f\big(Q_k(t)\big)\right],\ \forall k \in \mathcal{K}
\label{eqn:delay1}
\end{equation}
where the expectation is taken w.r.t. the measure induced by the
policy $\pi$ and $f(Q_k)$ is a monotonic increasing utility function
of $Q_k$. For example,  with $f(Q_k)=\frac{Q_k}{\lambda_k}$ and
$f(Q_k)=\mathbf{1}[Q_k\geq Q_k^o]$ ($Q_k^o \in \{0,\cdots,N_Q\}$),
\eqref{eqn:delay1} can be used to measure the average delay and the
average queue outage probability of the $k$-th MS under policy
$\pi$. Similarly, given a feasible control policy $\pi$, the average
\textcolor{black}{grid power} cost per stage of the $b$-th BS starting from a given
initial state $\boldsymbol \chi(1)$  is given by
\begin{equation}
\overline{p_{\pi,b}^{G}}\big(\boldsymbol \chi(1)\big)
=\limsup_{T\rightarrow\infty}
\frac{1}{T}\mathbb{E}^{\pi}\left[\sum_{t=1}^T p_b^{G}(t)\right],\
\forall b \in \mathcal{B} \label{eqn:pwr-cost}
\end{equation}
We are interested in minimizing the average delay cost of each MS
$k\in \mathcal K$ in \eqref{eqn:delay1} and the average \textcolor{black}{grid power}
cost of each BS $b\in \mathcal B$ in \eqref{eqn:pwr-cost}. 
A Pareto optimal tradeoff on the average delay and average \textcolor{black}{grid power}
consumption can be obtained by solving the following problem. 
\begin{Prob}[\textcolor{black}{Two-Timescale Delay-Optimal Control}]
For some positive constants $\boldsymbol{\beta}=\{\beta_k>0: k\in
\mathcal{K}\}$ and $\boldsymbol{\gamma}=\{\gamma_b>0: b\in
\mathcal{B}\}$, the delay-optimal problem is formulated as
\begin{eqnarray}
\min_{\pi}J_{\pi}^{(\boldsymbol{\beta},\boldsymbol{\gamma})}\big(\boldsymbol
\chi(1)\big)=\sum_{k\in \mathcal
K}\beta_k\overline{D_{\pi,k}}\big(\boldsymbol
\chi(1)\big)+\sum_{b\in \mathcal
B}\gamma_b\overline{p_{\pi,b}^{G}}\big(\boldsymbol \chi(1)\big)=
\lim_{T\rightarrow\infty} \frac{1}{T}
\mathbb{E}^{\pi}\left[\sum_{t=1}^T
g\big(\boldsymbol{\chi}(t),\Omega^t(\boldsymbol{\chi}(t))\big)\right]
\label{cons-MDP}
\end{eqnarray}
where
$g\big(\boldsymbol{\chi}(t),\Omega^t(\boldsymbol{\chi}(t))\big)=\sum_{k\in
\mathcal K} \beta_k f\big(Q_k(t)\big)+\sum_{b\in \mathcal B}\gamma_b
p_b^{G}(t)$  \textcolor{black}{and the control policy $\pi$ satisfies the two-timescale requirement in Definition \ref{Defn:feasible_policy}.} ~\hfill\QED\label{Prob2}
\end{Prob}

\textcolor{black}{
\begin{Rem}[Two-Timescale Control and POMDP]
By two-timescale requirement, the BS-DTX  control policy is defined on the
partial system state $(\mathbf E,\mathbf Q)$, while the user
scheduling policy is defined on the complete system state
$\boldsymbol \chi =(\mathbf E, \mathbf Q, \mathbf H)$.  Due to the two-timescale control constraint as in Definition \ref{Defn:feasible_policy}, Problem \ref{Prob2} is a POMDP\footnote{POMDP is an extension of MDP when the control agent does not have direct observation of the entire system state.}.~\hfill\QED
\end{Rem}
}

\subsection{\textcolor{black}{Policy and State Space Reduction}}

\textcolor{black}{Problem \ref{Prob2} belongs to POMDP, which is well-known to be a challenging problem in general. Yet, we shall exploit some special structures in our problems to reduce the policy and state spaces. Based on that, we can simplify the POMDP. We first have the following lemma on the structural property of the BS-DTX control, which helps to reduce the policy space.}


\begin{Lem} [Structure of Optimal BS-DTX Control]
Let the BS-DTX control  for the $t$-th slot be denoted by
$\Omega_p^{t}:\boldsymbol{\mathcal E}\times\boldsymbol{\mathcal
Q}\to \boldsymbol{\mathcal P}$, which is a mapping from the partial system
state $(\mathbf E,\mathbf Q)\in \boldsymbol{\mathcal E}\times\boldsymbol{\mathcal
Q}$ to the BS-DTX
control action $\Omega_p^{t}(\mathbf E(t),\mathbf{Q}(t))= \mathbf
p(t)\in \boldsymbol{\mathcal P}$. Conditioned on any $\Omega_{p,b}^{t}$,
the optimal $\Omega_{p,b}^{E,t}$ and $\Omega_{p,b}^{G,t}$ satisfy
$\Omega_{p,b}^{E,t}(\mathbf
E(t),\mathbf{Q}(t))=\Omega_{p,b}^{t}(\mathbf
E(t),\mathbf{Q}(t))\mathbf I[E_b(t)>0]$ and 
$\Omega_{p,b}^{G,t}(\mathbf
E(t),\mathbf{Q}(t))=\Omega_{p,b}^{t}(\mathbf
E(t),\mathbf{Q}(t))\mathbf I[E_b(t)=0]$ for all $b\in \mathcal B$
and all $t$.~\hfill\QED\label{Lem:opt-re-ac-pwr-control}
\end{Lem}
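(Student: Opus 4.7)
The plan is to prove the lemma by a sample-path coupling argument that lifts a myopic renewable-first preference to the long-run average cost. Given any feasible policy $\pi=(\Omega_p^{E,t},\Omega_p^{G,t},\Omega_s^{t})$ with total BS-DTX rule $\Omega_{p,b}^{t}=\Omega_{p,b}^{E,t}+\Omega_{p,b}^{G,t}$, I construct the comparison policy $\pi^{*}$ that shares the same $\Omega_p^{t}$ and $\Omega_s^{t}$ but uses the proposed split $\tilde\Omega_{p,b}^{E,t}(\mathbf{E},\mathbf{Q})=\Omega_{p,b}^{t}(\mathbf{E},\mathbf{Q})\mathbf{I}[E_b>0]$ and $\tilde\Omega_{p,b}^{G,t}(\mathbf{E},\mathbf{Q})=\Omega_{p,b}^{t}(\mathbf{E},\mathbf{Q})\mathbf{I}[E_b=0]$. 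The goal is to show $J_{\pi^{*}}^{(\boldsymbol\beta,\boldsymbol\gamma)}\bigl(\boldsymbol\chi(1)\bigr)\le J_{\pi}^{(\boldsymbol\beta,\boldsymbol\gamma)}\bigl(\boldsymbol\chi(1)\bigr)$, which establishes the claimed form as optimal.

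The core observation is that the stage cost $g(\boldsymbol\chi,\Omega^{t}(\boldsymbol\chi))=\sum_{k}\beta_k f(Q_k)+\sum_{b}\gamma_b p_b^{G}$ charges only the grid component, so for fixed $p_b=p_b^{E}+p_b^{G}\in\mathcal P$ and the feasibility constraint $p_b^{E}\le p_b\mathbf{I}[E_b>0]$, the myopic minimizer is exactly $p_b^{E}=p_b\mathbf{I}[E_b>0]$. To propagate this to the aggregate cost, I couple $\pi$ and $\pi^{*}$ on a common realization of $\{A_k^{Q}(t),A_b^{E}(t),\mathbf{h}_{k,b}(t)\}$ and prove by induction on $t$ that $E_b^{\pi^{*}}(t)\le E_b^{\pi}(t)$ for every $b$ and $t$. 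The inductive step uses a case analysis over $(p_b^{\pi^{*}}(t),p_b^{\pi}(t))\in\{0,1\}^{2}$ together with the energy-conservation identity
\begin{equation*}
\sum_{s=1}^{t} p_b^{E}(s) + E_b(t+1) + \sum_{s=1}^{t}\mathrm{Over}_b(s) = E_b(1) + \sum_{s=1}^{t} A_b^{E}(s),
\end{equation*}
and the monotonicity that larger residual battery can only produce at least as much overflow. This yields $\sum_{s=1}^{t} p_b^{E,\pi^{*}}(s)\ge \sum_{s=1}^{t} p_b^{E,\pi}(s)$, and since the totals $\sum_{s}p_b(s)$ match, also $\sum_{s=1}^{t} p_b^{G,\pi^{*}}(s)\le \sum_{s=1}^{t} p_b^{G,\pi}(s)$; dividing by $t$ and taking $\limsup$ produces the desired inequality on the grid-cost term, while the queue-cost term is untouched because the $\mathbf Q$-evolution depends only on $(\mathbf p,\mathbf s)$, which agree whenever $(\mathbf E,\mathbf Q)$ trajectories coincide.

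The hard part will be handling that $\Omega_p^{t}$ depends on $(\mathbf E,\mathbf Q)$, so the two coupled trajectories may produce different $p_b(t)$ once the ESIs diverge, breaking the ``same total transmissions'' simplification. I would close this gap by a joint induction on both the ESI and QSI discrepancies between $\pi$ and $\pi^{*}$ showing that any deviation in $p_b$ triggered by $\pi^{*}$'s lower battery can only reduce transmissions (and hence grid usage), together with a one-step exchange argument at each divergent slot. An equivalent route is to invoke a Bellman-equation viewpoint and establish the auxiliary monotonicity $V(\mathbf E+\mathbf e_b,\mathbf Q)\ge V(\mathbf E,\mathbf Q)-\gamma_b$ on the value function by a sample-path dominance argument; this bounds the marginal value of an extra battery unit by $\gamma_b$, exactly matching the instantaneous grid saving obtained by using renewable first, so the myopic preference extends rigorously to the long-run optimum.
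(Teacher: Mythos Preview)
Your sample-path coupling instinct is correct and is exactly what the paper uses. The gap is in what you hold fixed. You fix the \emph{policy} $(\Omega_p^t,\Omega_s^t)$ and compare two splits; but since $\Omega_p^t$ is a function of $(\mathbf E,\mathbf Q)$ and the two $\mathbf E$-trajectories diverge immediately, the induced totals $\mathbf p(t)=\Omega_p^t(\mathbf E(t),\mathbf Q(t))$ diverge as well, and hence so do the $\mathbf Q$-trajectories (the SINR and the queue update depend on $\mathbf p$). Your second paragraph implicitly assumes the $\mathbf p$-sequences coincide (``the queue-cost term is untouched because the $\mathbf Q$-evolution depends only on $(\mathbf p,\mathbf s)$, which agree''), while your third paragraph admits they need not. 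The two patches you sketch---a joint $(\mathbf E,\mathbf Q)$ induction, or the Bellman monotonicity $V(\mathbf E+\mathbf e_b,\mathbf Q)\ge V(\mathbf E,\mathbf Q)-\gamma_b$---are plausible directions but are not carried out, so the argument as written is incomplete.

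The paper avoids this feedback problem by fixing \emph{action sequences} rather than policies. On a given sample path, take any feasible sequences $\{\mathbf p(t)\},\{\mathbf s(t)\}$ and any feasible split $\{\mathbf p^{E\prime}(t),\mathbf p^{G\prime}(t)\}$ of that same $\{\mathbf p(t)\}$. Because the QSI dynamics depend only on $(\mathbf p,\mathbf s,\mathbf A^Q,\mathbf H)$, the $\mathbf Q$-trajectory is identical under every split, so the delay term is literally unchanged and only the grid term can differ. The paper then runs a single-variable induction on the energy queue, proving the joint invariant $E_b(t)+\Delta p_b^G(t)\ge E_b'(t)$ and $\Delta p_b^G(t)\ge 0$, where $E_b$ is the battery under the renewable-first split, $E_b'$ under the arbitrary split, and $\Delta p_b^G(t)=\sum_{s<t}\bigl(p_b^{G\prime}(s)-p_b^G(s)\bigr)$. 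This yields $\sum_t p_b^G(t)\le\sum_t p_b^{G\prime}(t)$ on every sample path; averaging, taking $\limsup$, and finally taking the infimum over all policies gives the claim. The key trick is that comparing at the level of sequences removes the feedback from $\mathbf E$ into $\mathbf p$, so the ``hard part'' you identified never appears.
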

\begin{proof} Please refer to Appendix A.
\end{proof}

\begin{Rem}[Interpretation of Lemma \ref{Lem:opt-re-ac-pwr-control}]
Lemma \ref{Lem:opt-re-ac-pwr-control}
indicates that we are inclined to consume renewable power first.
This is because renewable power is free while \textcolor{black}{grid power} has cost. In
addition, due to the limited energy storage size, we may suffer
from renewable energy loss when the energy queue size is large.
Therefore, it is preferable to keep the size of the energy queue
small.~\hfill\QED
\end{Rem}

Based on Lemma \ref{Lem:opt-re-ac-pwr-control}, without loss of
optimality, we can first solve Problem \ref{Prob2} over a reduced
policy $\pi=\{\Omega^1,\Omega^2,\cdots\}$, where
$\Omega^t=(\Omega_p^{t},\Omega_s^{t})$ and then obtain the optimal
$\Omega_p^{E,t}$ and $\Omega_p^{G,t}$ from the optimal
$\Omega_p^{t}$ using Lemma \ref{Lem:opt-re-ac-pwr-control}.

\textcolor{black}{Next, we exploit the i.i.d. property of the CSI to reduce the state space.} We first define {\em partitioned
actions} below:

\begin{Def}[Partitioned Actions]
Given  $\Omega^t=(\Omega_p^t,\Omega_s^t)$, we define
$$\Omega^t(\mathbf E,\mathbf Q)=\{(\mathbf p,\mathbf s) =\big(
\Omega_p^t(\mathbf E,\mathbf Q),\Omega_s^t(\mathbf E,\mathbf Q,
\mathbf H)\big): \mathbf{H}\in \boldsymbol{\mathcal H}\}, \
\Omega_s^t(\mathbf E,\mathbf Q)=\{\mathbf s =\Omega_s^t(\mathbf
E,\mathbf Q, \mathbf H): \mathbf{H}\in \boldsymbol{\mathcal H}\}$$
as the collection of actions $(\mathbf p,\mathbf s)$ and $\mathbf s$
for all possible CSI $\mathbf{H}$ conditioned on a given ESI and QSI
pair $(\mathbf E,\mathbf Q)$.  $\Omega^t$ and $\Omega_s^t$ are
therefore equal to the union of all partitioned actions. i.e.
$\Omega= \bigcup_{(\mathbf E,\mathbf Q)}\Omega(\mathbf E,\mathbf Q)$
and $\Omega_s= \bigcup_{(\mathbf E,\mathbf Q)}\Omega_s(\mathbf
E,\mathbf Q)$. \label{defn:conditional-action1} ~ \hfill\QED
\end{Def}

Based on \textcolor{black}{Lemma \ref{Lem:opt-re-ac-pwr-control} and} Definition \ref{defn:conditional-action1}, the optimal
control policy in Problem \ref{Prob2} can be obtained by solving
an {\em equivalent Bellman equation} \textcolor{black}{over a reduced state space}, which is summarized in the
lemma below.

\begin{Lem}[Equivalent Bellman Equation for POMDP]
The optimal control policy for Problem \ref{Prob2} can be obtained
by solving the  following {\em equivalent Bellman equation} w.r.t.
$\big(\theta, \{V(\mathbf E,\mathbf Q)\}\big)$:
\begin{align}
\theta+ V(\mathbf E, \mathbf Q)=\min_{\Omega(\mathbf E,\mathbf
Q)}\Big\{ g \big((\mathbf E,\mathbf Q), \Omega(\mathbf E,\mathbf
Q)\big) + \sum_{(\mathbf E',\mathbf Q')} \Pr[(\mathbf E',\mathbf
Q')| &(\mathbf E,\mathbf Q), \Omega(\mathbf E,\mathbf Q)]V(\mathbf
E',\mathbf Q')\Big\}\nonumber\\
 &\forall (\mathbf E,\mathbf Q)\in \boldsymbol{\mathcal E}\times\boldsymbol{\mathcal
 Q}
 \label{eqn:Bellman3}
\end{align}
where $g \big((\mathbf E,\mathbf Q), \Omega(\mathbf E,\mathbf
Q)\big)=\sum_{k\in \mathcal K}\beta_k f(Q_k)+\sum_{b\in \mathcal
B}\gamma_b \Omega_p(\mathbf E,\mathbf Q)\mathbf
I[E_b=0]$ is the per-stage cost function,
$ \Pr[(\mathbf E',\mathbf Q')| (\mathbf E,\mathbf Q), \Omega(\mathbf
E,\mathbf Q)]=\mathbb E \big[\Pr[(\mathbf E',\mathbf Q')|\boldsymbol
\chi, \Omega(\boldsymbol \chi) ]\big|(\mathbf E,\mathbf Q) \big]$ is
the transition kernel.  $\theta$ is the optimal value for
all $\boldsymbol \chi$, i.e.,
$\theta=\min_{\pi}J_{\pi}^{(\boldsymbol{\beta},\boldsymbol{\gamma})}\big(\boldsymbol
\chi\big)$ $\forall \boldsymbol \chi \in \boldsymbol {\mathcal X}$
and $\{V(\mathbf E,\mathbf Q)\}$ is called the {\em potential
function}. Furthermore, if $\Omega^*(\mathbf E,\mathbf
Q)=\big(\Omega_p^*(\mathbf E,\mathbf Q), \Omega_s^*(\mathbf
E,\mathbf Q)\big)$ attains the minimum of the R.H.S. of
\eqref{eqn:Bellman3} for all $ (\mathbf E,\mathbf Q)\in
\boldsymbol{\mathcal E}\times \boldsymbol{\mathcal Q}$, the
stationary policy $\Omega^*=(\Omega_p^*, \Omega_s^*)$ is optimal
(i.e., $\pi^*=\{\Omega^*,\Omega^*,\cdots\}$).~\hfill\QED
\label{Lem:reduced-MDP}
\end{Lem}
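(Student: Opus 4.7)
The strategy is to reduce the POMDP in Problem \ref{Prob2} to a fully observed average-cost MDP whose state is only the slowly-varying pair $(\mathbf E,\mathbf Q)$, by absorbing the fast-timescale user-scheduling decision on $\mathbf H$ into the action itself through the partitioned action of Definition \ref{defn:conditional-action1}. The two enabling facts are Lemma \ref{Lem:opt-re-ac-pwr-control}, which removes the split between $\Omega_p^{E,t}$ and $\Omega_p^{G,t}$, and Assumption \ref{Asump:H}, which makes $\mathbf H(t)$ i.i.d.\ across slots so that, once the partitioned action is fixed, $\mathbf H$ is exogenous noise that can be integrated out.

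The first step is a policy-space bijection: any stationary feasible two-timescale policy $\Omega=(\Omega_p,\Omega_s)$ is identified with the partitioned action $\Omega(\mathbf E,\mathbf Q)$ that specifies $(\mathbf p,\mathbf s)$ for every realization of $\mathbf H$ conditioned on $(\mathbf E,\mathbf Q)$. Next, using the transition factorization in \eqref{eqn:transition-prob1} together with $\mathbf H(t+1)$ being independent of $(\boldsymbol\chi(t),\Omega^t(\boldsymbol\chi(t)))$, I would verify that $\{(\mathbf E(t),\mathbf Q(t))\}$ is itself a Markov chain whose transition kernel is the $\mathbf H$-average $\mathbb E[\Pr[(\mathbf E',\mathbf Q')\mid\boldsymbol\chi,\Omega(\boldsymbol\chi)]\mid(\mathbf E,\mathbf Q)]$ that appears in \eqref{eqn:Bellman3}. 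Because, after applying Lemma \ref{Lem:opt-re-ac-pwr-control}, the per-stage cost $\sum_k\beta_k f(Q_k)+\sum_b\gamma_b p_b^G$ depends on $\boldsymbol\chi$ only through $(\mathbf E,\mathbf Q)$, it coincides with the $g((\mathbf E,\mathbf Q),\Omega(\mathbf E,\mathbf Q))$ displayed in the lemma, so the objective in \eqref{cons-MDP} rewrites exactly as the average cost of the reduced MDP.

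Once Problem \ref{Prob2} is cast as a finite-state, finite-action average-cost MDP on $\boldsymbol{\mathcal E}\times\boldsymbol{\mathcal Q}$ with the partitioned action as the control, I would invoke the standard existence theorem for average-cost MDPs (e.g., Prop.\ 4.2.1 of Bertsekas, Vol.\ II) to obtain a scalar $\theta$ and a bounded function $V(\mathbf E,\mathbf Q)$ satisfying \eqref{eqn:Bellman3}, with $\theta$ equal to the optimal average cost independent of the initial $(\mathbf E,\mathbf Q)$ and any stationary minimizer optimal. Upgrading ``independent of $(\mathbf E,\mathbf Q)$'' to ``independent of $\boldsymbol\chi$'' is immediate: under the i.i.d.\ CSI assumption, $\mathbf H(1)$ influences only a single stage and cannot affect the long-run average cost.

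The main technical obstacle is verifying the unichain (or weak-accessibility) condition for the reduced chain, which is what underpins both the constancy of $\theta$ and the existence of a bounded solution to \eqref{eqn:Bellman3}. Here I would lean on $\Pr[A_k^Q=0]>0$ and $\Pr[A_b^E=0]>0$ from the footnote following Assumptions \ref{Asump:general_A-Q} and \ref{Asump:general_A-E}, the feasibility of a successful transmission guaranteed by Assumption \ref{Asump:pck-tx-model}, and the always-available silent action $\mathbf p=\mathbf 0$. Together these let me construct sample paths from any $(\mathbf E,\mathbf Q)$ to $(\mathbf 0,\mathbf 0)$ with positive probability, so the zero state is a common recurrent state, which yields a single recurrent class and hence the desired unichain structure.
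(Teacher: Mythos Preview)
Your proposal is correct and follows essentially the same route as the paper: reduce the POMDP to a finite average-cost MDP on $(\mathbf E,\mathbf Q)$ via the partitioned action of Definition~\ref{defn:conditional-action1}, then invoke the standard Bertsekas results (the paper cites Propositions~4.1.3, 4.1.4, 4.2.1, and 4.2.3) under weak accessibility. The paper simply asserts that the WA condition holds, whereas you supply the justification explicitly by exhibiting $(\mathbf 0,\mathbf 0)$ as a state reachable from every $(\mathbf E,\mathbf Q)$ using $\Pr[A_k^Q=0]>0$, $\Pr[A_b^E=0]>0$, and Assumption~\ref{Asump:pck-tx-model}; this is a welcome elaboration rather than a different approach. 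One small terminological point: your accessibility argument establishes weak accessibility (a common accessible state under \emph{some} policy), not the stronger unichain property (a single recurrent class under \emph{every} policy), but WA is exactly what the paper uses and is sufficient here.
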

\begin{proof}
Please refer to the Appendix B.
\end{proof}

\begin{Rem} [Interpretation of Equivalent Bellman Equation] The equivalent Bellman equation in \eqref{eqn:Bellman3} is defined on  \textcolor{black}{the reduced space of the} ESI and QSI $(\mathbf E,\mathbf Q)$ only. Nevertheless, by solving \eqref{eqn:Bellman3}, we can 
obtain a stationary BS-DTX  policy $\Omega_p^*$\textcolor{black}{, which} is a function
of (ESI, QSI),  and a stationary user scheduling policy $\Omega_s^*$\textcolor{black}{, which}  is a function of (ESI, QSI, CSI).~\hfill\QED
\end{Rem}

\subsection{Centralized Optimal BS-DTX Control and User Scheduling}

To facilitate the BS-DTX control, which is only adaptive to the ESI
and the QSI, we introduce the {\em BS-DTX control Q-factor} $\mathbb
Q (\mathbf E,\mathbf Q, \mathbf p)$ w.r.t. the BS-DTX control action
$\mathbf p$. Based on Lemma \ref{Lem:reduced-MDP}, we summarize the
optimal BS-DTX control in the following corollary.
\begin{Cor} [Optimal BS-DTX Control]
The optimal BS-DTX control   is given by
\begin{align}
\Omega_p^*(\mathbf E, \mathbf Q)=\arg \min_{\mathbf p \in \mathcal
P} \mathbb Q (\mathbf E,\mathbf Q, \mathbf p), \ \forall (\mathbf
E,\mathbf Q)\in \boldsymbol{\mathcal E}\times \boldsymbol{\mathcal
Q} \label{eqn:opt-pattern-selection-Q-factor}
\end{align}
where $\mathbb Q (\mathbf E,\mathbf Q, \mathbf p)$ is the BS-DTX
control Q-factor given by  the  following Bellman equation w.r.t.
$\big(\theta, \{\mathbb Q(\mathbf E,\mathbf Q,\mathbf p)\}\big)$:
\begin{align}
&\theta+\mathbb Q (\mathbf E,\mathbf Q, \mathbf p)\hspace{70mm}
\forall (\mathbf E,\mathbf Q)\in \boldsymbol{\mathcal
E}\times\boldsymbol{\mathcal Q}, \mathbf p \in
\boldsymbol{\mathcal P}\label{eqn:Bellman-Q-factor}\\
=&\min_{\Omega_s(\mathbf E,\mathbf Q)}\Big\{ g \big((\mathbf
E,\mathbf Q), \mathbf p, \Omega_s(\mathbf E,\mathbf Q)\big) +
\sum_{(\mathbf E',\mathbf Q')} \Pr[(\mathbf E',\mathbf Q')| (\mathbf
E,\mathbf Q), \mathbf p, \Omega_s(\mathbf E,\mathbf Q)]\min_{\mathbf
p' \in \boldsymbol{\mathcal P}} \mathbb Q (\mathbf E',\mathbf Q',
\mathbf p')\Big\}\nonumber
\end{align}~\hfill\QED\label{Cor:Bellman-Q-factor}
\end{Cor}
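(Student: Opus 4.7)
The plan is to derive the corollary directly from the equivalent Bellman equation of Lemma \ref{Lem:reduced-MDP} by decomposing the joint minimization over $\Omega(\mathbf E,\mathbf Q)=(\Omega_p(\mathbf E,\mathbf Q),\Omega_s(\mathbf E,\mathbf Q))$ into an outer minimization over the BS-DTX action $\mathbf p$ and an inner minimization over the partitioned user scheduling action $\Omega_s(\mathbf E,\mathbf Q)$. Since the per-stage cost $g$ and the transition kernel appearing in \eqref{eqn:Bellman3} depend on $\Omega(\mathbf E,\mathbf Q)$ only through the pair $(\mathbf p,\Omega_s(\mathbf E,\mathbf Q))$, and the action space is a Cartesian product, this nested minimization is carried out without loss of optimality.

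First, I define the BS-DTX Q-factor $\mathbb Q(\mathbf E,\mathbf Q,\mathbf p)$ as the value of the inner minimization over $\Omega_s(\mathbf E,\mathbf Q)$ of the term in braces on the R.H.S. of \eqref{eqn:Bellman3}, with the BS-DTX action held fixed at $\mathbf p$. Then the R.H.S. of \eqref{eqn:Bellman3} equals $\min_{\mathbf p\in\boldsymbol{\mathcal P}}\mathbb Q(\mathbf E,\mathbf Q,\mathbf p)$, so Lemma \ref{Lem:reduced-MDP} yields $\theta+V(\mathbf E,\mathbf Q)=\min_{\mathbf p}\mathbb Q(\mathbf E,\mathbf Q,\mathbf p)$. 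The outer minimizer of this expression is exactly $\Omega_p^*(\mathbf E,\mathbf Q)$, which proves \eqref{eqn:opt-pattern-selection-Q-factor}, while the inner minimizer recovers the optimal $\Omega_s^*$ associated with each admissible $\mathbf p$.

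Next, I substitute $V(\mathbf E',\mathbf Q')=\min_{\mathbf p'}\mathbb Q(\mathbf E',\mathbf Q',\mathbf p')-\theta$ back into the definition of $\mathbb Q(\mathbf E,\mathbf Q,\mathbf p)$. Because the transition kernel $\Pr[(\mathbf E',\mathbf Q')|(\mathbf E,\mathbf Q),\mathbf p,\Omega_s(\mathbf E,\mathbf Q)]$ is a probability measure over $(\mathbf E',\mathbf Q')$, the additive constant $-\theta$ factors out of the sum, and rearranging yields exactly the Q-factor Bellman equation \eqref{eqn:Bellman-Q-factor}; the optimality of the corresponding stationary policy is then inherited from the optimality statement in Lemma \ref{Lem:reduced-MDP}. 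The argument is essentially bookkeeping that exploits the factorization of the action space; the only mild subtlety is ensuring that the inner minimization is well defined pointwise in $\mathbf p$, which holds because for each fixed $\mathbf p$ the inner problem is a finite minimization over $\Omega_s(\mathbf E,\mathbf Q)$. I therefore anticipate no real obstacle beyond careful notation.
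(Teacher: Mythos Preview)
Your proposal is correct and follows essentially the same route as the paper: define the Q-factor as the inner minimization over $\Omega_s(\mathbf E,\mathbf Q)$ with $\mathbf p$ fixed, read off $V(\mathbf E,\mathbf Q)=\min_{\mathbf p}\mathbb Q(\mathbf E,\mathbf Q,\mathbf p)$ from \eqref{eqn:Bellman3} (the paper absorbs $\theta$ into the definition, you keep it outside, but this is only an additive constant), and then substitute this relation back to obtain \eqref{eqn:Bellman-Q-factor}, with optimality of the argmin inherited from Lemma~\ref{Lem:reduced-MDP}.
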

\begin{proof}
Please refer to Appendix B.
\end{proof}

As the distributions of the energy 
and \textcolor{black}{data} arrival processes are unknown to the controllers,  we
introduce the post-decision state potential function $U(\widetilde{\mathbf E},\widetilde{\mathbf Q})$  to determine the
user selection \cite{Thesis:Salodkar}. The post-decision
state $(\widetilde{\mathbf E},\widetilde{\mathbf Q})$ is defined to
be the virtual partial system state  immediately after making an action
before the new renewable energy and data arrive\footnote{For
example, $\boldsymbol \chi=(\mathbf E,\mathbf{Q},\mathbf{H})$ is the
state at the beginning of some slot (also called the pre-decision
state) and making an action $(\mathbf p, \mathbf
s)=\Omega(\boldsymbol \chi)$ leads to $\boldsymbol \rho=\{\rho_k:
k\in \mathcal K\}$ with $\rho_k$ given by \eqref{eqn:SINR_per_user}.
Then, the post-decision state immediately after the action is
$\widetilde{\boldsymbol \chi}=(\widetilde {\mathbf E}, \widetilde
{\mathbf Q},\mathbf H)$, where $\widetilde {\mathbf E}=[\mathbf
E-\mathbf p]^+$ and $\widetilde {\mathbf Q}=\big[\mathbf Q-\mathbf
I[\boldsymbol \rho \succeq\boldsymbol \delta]\big]^+$, where
$\boldsymbol\delta=\{\delta_k:k\in \mathcal K\}$. If new arrivals
$\mathbf A^E$ and $\mathbf A^Q$ occur in the post-decision state,
and the CSI changes to $\mathbf H'$, then the system reaches the
next actual state, i.e., pre-decision state $\boldsymbol
\chi'=(\min\{\widetilde {\mathbf E}+\mathbf A^E, N_E\},
\min\{\widetilde {\mathbf Q}+\mathbf A^Q, N_Q\},\mathbf H')$.}.
Based on Lemma \ref{Lem:reduced-MDP}, we summarize the optimal user
scheduling in the following corollary.
\begin{Cor} [Optimal User Scheduling]
The optimal user scheduling  is given by
\begin{align}
\Omega_s^*(\boldsymbol \chi)=\arg \min_{\mathbf s\in
\boldsymbol{\mathcal S}(\mathbf p^*)}\Big\{\sum_{\mathbf d\in
\boldsymbol{\mathcal D}}\Big(\prod_{k\in \mathcal
K}\left(1-d_k-(-1)^{d_k}\Pr[\rho_k (\mathbf H, \mathbf p^*, \mathbf
s)\geq \delta_k]\right)&U\big([\mathbf E-\mathbf
p^*]^+,[\mathbf Q-\mathbf d]^+\big)\Big)\Big\}\nonumber\\
& \forall \boldsymbol{\chi}\in \boldsymbol{\mathcal X}
\label{eqn:opt-user-selection-postd-potential}
\end{align}
where  $\mathbf p^*=\Omega_p^*(\mathbf E,\mathbf Q)$ is the optimal
BS-DTX control action given by
\eqref{eqn:opt-pattern-selection-Q-factor}, $\boldsymbol{\mathcal
S}(\mathbf p)\triangleq \{\mathbf s\in \boldsymbol{\mathcal S}:
\sum_{k \in \mathcal K_b}s_k = p_b,
b\in\mathcal B\}$ denotes the feasible user scheduling action space under
the BS-DTX control action $\mathbf p$, $d_k\in \mathcal
D\triangleq\{0,1\}$, and $\mathbf d=\{d_k\in \mathcal D_k:k\in
\mathcal K\}\in \boldsymbol{\mathcal D}\triangleq \mathcal D^K$.
$U(\widetilde{\mathbf E},\widetilde{\mathbf Q})$ is  the
post-decision potential function given by the  following Bellman equation w.r.t.
$\big(\theta, \{U(\widetilde{\mathbf E},\widetilde{\mathbf Q})\}\big)$\cite{Thesis:Salodkar}:
\begin{align}
&\theta+ U(\widetilde{\mathbf E},\widetilde{\mathbf
Q})\hspace{80mm} \forall (\widetilde{\mathbf E},\widetilde{\mathbf
Q})
\in \boldsymbol{\mathcal E}\times\boldsymbol{\mathcal Q} \label{eqn:Bellman-postd}\\
=&\sum_{\mathbf A^E,\mathbf A^Q}\Pr[\mathbf A^E]\Pr[\mathbf
A^Q]\min_{\Omega(\mathbf E,\mathbf Q)}\Big\{ g \big((\mathbf
E,\mathbf Q), \Omega(\mathbf E,\mathbf Q)\big) +
\sum_{(\widetilde{\mathbf E}',\widetilde{\mathbf Q}')}
\Pr[(\widetilde{\mathbf E}',\widetilde{\mathbf Q}')| (\mathbf
E,\mathbf Q), \Omega(\mathbf E,\mathbf Q)]U(\widetilde{\mathbf E}',\widetilde{\mathbf Q}')\Big\}\nonumber
\end{align}
where $\mathbf E=\min\{\widetilde {\mathbf E}+\mathbf A^E, N_E\}$
and $\mathbf Q=\min\{\widetilde {\mathbf Q}+\mathbf A^Q,
N_Q\}$.~\hfill\QED\label{Cor:Bellman-postd}
\end{Cor}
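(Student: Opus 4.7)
The plan is to derive both the optimal user-scheduling rule \eqref{eqn:opt-user-selection-postd-potential} and the post-decision Bellman equation \eqref{eqn:Bellman-postd} as direct consequences of Lemma~\ref{Lem:reduced-MDP}. First, I would introduce the post-decision potential by the explicit defining relation
\begin{align}
U(\widetilde{\mathbf E},\widetilde{\mathbf Q})=\sum_{\mathbf A^E,\mathbf A^Q}\Pr[\mathbf A^E]\Pr[\mathbf A^Q]\,V\bigl(\min\{\widetilde{\mathbf E}+\mathbf A^E,N_E\},\min\{\widetilde{\mathbf Q}+\mathbf A^Q,N_Q\}\bigr),\nonumber
\end{align}
i.e.\ $U$ equals the expectation of the pre-decision potential $V$ taken over the independent bursty arrivals. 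This is possible because, under Assumptions~\ref{Asump:general_A-Q} and \ref{Asump:general_A-E}, the transition from the post-decision state to the next pre-decision state depends only on the (independent) energy and data arrivals.

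Next, I would decompose the one-step transition of the reduced chain in \eqref{eqn:Bellman3} by inserting the post-decision state as an intermediate variable. Given $(\mathbf E,\mathbf Q,\mathbf H)$ and the partitioned action $(\mathbf p,\mathbf s)$, the post-decision ESI $\widetilde{\mathbf E}=[\mathbf E-\mathbf p]^+$ is deterministic, while the post-decision QSI is $\widetilde{\mathbf Q}=[\mathbf Q-\mathbf d]^+$ with $d_k=\mathbf I[\rho_k(\mathbf H,\mathbf p,\mathbf s)\geq\delta_k]$; treating the $d_k$ as independent Bernoulli outcomes with success probability $\Pr[\rho_k(\mathbf H,\mathbf p,\mathbf s)\geq\delta_k]$, their joint distribution can be written as the product $\prod_{k}\bigl(1-d_k-(-1)^{d_k}\Pr[\rho_k(\mathbf H,\mathbf p,\mathbf s)\geq\delta_k]\bigr)$. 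Summing $V(\mathbf E',\mathbf Q')$ first over the fresh arrivals (absorbed into $U$) and then over $\mathbf d$ therefore converts the future-cost term in \eqref{eqn:Bellman3} into
\begin{align}
\sum_{\mathbf d\in\boldsymbol{\mathcal D}}\prod_{k\in\mathcal K}\bigl(1-d_k-(-1)^{d_k}\Pr[\rho_k(\mathbf H,\mathbf p,\mathbf s)\geq\delta_k]\bigr)\,U\bigl([\mathbf E-\mathbf p]^+,[\mathbf Q-\mathbf d]^+\bigr).\nonumber
\end{align}

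With this rewriting in hand, \eqref{eqn:opt-user-selection-postd-potential} follows immediately. The per-stage cost $g$ depends only on $(\mathbf Q,\mathbf p)$ and not on $\mathbf s$; hence for the optimal BS-DTX action $\mathbf p^{*}=\Omega_p^{*}(\mathbf E,\mathbf Q)$ (given by Corollary~\ref{Cor:Bellman-Q-factor}) and any observed $\mathbf H$, the inner minimization in \eqref{eqn:Bellman3} over the user-scheduling component $\mathbf s\in\boldsymbol{\mathcal S}(\mathbf p^{*})$ reduces to minimizing the rewritten future-cost term, which is exactly \eqref{eqn:opt-user-selection-postd-potential}. To obtain \eqref{eqn:Bellman-postd}, I would then substitute the definition of $U$ into the pre-decision Bellman equation \eqref{eqn:Bellman3}: taking the expectation of both sides with respect to $(\mathbf A^E,\mathbf A^Q)$ and shifting the state indices by one slot yields an identity on $U(\widetilde{\mathbf E},\widetilde{\mathbf Q})$ whose right-hand side is the arrival-averaged version of \eqref{eqn:Bellman3}, matching \eqref{eqn:Bellman-postd}.

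The main obstacle I anticipate is not the algebra but the bookkeeping required to make the post-decision decomposition rigorous under the two-timescale constraint of Definition~\ref{Defn:feasible_policy}: one must justify that (i) the partitioned action notion of Definition~\ref{defn:conditional-action1} allows the $\mathbf s$-minimization to be pulled inside the $\mathbf H$-average so that the BS-DTX action $\mathbf p^{*}$ is indeed $(\mathbf E,\mathbf Q)$-measurable while $\mathbf s^{*}$ is $(\mathbf E,\mathbf Q,\mathbf H)$-measurable, and (ii) the product factorization of $\Pr[\mathbf d\mid\mathbf H,\mathbf p,\mathbf s]$ used above coincides with the true kernel, which for the deterministic reception model of Assumption~\ref{Asump:pck-tx-model} amounts to observing that each indicator probability is $0$ or $1$ and that only one summand survives. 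Once these measurability and factorization points are established, the substitutions required to reach \eqref{eqn:opt-user-selection-postd-potential} and \eqref{eqn:Bellman-postd} are routine.
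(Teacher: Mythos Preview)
Your approach is essentially the same as the paper's: define $U$ as the arrival-average of $V$, insert the post-decision state into the transition in \eqref{eqn:Bellman3}, drop the $\mathbf s$-independent cost $g$, use the partitioned-action structure (Definition~\ref{defn:conditional-action1}) to pull the $\mathbf s$-minimization inside the $\mathbf H$-expectation, and then expand the post-decision transition kernel as a product over users. The paper carries out exactly these steps (a)--(c) in Appendix~B and cites \cite{Thesis:Salodkar} for the post-decision Bellman equation.

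There is one flaw in your point~(ii). Given the global CSI $\mathbf H=\cup_n\mathbf H_n$ as defined in the paper, the inter-cluster channels $\mathbf h_{k,b'}$ appearing in the interference $I_k$ of \eqref{eqn:inf-k} are \emph{not} contained in $\mathbf H$, so $\Pr[\rho_k(\mathbf H,\mathbf p^*,\mathbf s)\geq\delta_k]$ is a genuine probability in $[0,1]$, not a $\{0,1\}$-valued indicator; the ``only one summand survives'' argument does not apply. The product factorization $\Pr[\mathbf d\mid\mathbf H,\mathbf p,\mathbf s]=\prod_k\Pr[d_k\mid\mathbf H,\mathbf p,\mathbf s]$ instead follows from the conditional independence of the events $\{\rho_k\geq\delta_k\}$ across $k$, which holds because the residual randomness in $I_k$ (the out-of-cluster channels $\{\mathbf h_{k,b'}\}_{b'}$) involves disjoint, independent channel vectors for different users $k$ by Assumption~\ref{Asump:H}. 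Once you replace your determinism argument with this independence argument, the rest of your proposal goes through.
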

\begin{proof}
Please refer to Appendix B.
\end{proof}

\begin{Rem} [Complexity of Centralized Delay-Optimal Solution]
The complexity of obtaining the original Q-factor and the associated
BS-DTX control is $\mathcal O \big((N_E+1)^B(N_Q+1)^K 2^B\big)$. The
complexity of obtaining the original post-decision state potential
function and the associated user scheduling is $\mathcal O
\big((N_E+1)^B(N_Q+1)^K\big)$.~\hfill\QED
\end{Rem}

\section{Low Complexity Delay-aware Distributed Solution}\label{sec:general_decentral_algo}

Obtaining  the optimal control in \eqref{eqn:opt-pattern-selection-Q-factor} and \eqref{eqn:opt-user-selection-postd-potential}  has exponential complexity and
requires centralized implementation at the BSC and knowledge of
the aggregation of the \textcolor{black}{ESI, QSI and CSI}, which leads to huge signaling \textcolor{black}{overhead}. In this section, we
shall first introduce a randomized base policy. Based on that, we shall propose a low complexity distributed deterministic policy using approximate dynamic programming\cite{Bertsekas:2007}. We shall show that the proposed solution has better performance than the randomized base policy.

\subsection{Randomized Base Policy}\label{subsec:indep-rand-policy}
We first introduce a randomized base policy  and discuss an important structural property of the equivalent Bellman equations in \eqref{eqn:Bellman-Q-factor} and \eqref{eqn:Bellman-postd}) under this base policy.

\begin{Def}[Randomized Base Policy] A randomized base policy is denoted as $\hat \Omega=(\hat \Omega_p,\hat \Omega_s)$.
The randomized base policy for BS-DTX control $\hat \Omega_p$ is given by a distribution on the action space of $\mathbf p$, i.e., $\boldsymbol{\mathcal P}$. The randomized base policy for user scheduling $\hat \Omega_s$ is given by a mapping from the CSI $\mathbf H$ to a probability distribution $\hat \Omega_s(\mathbf H)$ on the action space of $\mathbf s$, i.e.\textcolor{black}{,} $\boldsymbol{\mathcal S}$.~\hfill\QED\label{Def:QSI-ESI-indep-policy}
\end{Def}

Under a randomized base policy, the corresponding Q-factor and post-decision potential function have the following decomposition  structure.
\begin{Lem} [Decomposition under Randomized Base Policy] 

Given any randomized base policy $\hat\Omega$, the Q-factor $\hat{\mathbb Q} (\mathbf E,\mathbf Q, \mathbf p)$ and the potential function $\hat U(\widetilde{\mathbf E},\widetilde{\mathbf
Q})$ associated with the equivalent Bellman equations in  \eqref{eqn:Bellman-Q-factor} and  \eqref{eqn:Bellman-postd} can be expressed as:
$\hat{\mathbb Q}(\mathbf E, \mathbf Q,\mathbf p)=\sum_{b\in \mathcal B}\sum_{k\in \mathcal K_b}\hat{\mathbb Q}_k(E_b,Q_k,\mathbf p)$ and $\hat U(\widetilde{\mathbf E}, \widetilde{\mathbf Q})=\sum_{b\in \mathcal B}\sum_{k\in \mathcal K_b}\hat U_k(\widetilde E_b,\widetilde Q_k)$, where
\begin{align}
&\hat \theta_k+\hat{\mathbb Q}_k (E_b,Q_k, \mathbf p) \hspace{50mm}\forall E_b\in \mathcal E, Q_k \in \mathcal
Q , \mathbf p \in \boldsymbol{\mathcal P}\label{eqn:bellman-per-user-Q-factor}\\
=&\hat g_k(E_b, Q_k, p_b)+ \sum_{(E_b',Q_k') }\hat\Pr[ (
E_b',Q_k')|(E_b,Q_k),\mathbf p] \mathbb E^{\hat \Omega_p}[\mathbb Q_k ( E_b',Q_k',  \mathbf p')] \nonumber
\\
&\hat \theta_k+\hat U_k(\widetilde{
E}_b,\widetilde Q_k)\hspace{50mm} \forall   \widetilde{  E}_b\in \mathcal E,\widetilde Q_k
\in \mathcal Q \label{eqn:bellman-per-user-potential} \\
=&\sum_{(A^E_b, A^Q_k)}\Pr[
A^E_b]\Pr[A^Q_k]\left( \mathbb E^{\hat \Omega_p}[\hat g_k(E_b,Q_k, p_b)]+
\sum_{(\widetilde E_b',\widetilde Q_k') }\mathbb E^{\hat \Omega_p}\left[\Pr[(\widetilde E_b',\widetilde
Q_k')| (E_b, Q_k),\mathbf p]\right] \widetilde
V_k(\widetilde{\mathbf E}_n',\widetilde Q_k') \right)\nonumber
\end{align}
with $\hat g_k(E_b,Q_k, p_b)=\beta_k f(Q_k)+\gamma_b p_b\mathbf I[E_b=0]\mathbb E\left[\hat\Pr[s_k=1|\mathbf H]\right]$ and $ \hat \Pr\left[(E'_b,Q'_k)| (
E_b,Q_k), \mathbf p\right]=\mathbb E\left[\mathbb E^{\hat \Omega_s}\left[\Pr[(E'_b, Q'_k)|(E_b, Q_k,\mathbf H), p_b, s_k]\big|\mathbf H\right]\right]$.~\hfill\QED\label{Lem:decomposition}
\end{Lem}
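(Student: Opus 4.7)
The plan is to verify that the claimed sum-decomposed ansatz satisfies the fixed-policy (Poisson) equations associated with the equivalent Bellman equations \eqref{eqn:Bellman-Q-factor} and \eqref{eqn:Bellman-postd} once the outer minimization is replaced by averaging under the randomized base policy $\hat\Omega=(\hat\Omega_p,\hat\Omega_s)$. Because these Poisson equations are linear in the Q-factor and potential function, it suffices to establish additivity of the per-stage cost and of the integrated one-step transition kernel across $(b,k)$ pairs.

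First, I would decompose the per-stage cost. Using the feasibility constraint $p_b=\sum_{k\in \mathcal K_b}s_k$ from Definition \ref{Defn:feasible_policy},
\begin{align*}
g(\mathbf E,\mathbf Q,\mathbf p,\mathbf s)=\sum_{b\in\mathcal B}\sum_{k\in\mathcal K_b}\big(\beta_k f(Q_k)+\gamma_b s_k\mathbf I[E_b=0]\big).
\end{align*}
Averaging over $\mathbf H$ and $\hat\Omega_s(\mathbf H)$ at a fixed $\mathbf p$ yields $\sum_b\sum_{k\in\mathcal K_b}\hat g_k(E_b,Q_k,p_b)$, matching the per-user cost defined in the lemma.

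Next, I would factorize the one-step transition kernel. By Lemma \ref{Lem:opt-re-ac-pwr-control}, the ESI update $E_b\to E_b'$ is driven solely by $(E_b,p_b,A_b^E)$; the QSI update $Q_k\to Q_k'$ is driven by $(Q_k,s_k,A_k^Q,\mathbf I[\rho_k\geq\delta_k])$. Because $\{A_b^E\}_b$, $\{A_k^Q\}_k$, and the i.i.d.\ fading $\mathbf H$ across slots are mutually independent (Assumptions \ref{Asump:H}, \ref{Asump:general_A-Q} and \ref{Asump:general_A-E}), once we marginalize the randomized scheduling over the users outside the pair $(b,k)$ and average over $\mathbf H$, the joint kernel collapses to the pairwise kernel $\hat\Pr[(E_b',Q_k')|(E_b,Q_k),\mathbf p]$ stated in the lemma. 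Note that this per-pair kernel still depends on the full vector $\mathbf p$, because the inter-cluster interference appearing in $\rho_k$ (see \eqref{eqn:SINR_per_user}--\eqref{eqn:inf-k}) is a function of all BS-DTX actions; this dependence is exactly what the statement accommodates by keeping $\mathbf p$ as an argument of $\hat{\mathbb Q}_k$.

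Finally, I would substitute the ansatz $\hat{\mathbb Q}(\mathbf E,\mathbf Q,\mathbf p)=\sum_b\sum_{k\in\mathcal K_b}\hat{\mathbb Q}_k(E_b,Q_k,\mathbf p)$ into the Poisson form of \eqref{eqn:Bellman-Q-factor} under $\hat\Omega$. The additive structure of both the cost and the integrated kernel produced in the previous two steps splits the equation into a sum over $(b,k)$ of independent copies of \eqref{eqn:bellman-per-user-Q-factor}, with $\hat\theta=\sum_b\sum_{k\in\mathcal K_b}\hat\theta_k$. The argument for the post-decision potential $\hat U$ and \eqref{eqn:bellman-per-user-potential} is analogous after applying the post-decision reformulation in Corollary \ref{Cor:Bellman-postd} and using the independence of $A_b^E$ and $A_k^Q$ across $b,k$ to split the outer arrival expectation. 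The main subtlety to handle carefully is that the components $\{s_k\}_{k\in\mathcal K_b}$ are coupled by the feasibility constraint, so the decomposition is not claiming the per-pair chains are independent in isolation; rather, the contribution of the off-pair components to the per-pair one-step update is captured entirely by the $\mathbf H$-averaged scheduling probability $\mathbb E[\hat\Pr[s_k=1|\mathbf H]]$ appearing in $\hat g_k$ and in the marginal kernel.
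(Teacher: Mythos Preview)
Your proposal is correct and follows essentially the same route as the paper: the paper (Appendix C) also writes the fixed-policy Poisson equation for $(\hat\theta,\{\hat V\})$, shows the per-stage cost splits as $\sum_b\sum_{k\in\mathcal K_b}\hat g_k$, uses the joint-to-marginal relationship to reduce the transition kernel acting on $\hat V_k(E_b',Q_k')$ to the per-pair kernel $\hat\Pr[(E_b',Q_k')\mid(E_b,Q_k),\mathbf p]$, and then verifies that the additive ansatz with $\hat\theta=\sum_k\hat\theta_k$ satisfies the equation. Your version is in fact more careful than the paper's on two points---invoking the feasibility constraint $p_b=\sum_{k\in\mathcal K_b}s_k$ to justify the cost decomposition, and explicitly appealing to the independence in Assumptions \ref{Asump:H}--\ref{Asump:general_A-E} to argue that the marginal kernel depends only on $(E_b,Q_k,\mathbf p)$---whereas the paper simply asserts the marginalization step.
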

\begin{proof} Please refer to Appendix C.
\end{proof}

\subsection{Low Complexity Delay-aware Distributed Solution}\label{subsec:low-complexity-sol}
Based  on the randomized base policy $\hat \Omega$, we shall obtain a low complexity distributed deterministic policy $\hat \Omega^*$ by Q-factor and potential function approximation. The solution is elaborated below.

\subsubsection{BS-DTX Control Policy Over a Longer Timescale} To reduce the complexity and to facilitate distributed
implementation, we approximate the BS-DTX control Q-factor $\mathbb
Q (\mathbf E,\mathbf Q, \mathbf p)$ in \eqref{eqn:Bellman-Q-factor}
by $\hat{\mathbb
Q} (\mathbf E,\mathbf Q, \mathbf p)$, i.e.,
\begin{align}
\mathbb Q (\mathbf E,\mathbf Q, \mathbf p) \thickapprox \hat{\mathbb
Q} (\mathbf E,\mathbf Q, \mathbf p)=\sum_{b\in \mathcal B}\sum_{k\in \mathcal K_b}\hat{\mathbb Q}_k(E_b,Q_k,\mathbf p)
\label{eqn:approximate-Q}
\end{align}
where $\hat{\mathbb Q}_k(E_b,Q_k)$ is given by the per-flow fixed point equation in \eqref{eqn:bellman-per-user-Q-factor}. The BSC determines the BS-DTX control based on the \textcolor{black}{aggregation of the ESI and QSI}
according to
\begin{align}
\hat{\mathbf p}^*(\mathbf E,\mathbf Q) =\arg \min_{\mathbf p \in \mathcal
P} \sum_{b\in \mathcal B}\sum_{k\in \mathcal K_b}\hat{\mathbb Q}_k(E_b,Q_k,\mathbf p)
\label{eqn:opt-pattern-selection-Q-factor-approx}
\end{align}

\begin{Rem} [Complexity of the BS-DTX Control] Under the linear Q-factor approximation in
\eqref{eqn:approximate-Q}, the complexity of obtaining the
BS-DTX control is reduced from  $\mathcal O \big((N_E+1)^B(N_Q+1)^K
2^B\big)$ to  $\mathcal O
\big((N_E+1)(N_Q+1)2^BK\big)$. To further reduce the complexity w.r.t. $B$, we can
partition the BSs into macro-groups with size $N_B$. The BS-DTX
control in \eqref{eqn:opt-pattern-selection-Q-factor-approx} can be
done for each of the $\frac{B}{N_B}$ macro-groups separately
\cite{BPCmultiCELL}. In practice, $N_B\ll B$ and hence, the
complexity becomes $\mathcal O ((N_E+1)(N_Q+1)2^{N_B}\frac{B}{N_B}K)$, which is
linear w.r.t. $B$.~\hfill\QED
\end{Rem}

\subsubsection{Distributed User Scheduling Policy at the CM Over a Shorter Timescale}

To reduce the complexity and to facilitate distributed
implementation of the user scheduling, we approximate the post-decision
state potential function $U(\widetilde{\mathbf
E},\widetilde{\mathbf Q})$ in \eqref{eqn:Bellman-postd} by $\hat U(\widetilde{\mathbf
E},\widetilde{\mathbf Q})$, i.e., 
\begin{align}
U(\widetilde{\mathbf E},\widetilde{\mathbf Q})
\thickapprox\hat U(\widetilde{\mathbf
E},\widetilde{\mathbf Q})= \sum_{b\in \mathcal B}\sum_{k\in \mathcal K_b}\hat U_k(E_b,Q_k)
\label{eqn:approximate-V}
\end{align}
where $\hat U_k(\widetilde E_b,\widetilde
Q_k)$  is given by the per-flow fixed point equation in \eqref{eqn:bellman-per-user-potential}.  
Substituting the approximation in \eqref{eqn:approximate-V} into the optimal user scheduling in \eqref{eqn:opt-user-selection-postd-potential}, the user scheduling solution under the approximation is summarized below.

\begin{Lem}[Distributed User Scheduling] Under the linear potential function approximation in
\eqref{eqn:approximate-V}, the distributed user scheduling action
$\hat{\mathbf s}_n^*$ of the $n$-th cluster based on the \textcolor{black}{per-cluster ESI, QSI and
CSI} under  $\hat{\mathbf p}^*(\mathbf E, \mathbf Q)$ obtained by
\eqref{eqn:opt-pattern-selection-Q-factor-approx} is given by
\begin{align}
&\hat{\mathbf s}_n^*(\mathbf E_n, \mathbf Q_n, \mathbf H_n),\hspace{50mm}
\forall \mathbf E_n \in \boldsymbol{\mathcal E}_n, \mathbf Q_n \in
\boldsymbol{\mathcal Q}_n, \mathbf H_n \in \boldsymbol{\mathcal
H}_n, \forall
n \label{eqn:decentralized-user-selection}\\
=&\arg \max_{\mathbf s_n \in \boldsymbol{\mathcal S}_n(\hat{\mathbf
p}_n^*)} \sum_{k \in \mathcal K_n} s_k \Pr[ \rho_k
(\mathbf H_n, \hat{\mathbf p}^*, \mathbf s_n)\geq \delta_k]\left(\hat
U_k([E_b-\hat{p}_b^*]^+, Q_k)-\hat U_k([
E_b-\hat{p}_b^*]^+, [Q_k-1]^+)\right)\nonumber
\end{align}
where $\boldsymbol{\mathcal S}_n(\mathbf p_n)\triangleq \{\mathbf
s_n\in \mathcal S^{\sum_{b\in \mathcal B_n}K_b}: \sum_{k \in \mathcal K_b}s_k=p_b, b\in \mathcal B_n\}$
denotes the feasible user scheduling action space of cluster $n$
under the BS-DTX control action $\mathbf p_n$.
\label{Lem:decentralized-user-selection}~\hfill\QED
\end{Lem}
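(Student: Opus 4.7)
The plan is to substitute the additive approximation \eqref{eqn:approximate-V} into the centralized user scheduling rule \eqref{eqn:opt-user-selection-postd-potential}, then exploit the product/independence structure and the per-cluster factorization of the feasible set to obtain \eqref{eqn:decentralized-user-selection}.

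First, I would observe that the product $\prod_{k\in\mathcal K}(1-d_k-(-1)^{d_k}\Pr[\rho_k\ge\delta_k])$ is precisely the joint pmf of independent Bernoulli random variables $\{d_k\}$ with success probabilities $\Pr[\rho_k(\mathbf H,\mathbf p^*,\mathbf s)\ge\delta_k]$. Therefore the weighted sum over $\mathbf d$ on the R.H.S. of \eqref{eqn:opt-user-selection-postd-potential} equals $\mathbb E_{\mathbf d}\big[U([\mathbf E-\mathbf p^*]^+,[\mathbf Q-\mathbf d]^+)\big]$. Plugging in $U\approx\hat U=\sum_{b}\sum_{k\in\mathcal K_b}\hat U_k(\widetilde E_b,\widetilde Q_k)$ and using linearity of expectation together with independence of the $d_k$'s, the objective collapses into a sum of one-dimensional expectations $\sum_b\sum_{k\in\mathcal K_b}\mathbb E_{d_k}\big[\hat U_k([E_b-p_b^*]^+,[Q_k-d_k]^+)\big]$.

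Next, since $d_k\in\{0,1\}$, expanding this single-variable expectation gives the affine expression $\hat U_k([E_b-p_b^*]^+,Q_k)-\Pr[\rho_k\ge\delta_k]\big(\hat U_k([E_b-p_b^*]^+,Q_k)-\hat U_k([E_b-p_b^*]^+,[Q_k-1]^+)\big)$. The first term is independent of $\mathbf s$ and can be dropped from the $\arg\min$; negating the remaining terms turns the $\arg\min$ into an $\arg\max$, yielding
\begin{align*}
\arg\max_{\mathbf s\in\boldsymbol{\mathcal S}(\hat{\mathbf p}^*)}\sum_{b\in\mathcal B}\sum_{k\in\mathcal K_b}\Pr[\rho_k\ge\delta_k]\,\Delta\hat U_k,
\end{align*}
where $\Delta\hat U_k\triangleq \hat U_k([E_b-\hat p_b^*]^+,Q_k)-\hat U_k([E_b-\hat p_b^*]^+,[Q_k-1]^+)$. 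Noting that $s_k=0$ forces $\rho_k=0<\delta_k$, we may insert an explicit $s_k$ factor without changing the value.

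The main obstacle, and the last step, is the cluster-wise decoupling. The feasible set factorizes as $\boldsymbol{\mathcal S}(\hat{\mathbf p}^*)=\prod_{n=1}^N\boldsymbol{\mathcal S}_n(\hat{\mathbf p}_n^*)$, and the outer sum $\sum_{b\in\mathcal B}\sum_{k\in\mathcal K_b}=\sum_{n=1}^N\sum_{k\in\mathcal K_n}$ partitions across clusters. Decoupling then holds provided the $k$-th summand depends on $\mathbf s$ only through $\mathbf s_n$. This is the step requiring care, because the SINR $\rho_k$ in \eqref{eqn:SINR_per_user} depends on inter-cluster interference and hence on $(\mathbf H_{n'},\mathbf s_{n'})$ for $n'\ne n$. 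To handle this, I would invoke the same per-flow rationale used to define $\hat U_k$ via the randomized base policy in Lemma \ref{Lem:decomposition}: the per-flow functions are built from marginal statistics with inter-cluster variables integrated out. Consequently, in the decentralized rule the correct success probability at CM $n$ is the conditional expectation $\Pr[\rho_k(\mathbf H_n,\hat{\mathbf p}^*,\mathbf s_n)\ge\delta_k]$, averaging over the other clusters' CSI and scheduling under the base policy. After this substitution, each inner maximization depends only on $(\mathbf E_n,\mathbf Q_n,\mathbf H_n,\mathbf s_n)$, so the overall $\arg\max$ separates into $N$ independent per-cluster problems, yielding \eqref{eqn:decentralized-user-selection}.
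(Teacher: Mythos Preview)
Your proposal is correct and follows essentially the same route as the paper's proof in Appendix~D: substitute the additive approximation \eqref{eqn:approximate-V} into \eqref{eqn:opt-user-selection-postd-potential}, marginalize the product over $\mathbf d$ against the separable $\hat U$ to obtain a sum of per-user terms, expand each term over $d_k\in\{0,1\}$ using the $s_k$ dependence, drop the $\mathbf s$-independent constants, and decouple across clusters via the product structure of $\boldsymbol{\mathcal S}(\hat{\mathbf p}^*)$. The only substantive difference is that you are explicit about the inter-cluster coupling in $\rho_k$ and justify writing $\Pr[\rho_k(\mathbf H_n,\hat{\mathbf p}^*,\mathbf s_n)\ge\delta_k]$ as an average over out-of-cluster CSI and scheduling under the randomized base policy; the paper simply passes from the global $\Pr[\rho_k\ge\delta_k]$ to the per-cluster form in the lemma statement without comment, so your treatment is in fact more careful on this point.
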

\begin{proof}
Please refer to Appendix D.
\end{proof}

\begin{Rem}[Complexity of the User Scheduling] The user scheduling action in
\eqref{eqn:decentralized-user-selection} is a function of  \textcolor{black}{the per-cluster ESI,
QSI and CSI}, and is computed locally at the $n$-th CM. Under the
linear potential function approximation in
\eqref{eqn:approximate-V}, the complexity of  
user scheduling is reduced from  $\mathcal O
\big((N_E+1)^B(N_Q+1)^K \big)$ to $\mathcal O
\big((N_E+1)(N_Q+1)K\big)$.~\hfill\QED
\end{Rem}

\subsection{Performance of Low Complexity Delay-aware Distributed Solution}

The key motivation of the linear approximatios of the Q-function and potential function in \eqref{eqn:approximate-Q} and \eqref{eqn:approximate-V} is to facilitate distributed control. The following theorem shows that the proposed distributed policy always achieves better performance than the randomized base policy.

\begin{Thm} [Performance Improvement] If $\Pr[(\mathbf E',\mathbf Q')| (\mathbf E,\mathbf Q), (\mathbf p,\mathbf s) ]\neq \Pr[(\mathbf E',\mathbf
Q')| (\mathbf E,\mathbf Q), (\mathbf p',\mathbf s')]$ for any  $(\mathbf p,\mathbf s)\neq(\mathbf p',\mathbf s')$ and $(\mathbf E,\mathbf Q)\in \boldsymbol{\mathcal E}\times \boldsymbol{\mathcal Q}$,  then we have $\hat \theta^*(\mathbf E, \mathbf Q)<\hat \theta$ for all $(\mathbf E, \mathbf Q)\in \boldsymbol{\mathcal E}\times\boldsymbol{\mathcal Q}$,  where $\hat \theta^*(\mathbf E, \mathbf Q)$ is the average cost under the proposed solution starting from state $(\mathbf E, \mathbf Q)$ and $\hat \theta$ is the average cost under any  randomized base policy, respectively. ~\hfill\QED\label{Thm:performance-comp}
\end{Thm}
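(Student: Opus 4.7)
My approach is to recognize the proposed distributed solution $\hat{\Omega}^*=(\hat{\mathbf{p}}^*,\hat{\mathbf{s}}^*)$ as a single application of the classical \emph{policy improvement} operator to the randomized base policy $\hat{\Omega}$. The crucial observation from Lemma \ref{Lem:decomposition} is that the per-flow sums $\sum_{b,k}\hat{\mathbb{Q}}_k$ and $\sum_{b,k}\hat{U}_k$ appearing in \eqref{eqn:opt-pattern-selection-Q-factor-approx} and \eqref{eqn:decentralized-user-selection} are not mere ansatz approximations of the \emph{optimal} Q-factor and potential function; rather, they are \emph{exactly} the Q-factor $\hat{\mathbb{Q}}(\mathbf{E},\mathbf{Q},\mathbf{p})$ and post-decision potential $\hat{U}(\widetilde{\mathbf{E}},\widetilde{\mathbf{Q}})$ of the randomized base policy itself. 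With this identification, $\hat{\Omega}^*$ is the deterministic stationary policy obtained by acting greedily with respect to the exact policy-evaluation Q-factor of $\hat{\Omega}$.

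\textbf{Key steps.} First, I would write down the policy-evaluation Bellman equation satisfied by $(\hat{\theta},\hat{\mathbb{Q}})$ under $\hat{\Omega}$, namely
\[
\hat{\theta}+\hat{\mathbb{Q}}(\mathbf{E},\mathbf{Q},\mathbf{p})=\mathbb{E}^{\hat{\Omega}_s}\!\bigl[g((\mathbf{E},\mathbf{Q}),\mathbf{p},\mathbf{s})+\sum_{(\mathbf{E}',\mathbf{Q}')}\Pr[(\mathbf{E}',\mathbf{Q}')\mid(\mathbf{E},\mathbf{Q}),\mathbf{p},\mathbf{s}]\,\hat{V}(\mathbf{E}',\mathbf{Q}')\bigr],
\]
where $\hat{V}(\mathbf{E}',\mathbf{Q}'):=\mathbb{E}^{\hat{\Omega}_p}[\hat{\mathbb{Q}}(\mathbf{E}',\mathbf{Q}',\mathbf{p}')]$ is a state-only value. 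Second, I would apply the min operator on the right-hand side over $(\mathbf{p},\mathbf{s})$ instead of the base policy's mixture; by construction, the argmin is exactly $\hat{\Omega}^*(\mathbf{E},\mathbf{Q})$ and the minimum is no larger than the expectation under $\hat{\Omega}$, so
\[
g((\mathbf{E},\mathbf{Q}),\hat{\Omega}^*)+\sum_{(\mathbf{E}',\mathbf{Q}')}\Pr[(\mathbf{E}',\mathbf{Q}')\mid(\mathbf{E},\mathbf{Q}),\hat{\Omega}^*]\,\hat{V}(\mathbf{E}',\mathbf{Q}')\le \hat{\theta}+\hat{V}(\mathbf{E},\mathbf{Q}).
\]
Third, I would iterate this one-step inequality along the Markov chain induced by $\hat{\Omega}^*$, sum and divide by $T$, and pass to $T\to\infty$; the standard telescoping argument (valid because $\hat{V}$ is bounded on the finite state space $\boldsymbol{\mathcal{E}}\times\boldsymbol{\mathcal{Q}}$) yields $\hat{\theta}^*(\mathbf{E},\mathbf{Q})\le\hat{\theta}$.

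\textbf{Strictness.} To upgrade $\le$ to $<$, I would use the hypothesis that $\Pr[(\mathbf{E}',\mathbf{Q}')\mid(\mathbf{E},\mathbf{Q}),(\mathbf{p},\mathbf{s})]$ is a genuinely different distribution for different action pairs $(\mathbf{p},\mathbf{s})$. Since the randomized base policy places positive probability on every action in $\boldsymbol{\mathcal{P}}\times\boldsymbol{\mathcal{S}}$ (by Definition \ref{Def:QSI-ESI-indep-policy}), the value $\hat{\theta}+\hat{V}(\mathbf{E},\mathbf{Q})$ is a \emph{nontrivial} convex combination of the per-action one-step costs-plus-value. Distinct transition kernels force the per-action quantities to be distinct, so the min is strictly smaller than the mixture. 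Propagating this strict gap, together with recurrence to the starting state, through the Cesàro limit delivers $\hat{\theta}^*(\mathbf{E},\mathbf{Q})<\hat{\theta}$ at every $(\mathbf{E},\mathbf{Q})$.

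\textbf{Main obstacle.} The delicate point is the strictness argument under the average-cost criterion: the one-step improvement inequality is only strict \emph{at states that are visited}, so I must argue that every $(\mathbf{E},\mathbf{Q})$ lies on a recurrent path of the chain induced by $\hat{\Omega}^*$, or alternatively exploit that Assumption \ref{Asump:general_A-Q}-\ref{Asump:general_A-E} guarantee that zero arrivals occur with positive probability and thereby ensure communication between states. Formalizing the link between the pointwise strict policy-improvement inequality and strict improvement of the long-run average cost---avoiding the pathological case of multi-chain structure where the improvement is washed out---is where the bulk of the technical care will be needed.
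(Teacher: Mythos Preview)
Your plan is essentially the paper's own argument: recognize $\hat{\Omega}^*$ as one round of policy improvement applied to the base policy $\hat{\Omega}$, use Lemma~\ref{Lem:decomposition} to identify the per-flow sums with the \emph{exact} evaluation Q-factor/potential of $\hat{\Omega}$, derive the one-step improvement inequality, and Ces\`aro-average. The paper packages these steps by invoking (and slightly adapting) Proposition~4.4.2 of Bertsekas, working with the gain--bias pair $(\bar{\boldsymbol\lambda},\mathbf h_{\bar\mu})$ of the possibly multichain improved policy, whereas you propose to telescope by hand; the content is the same.

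Where you diverge is in what you flag as the ``main obstacle.'' You worry that strictness requires every state to lie on a recurrent path of the $\hat{\Omega}^*$-chain, and you contemplate proving communication via the positive probability of zero arrivals. This is unnecessary, and the paper's route is cleaner: once the per-state improvement $\delta(i)>0$ holds at \emph{every} state of the finite space $\boldsymbol{\mathcal E}\times\boldsymbol{\mathcal Q}$, the Ces\`aro matrix $\bar P^*$ of $\hat{\Omega}^*$ is row-stochastic, so $\bar P^*\boldsymbol\delta=\lambda\mathbf e-\bar{\boldsymbol\lambda}$ is automatically componentwise positive, giving $\bar\lambda(i)<\lambda$ for every $i$ regardless of the recurrence structure. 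Equivalently, in your telescoping picture, the time-average of the per-step gap is bounded below by $\min_i\delta(i)>0$ no matter which states the $\hat{\Omega}^*$-chain actually visits. So the multichain issue dissolves without any communication argument; you should replace the recurrence discussion with this uniform-lower-bound observation.
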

\begin{proof} Please refer to Appendix E.
\end{proof}

\section{Distributed Online Learning via Stochastic Approximation}\label{Sec:distributed-learning}

Observe that the BS-DTX control and the user scheduling in
\eqref{eqn:opt-pattern-selection-Q-factor-approx} and
\eqref{eqn:decentralized-user-selection} require the knowledge of
$\{\hat{\mathbb Q}_k (E_b, Q_k, \mathbf p)\}$ and $\{\hat
U_k(\widetilde{E}_b,\widetilde Q_k)\}$, 
respectively, which are defined in the fixed point equations \textcolor{black}{in} 
\eqref{eqn:bellman-per-user-Q-factor} and
\eqref{eqn:bellman-per-user-potential}, respectively. However,
solving these fixed point equations \textcolor{black}{is}  also quite challenging.  In
this section, we shall propose an online distributed stochastic
learning  \cite{SA:Q_learning} algorithm to estimate $\{\hat{\mathbb Q}_k (E_b, Q_k, \mathbf p)\}$ and $\{\hat
U_k(\widetilde{E}_b,\widetilde Q_k)\}$
 using \textcolor{black}{the}  per-cluster system state information
only. We shall prove that the proposed  distributed  algorithm
converges almost surely to the fixed point solutions.

\subsection{Distributed  Online Learning for $\{\hat{\mathbb Q}_k (E_b, Q_k, \mathbf p)\}$ and $\{\hat U_k(\widetilde{E}_b,\widetilde Q_k)\}$}

Since the statistics of $\mathbf A^Q(t)$ and $\mathbf A^E(t)$ are unknown to the controller, instead of computing $\{\hat{\mathbb Q}_k (E_b, Q_k, \mathbf p)\}$ and $\{\hat
U_k(\widetilde{E}_b,\widetilde Q_k)\}$ of a chosen $\hat \Omega$
offline, we shall estimate them distributively at each BS based on
the instantaneous observations.
\begin{Alg}(\emph{Online Per-User Q-factor and Potential
Function Learning Algorithm})
\begin{itemize}
\item {\bf Step 1 [Initialization at the BSs]}: Set $t=0$. Each BS $b$
initializes $\{\hat{\mathbb Q}_k^0 (E_b, Q_k, \mathbf p)\}$ and $\{\hat
U_k^0(\widetilde{E}_b,\widetilde Q_k)\}$ for all $k\in \mathcal K_b$.

\item {\bf Step 2 [\textcolor{black}{BS-DTX} Control at the BSC]}: At the beginning of the $t$-th
slot, each BS $b$ reports \textcolor{black}{$\left\{\sum_{k\in \mathcal K_b}\hat{\mathbb Q}^t_k\left(E_b(t),Q_k(t),\mathbf p\right):\mathbf p\in \boldsymbol{\mathcal P}\right\}$} to the BSC. The BSC
determines BS-DTX control $\hat{\mathbf p}^*(t)\triangleq\hat{\mathbf
p}^*(\mathbf E(t),\mathbf Q(t))$ according to
\eqref{eqn:opt-pattern-selection-Q-factor-approx} and  broadcasts $\hat{\mathbf
p}^*(t)$ to all the CMs. Each CM $n$ informs $\hat p_b^*(t)$ to each BS
$b\in \mathcal B_n$. Each BS $b$ determines its renewable and grid 
power allocations, i.e., $\hat p_b^{E*}(t)=\hat p_b^*(t)\mathbf I[E_b(t)>0]$
and $\hat p_b^{G*}(t)=\hat p_b^*(t)\mathbf I[E_b(t)=0]$, respectively.

\item {\bf Step 3 [User Scheduling at the CMs]}: \textcolor{black}{Each BS $b$ reports $\left\{\hat{U}^t_k\left(E_b(t),Q_k(t)\right):k\in \mathcal K_b\right\}$ to its CM.}  Each CM $n$ determines user selection $\hat{\mathbf
s}_n^*(t)\triangleq\hat{\mathbf s}_n^*(\mathbf E_n(t),\mathbf Q_n(t), \mathbf
H_n(t))$ according to \eqref{eqn:decentralized-user-selection} under
given  BS-DTX control $\hat{\mathbf p}^*(t)$.

\item {\bf Step 4 [Per-flow Q-factor and Potential Function Update at the BSs]}: Based on the current  observations  $A^E_b(t)$ and $A^Q_k(t)$ ($k\in \mathcal K_b$), each BS $b$ updates the per-flow Q-factor and
potential function for the MSs in its cell according to
\eqref{eqn:update-Q} and \eqref{eqn:update-postd-V} for all $k\in \mathcal K_b$.
\begin{align}
&\hat{\mathbb Q}^{t+1}_k (E_b,Q_k, \mathbf p) \hspace{50mm} \forall  E_b\in \mathcal
E, Q_k \in \mathcal Q , \mathbf p \in \boldsymbol{\mathcal
P}\label{eqn:update-Q}\\
=& \hat{\mathbb Q}^t_k
(E_b,Q_k, \mathbf p)+\epsilon_t \left[F_k(\hat{\mathbb Q}^{t}_k, E_b,Q_k, \mathbf p)-F_k(\hat{\mathbb Q}^{t}_k, E_b^I,Q_k^I, \mathbf p^I)-\mathbb Q^t_k ( E_b,Q_k, \mathbf p)\right]
\nonumber\\
&\hat U^{t+1}_k (\widetilde{E}_b,\widetilde Q_k)\hspace{55mm} \forall
\widetilde{E}_b\in \mathcal  E, \widetilde Q_k
\in \mathcal Q \label{eqn:update-postd-V}\\
=& \hat
U^t_k (\widetilde{E}_b,\widetilde Q_k)+
\epsilon_t\left[T_k(\hat{\mathbf  U}^{t}_k, \widetilde E_b,\widetilde Q_k)-T_k(\hat{\mathbf  U}^{t}_k, \widetilde E_b^I,\widetilde Q_k^I) -\hat U^t_k (\widetilde{E}_b,\widetilde
Q_k)\right] \nonumber
\end{align}
where 
\begin{align}
F_k(\hat{\mathbb Q}^{t}_k, E_b,Q_k, \mathbf p)=&\hat g_k(E_b,Q_k, p_b)+ \sum_{(E_b',Q_k') }\hat\Pr[ (
E_b',Q_k')|(E_b,Q_k),\mathbf p]\nonumber\\
&\times\mathbb E^{\hat \Omega_p}\left[\mathbb Q_k^t ( \min\{E_b'+A^E_b(t),N_E\},\min\{Q_k'+A^Q_k(t),N_Q\},  \mathbf p')\right]\label{eqn:update-F-Q}
\end{align}
\begin{align}
&T^t_k(\hat{\mathbf  U}^{t}_k, \widetilde E_b,\widetilde Q_k)=\mathbb E^{\hat \Omega_p}\left[\hat g_k\left(\min\{\widetilde E_b+A^E_b(t),N_E\},\min\{\widetilde Q_k+A^Q_k(t),N_Q\}, p_b\right)\right]\nonumber\\
&+\sum_{(\widetilde E_b',\widetilde Q_k') }\mathbb E^{\hat \Omega_p}\left[\hat \Pr\left[(\widetilde E_b',\widetilde
Q_k')|\left(\min\{\widetilde E_b+A^E_b(t),N_E\},\min\{\widetilde Q_k+A^Q_k(t),N_Q\}\right),\mathbf p\right]\right] \hat
U_k^t(\widetilde E_b',\widetilde Q_k')\label{eqn:update-T-J}
\end{align}
$ E_b'=[E_b-p_b]^+$, $Q_k'=\left[Q_k-\mathbf I[\rho_k
(\mathbf H_n, \mathbf p, \mathbf s_n)\geq \delta_k]\right]^+$, $\widetilde E_b'=\left[\min\{\widetilde E_b+A^E_b(t),N_E\}-p_b\right]^+$, $\widetilde Q_k'=\left[\min\{\widetilde Q_k+A^Q_k(t),N_Q\}-\mathbf I[\rho_k
(\mathbf H_n, \mathbf p, \mathbf s_n)\geq \delta_k]\right]^+$. 
$\mathbf p^I$ is the
reference BS-DTX control action and  $E_b^I$, $Q_k^I$, $\widetilde E_b^I$, $\widetilde Q_k^I$   are the reference states\footnote{The reference action and  states
 are used to bootstrap the online learning
algorithms \cite{SA:ODE:twoscale} for
\eqref{eqn:update-Q} and \eqref{eqn:update-postd-V} respectively. 
Without loss of generality, we set $E^I_b=0$,
$Q^I_k=0$, $\mathbf p^I=\{p_b^I=1:b\in \mathcal B\}$, $\widetilde E^I_b= 0$ and  $\widetilde Q^I_k=0$.
} for the Q-factor update in \eqref{eqn:update-Q} and the
potential function update in \eqref{eqn:update-postd-V},
respectively. $\{\epsilon_t\}$ are diminishing positive step
size sequences satisfying the following conditions:
$\epsilon_t\geq 0, \sum_t\epsilon_t=\infty, \ \sum_t\epsilon^2_t<\infty$.
\end{itemize}
\label{Alg:general_alg}
\end{Alg}

\subsection{\textcolor{black}{Performance} of the  Distributed  Learning Algorithm}\label{subsec:general_convergence_proof}

The convergence of Algorithm \ref{Alg:general_alg} is summarized
below.
\begin{Lem}[Convergence of Algorithm \ref{Alg:general_alg}] The iterative updates of the per-flow Q-factor and the per-flow potential function in
\eqref{eqn:update-Q}  and \eqref{eqn:update-postd-V} converge
almost surely, i.e., $\lim_{t\to \infty}\boldsymbol{ \mathbb
Q}^t_k=\hat{\mathbb Q}^{\infty}_k$ a.s. and $\lim_{t\to
\infty}\hat {\mathbf U}^t_k=\hat {\mathbf U}^{\infty}_k$ a.s. 
($\forall k \in \mathcal K$), where  $\hat{\boldsymbol{ \mathbb Q}}^{\infty}_k$
and  $\hat {\mathbf U}^{\infty}_k$ are the  solutions of the fixed point
equations in \eqref{eqn:bellman-per-user-Q-factor} and
\eqref{eqn:bellman-per-user-potential}, respectively.~\hfill\QED
\label{Lem:convergence-update}
\end{Lem}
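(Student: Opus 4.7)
The plan is to establish almost-sure convergence by invoking the ODE method of stochastic approximation (Borkar's framework). Observe that the recursions \eqref{eqn:update-Q} and \eqref{eqn:update-postd-V} are both driven by a single diminishing step size sequence $\{\epsilon_t\}$ satisfying the Robbins--Monro conditions, live on finite-dimensional Euclidean spaces (since $\mathcal E,\mathcal Q,\boldsymbol{\mathcal P},\boldsymbol{\mathcal S},\boldsymbol{\mathcal H}$ are all finite), and are in fact decoupled under the fixed randomized base policy $\hat \Omega$: they are independent fixed-point iterations for \eqref{eqn:bellman-per-user-Q-factor} and \eqref{eqn:bellman-per-user-potential} respectively. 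I would therefore treat each recursion separately, and write a unified argument that covers both.

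First, I would recast each update in the canonical stochastic approximation form $x^{t+1}=x^{t}+\epsilon_{t}\bigl[\mathcal{H}(x^{t})-x^{t}+M^{t+1}\bigr]$. For \eqref{eqn:update-Q}, taking the conditional expectation of $F_k(\hat{\mathbb Q}^{t}_k,\cdot)$ with respect to $A^E_b(t),A^Q_k(t)$ and the randomized action $\mathbf p'\sim\hat\Omega_p$ recovers the right-hand side of \eqref{eqn:bellman-per-user-Q-factor}; the residual is a martingale-difference noise $M^{t+1}_k$ adapted to the natural filtration $\mathcal F_t=\sigma(\hat{\mathbb Q}^{s}_k,\hat U^{s}_k,A^E_b(s),A^Q_k(s),s\le t)$. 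Boundedness of the per-stage cost $\hat g_k$ (finite spaces, bounded utility $f$) together with iterate boundedness (to be established below) guarantees $\mathbb E\bigl[\|M^{t+1}\|^{2}\mid\mathcal F_{t}\bigr]\le C(1+\|x^{t}\|^{2})$, which is the noise condition required by Borkar's theorem. The same decomposition applies verbatim to \eqref{eqn:update-postd-V} with $T_k$ in place of $F_k$.

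Second, I would analyze the limiting mean-field ODE $\dot{x}=\mathcal H(x)-x$. The subtraction of the reference-state term $F_k(\hat{\mathbb Q}^t_k,E_b^I,Q_k^I,\mathbf p^I)$ (respectively $T_k(\hat{\mathbf U}^t_k,\widetilde E_b^I,\widetilde Q_k^I)$) implements the standard Abounadi--Bertsekas--Borkar relative-value-iteration trick: it estimates the average cost $\hat\theta_k$ online and pins down the free constant in the fixed-point equation. Iterate boundedness can be established by a Borkar--Meyn argument, verifying that the rescaled ODE $\dot{x}=\mathcal{H}_{\infty}(x)-x$ (with $\mathcal{H}_{\infty}(x):=\lim_{c\to\infty}\mathcal H(cx)/c$) has the origin as a globally asymptotically stable equilibrium, which follows because the relative Bellman operator of a finite-state unichain Markov chain is non-expansive in the span semi-norm.

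The main obstacle is verifying uniqueness and global asymptotic stability of the equilibrium of $\dot{x}=\mathcal H(x)-x$, equivalently that the per-flow fixed-point equations \eqref{eqn:bellman-per-user-Q-factor} and \eqref{eqn:bellman-per-user-potential} admit a unique solution once the reference-state degree of freedom is fixed. To this end, I would argue that under $\hat\Omega$ the per-flow state chain on $(E_b,Q_k)$ is irreducible and aperiodic, using Assumptions~\ref{Asump:general_A-Q} and \ref{Asump:general_A-E} (in particular $\Pr[A^Q_k=0]>0$ and $\Pr[A^E_b=0]>0$) together with Assumption~\ref{Asump:pck-tx-model} to ensure all state pairs communicate. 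Consequently, the associated Bellman operator is a contraction in the span semi-norm, which yields a unique relative-value solution $(\hat\theta_k,\hat{\mathbb Q}^{\infty}_k)$ and $(\hat\theta_k,\hat U^{\infty}_k)$, and hence a unique globally asymptotically stable equilibrium of the ODE. Applying Borkar's convergence theorem to each recursion then yields $\hat{\mathbb Q}^{t}_k\to\hat{\mathbb Q}^{\infty}_k$ and $\hat U^{t}_k\to\hat U^{\infty}_k$ almost surely, as claimed.
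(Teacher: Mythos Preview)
Your proposal is correct and follows essentially the same approach as the paper. Both recognize that \eqref{eqn:update-Q} and \eqref{eqn:update-postd-V} are synchronous stochastic relative value iterations for a finite-state chain under the \emph{fixed} randomized base policy $\hat\Omega$, verify that this chain is unichain, and then invoke standard stochastic-approximation convergence for RVI. The only difference is packaging: the paper's proof is a terse pointer to Proposition~4.3.2 in \cite{Bertsekas:2007} and Lemma~2 of \cite{YCuiULOFDMA:2010}, whereas you unpack the same machinery explicitly via Borkar's ODE method (martingale-difference noise, Borkar--Meyn boundedness, span-seminorm contraction). One small caution: the paper only asserts a \emph{single recurrent class with possibly transient states}, not full irreducibility; your stronger irreducibility/aperiodicity claim is plausible given $\Pr[A^Q_k=0]>0$, $\Pr[A^E_b=0]>0$ and Assumption~\ref{Asump:pck-tx-model}, but if you want to match the paper exactly you only need the unichain property, which is already sufficient for the span-seminorm argument and for Bertsekas' RVI convergence.
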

\begin{proof}
Please refer to Appendix F.
\end{proof}

\begin{Rem} [Signaling Requirement of Distributed Two-Timescale Algorithm \ref{Alg:general_alg}]$\quad$

\textcolor{black}{$\bullet$ {\bf Signaling requirement over a short timescale (per slot)}: Each BS needs to collect the local CSI over the radio interface. The BSs within a cluster also need to report the local CSI to its CM. Yet, the signaling loading and the latency requirement for this part is in fact similar to the existing HSDPA and LTE systems. }

\textcolor{black}{$\bullet$ {\bf Signaling requirement through the backhaul over a long timescale (in convergent stage)}: Each BS needs to report the Q-factors of the (updated) local QSI  to the BSC  (for the BS-DTX control) as well as the potential functions  of the (updated) local QSI to the CM (for the user scheduling within a cluster). These signaling exchanges are over the high-speed backhaul and over a longer timescale (not on a slot by slot basis). The latency of signaling over backhaul (typically less than 10ms) is negligible.}~\hfill\QED
\end{Rem}

\section{Stability Analysis}\label{Sec:stablity-analysis}

In this section, we shall analyze the stability conditions for the
data queues in the coordinated MIMO networks  with infinite data
buffer size ($N_Q=\infty$) and finite energy storage size ($N_E<\infty$), and discuss
various design insights. We have the
following assumption on the BS and MS distributions.

\begin{Asump} [BS and MS Distributions] The location of the BSs follows a homogeneous {\em Poisson Point
Process} (PPP) $\Phi$ of density $\lambda$ and the location of the
MSs follows some independent stationary point process in the
Euclidean plane\cite{JeffSGCellularNet10,SGbookHaenggi08Now}. Each
MS is associated with the closest BS, i.e., the MSs in the Voronoi
cell of a BS are associated with it.~\hfill\QED\label{Asump:PPP}
\end{Asump}
To simplify the analysis, we consider a homogeneous network with $K_b=1$, 
$P_b=P$ $\forall b\in \mathcal B$ and $\delta_k=\delta$ $\forall
k\in \mathcal K$. In addition,  we assume the CSI follows
complex Gaussian fading and the long-term path gain follows standard
power law $L_{k,b}=r_{k,b}^{-\alpha}$, where $r_{k,b}$ is the
distance between BS $b$ and MS $k$ and  $\alpha>2$ is the path loss
exponent. Furthermore, the renewable energy and bursty data
arrivals under Assumptions  
\ref{Asump:general_A-Q} and \ref{Asump:general_A-E} are specialized to Bernoulli processes,
i.e., $A^Q_k(t), A^E_b(t) \in \{0,1\}$, \textcolor{black}{$\mathbb E[A^Q_k(t)]=\lambda^Q<
1$ } and $\mathbb
E[A^E_b(t)]=\lambda^E< 1$ for all  $k\in \mathcal K$ and $ b\in \mathcal B$. We consider the
following  randomized BS-DTX 
policy.
\begin{Def}[Randomized BS-DTX control Policy]
At each slot $t$,   each BS $b\in \mathcal B$ is active with probability $p_{tx}>0$, i.e., $\Pr[p_b(t)=1]=p_{tx}$, if $\sum_{k \in \mathcal K_b} Q_k(t)>0$; $p_b(t)=0$ otherwise.~\hfill\QED\label{Def:rand-policy}
\end{Def}
In the following, we shall analyze the sufficient conditions for the
queue stability (i.e., $Q_k(t)$ having a steady state limiting
distribution for $t\to \infty$
\cite{Szpankowskistabilityconditions:93}) under the randomized policy in Definition \ref{Def:rand-policy}  of a randomly chosen user.

\subsection{Stability Analysis for Systems without BS Coordination ($N_t=1$)}\label{subsec:analysis-no-coop}

In this case, we consider no cooperation among BSs ($N_t=1$). Using  stochastic geometry \cite{SGbookHaenggi08Now} 
and the technique of parallel dominant queues
\cite{EphremidesEH11,Raointeractingqueue:88}, the following lemma
summarizes the sufficient condition for the queue stability of a
randomly chosen MS at a distance $r_1$ from its BS\footnote{When $N_E\to\infty$ and $p_{tx}=1$, the result in \eqref{eqn:stability-no-cooperation} reduces to the coverage probability for cellular networks without BS coordination obtained in \cite{JeffSGCellularNet10}. }. 

\begin{Lem} [Sufficient Condition for Queue Stability without BS Coordination] The data queue of
a  randomly chosen MS  
is stable if
\begin{align}
\lambda^Q< 
 p_{tx} \exp\left(-C_1p_{tx}\lambda -\frac{N_0}{P}\delta r_1^{\alpha}\right)\triangleq
\lambda^Q_{\max}(p_{tx},N_t)\label{eqn:stability-no-cooperation}
\end{align}
In addition, $\lambda^Q_{\max}(p_{tx},N_t)$ corresponds to the maximum average \textcolor{black}{grid power} cost per BS 
$\overline{p^{G}_{\max}}(p_{tx},N_E)=\big(1-f(p_{tx},N_E)\big)p_{tx}$. $N_t=1$, $C_1=\frac{2}{\alpha-2}\pi  r_1^2\delta $ 
and
\begin{align}
f(p_{tx},N_E) =
\begin{cases}
\frac{(\lambda^E/p_{tx})\big(1-(\lambda^E/p_{tx})^{N_E}\big)}{1-(\lambda^E/p_{tx})^{N_E+1}},&
\lambda^E\neq p_{tx}\\
\frac{N_E}{N_E+1},& \lambda^E= p_{tx}
\end{cases}
\label{eqn:stability-f}
\end{align}
~\hfill\QED\label{Lem:stability-stability-no-cooperation}
\end{Lem}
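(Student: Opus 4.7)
The plan is to combine the parallel dominant queue technique with a stochastic-geometry computation of the success probability and a separate birth--death analysis of the renewable-energy queue. First, I would construct a dominant system in the spirit of Rao--Ephremides: every BS is assumed to have an infinite backlog of dummy packets and attempts transmission independently with probability $p_{tx}$ at every slot (falling back to grid power whenever the renewable queue is empty, as permitted by Definition \ref{Defn:feasible_policy}). Standard stochastic dominance shows that any queue that is stable in this decoupled system is stable in the original, interacting system, so it suffices to prove the bound for the dominant one. The benefit is that the transmission indicator $p_b(t)$ across BSs becomes i.i.d.\ Bernoulli$(p_{tx})$, and hence the set of active interferers forms an independently thinned PPP of density $p_{tx}\lambda$.

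Next I would analyze the per-BS renewable-energy queue in isolation. Under the dominant policy the energy queue evolves on $\{0,1,\dots,N_E\}$ as a discrete-time birth--death chain with arrival rate $\lambda^E$ (Bernoulli by assumption) and removal rate $p_{tx}$ (the transmission probability). Solving the detailed-balance equations for this truncated geometric chain yields the stationary probability $\Pr[E_b>0]=f(p_{tx},N_E)$ given in \eqref{eqn:stability-f}, with the two cases $\lambda^E\neq p_{tx}$ and $\lambda^E=p_{tx}$ arising from the standard geometric versus uniform stationary distributions. The grid-power cost per slot per BS is $p_{tx}\Pr[E_b=0]=(1-f(p_{tx},N_E))p_{tx}$, matching the claimed $\overline{p^G_{\max}}(p_{tx},N_E)$.

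For the success probability, I would condition on the tagged MS being at distance $r_1$ from its serving BS. Since the small-scale fading on the desired link is unit-mean complex Gaussian, $|\mathbf h_{k,b}|^2$ is exponentially distributed, so
\begin{align}
\Pr[\rho_k\ge\delta]
=\mathbb E\Big[\exp\!\big(-\tfrac{\delta r_1^{\alpha}}{P}(N_0+I_k)\big)\Big]
=\exp\!\big(-\tfrac{N_0}{P}\delta r_1^{\alpha}\big)\,\mathcal L_{I_k}\!\big(\tfrac{\delta r_1^{\alpha}}{P}\big).
\end{align}
The interference $I_k$ comes from active BSs outside the protection disk of radius $r_1$, forming a PPP of density $p_{tx}\lambda$. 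Using the probability generating functional of the PPP and the Rayleigh-fading Laplace transform, and simplifying the resulting integral $\int_{r_1}^{\infty}(1-\mathcal L_h(s x^{-\alpha}))x\,dx$ with the substitution $u=x/r_1$ (or applying the standard first-order approximation used in \cite{JeffSGCellularNet10}), I would obtain $\mathcal L_{I_k}(\delta r_1^{\alpha}/P)=\exp(-C_1 p_{tx}\lambda)$ with $C_1=\frac{2\pi r_1^2\delta}{\alpha-2}$.

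Finally, because the service indicator is $\mathbf I[\rho_k\ge\delta]$ only when the BS is active, the mean per-slot service rate in the dominant system is $p_{tx}\Pr[\rho_k\ge\delta]$. Loynes' theorem applied to the dominant queue then yields stability whenever $\lambda^Q$ is strictly less than this mean service rate, which is exactly \eqref{eqn:stability-no-cooperation}, and dominance transfers the condition back to the original system. The main obstacle I anticipate is justifying that the parallel-dominant construction preserves both the energy-queue stationary law and the PPP-thinning of the interference simultaneously --- in the original network a BS transmits only when its data queue is nonempty, which introduces correlations between data and energy queues across BSs; one must argue that adding phantom packets can only enlarge the interference footprint and the energy drain, so the bound obtained under the decoupled analysis is a valid sufficient condition.
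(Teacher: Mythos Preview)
Your proposal is correct and follows essentially the same route as the paper's proof: construct the parallel dominant system with dummy packets so that transmissions become i.i.d.\ Bernoulli$(p_{tx})$ and the interfering BSs form a thinned PPP of density $p_{tx}\lambda$; compute the conditional success probability via the Rayleigh-fading/PGFL argument of \cite{JeffSGCellularNet10}; invoke Loynes' theorem; and handle the decoupled energy queue as a finite-buffer birth--death chain to obtain $f(p_{tx},N_E)$ and the grid-power expression. The only point to tighten is that the closed form $\exp(-C_1 p_{tx}\lambda)$ is a \emph{lower bound} on $\mathcal L_{I_k}(\delta r_1^{\alpha}/P)$ obtained from the elementary inequality $\int_{x^{-2/\alpha}}^{\infty}\frac{du}{1+u^{\alpha/2}}\le \frac{2}{\alpha-2}x^{1-2/\alpha}$, not an exact evaluation; this direction is precisely what is needed to make \eqref{eqn:stability-no-cooperation} a sufficient condition, so state it as an inequality rather than an equality.
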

\begin{proof} Please refer to Appendix G.
\end{proof}

\begin{Rem} [Interpretation of Lemma \ref{Lem:stability-stability-no-cooperation}] $f(p_{tx},N_E)$ can be interpreted as the probability that a
energy queue is non-empty in a parallel dominant network\footnote{In the parallel dominant network, dummy packets are
transmitted if a data queue is empty. Thus, the BS-sDTX controls are decoupled from the data queues, i.e.,
independent of the QSI, and hence, the renewable and \textcolor{black}{grid power}
consumptions are symmetric across all the BSs.}, i.e.,
$f(p_{tx},N_E)=\Pr[E_b>0]$. It can be easily verified from \eqref{eqn:stability-f} that
$f(p_{tx},N_E)$ increases as $N_E$ increases and $\lim_{N_E\to
\infty}f(p_{tx},N_E)=\min\{\frac{\lambda^E}{p_{tx}},1\}\triangleq
f(p_{tx},\infty)$, which corresponds to the case with infinite
energy storage size. In addition, $\overline{p^{G}_{\max}}(p_{tx},N_E)=\Pr[p_b^{G}=1]$.~\hfill\QED
\end{Rem}

\subsection{Stability Analysis for Systems with BS Coordination
($N_t>1$)}\label{subsec:analysis-coop}

In this part, we \textcolor{black}{extend} the analysis to the case with BS
coordination ($N_t>1$). For a randomly chosen MS, the interference comes from the active BSs outside its cluster.  Hence, we need to consider the distribution of the coordination clusters and the associated analysis is more challenging compared with the case without BS coordination  ($N_t=1$) \cite{JeffSGCellularNet10}.

\begin{Lem} [Sufficient Condition for Queue Stability with BS Coordination] For $N_t>\frac{\alpha}{2}$,\footnote{Note that  $N_t>\frac{\alpha}{2}$ implies $N_t>1$ for most of the cases we are interested in, as wu usually have $2<\alpha <4$  in practical systems. } the data queue of
a  randomly chosen MS in the coordinated MIMO network can be stabilized if
\begin{align}
\lambda^Q<
p_{tx} \exp\left(-C_{N_t}p_{tx}\lambda -\frac{N_0}{P}\delta r_1^{\alpha}\right)\triangleq
\lambda^Q_{\max}(p_{tx}, N_t)
\label{eqn:stability-cooperation}
\end{align}
In addition, $\lambda^Q_{\max}(p_{tx},N_t)$ corresponds to the maximum average \textcolor{black}{grid power} cost per BS 
$\overline{p^{G}_{\max}}(p_{tx},N_E)=\big(1-f(p_{tx},N_E)\big)p_{tx}$.   
$C_{N_t}=\frac{2}{\alpha-2}\pi  r_1^2\delta \min\{1,r_1^{\alpha-2}\frac{(\lambda \pi)^{-1+\frac{\alpha}{2}}\Gamma(N_t-\frac{\alpha}{2})}{\Gamma(N_t-1)}\}=\mathcal O(N_t^{1-\frac{\alpha}{2}})$ as $N_t \to \infty$.~\hfill\QED\label{Lem:stability-stability-cooperation}
\end{Lem}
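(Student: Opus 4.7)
The plan is to follow the structure of the proof of Lemma \ref{Lem:stability-stability-no-cooperation}, replacing the single-BS exclusion region with an $N_t$-BS cluster exclusion region and bounding the inter-cluster interference Laplace transform in two complementary ways, while keeping all other steps of that earlier proof intact.

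First I would decouple the data queues from the random interference via the parallel-dominant-queue construction of Rao--Ephremides. Under the randomized policy of Definition \ref{Def:rand-policy}, enlarge the system by forcing every BS to transmit a dummy packet (independently with probability $p_{tx}$) whenever its own data queue is empty; the tagged queue in the original system is stochastically dominated by the corresponding queue in this modified system. In the dominant system the set of active BSs is a $p_{tx}$-thinned homogeneous PPP of intensity $p_{tx}\lambda$ at every slot, the tagged queue sees i.i.d. per-slot service success indicators, and Loynes' criterion yields stability whenever $\lambda^Q < \mu \triangleq p_{tx}\,\Pr[\rho_k\geq\delta]$, with the probability evaluated under the stationary PPP. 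The grid-power bound $\overline{p^{G}_{\max}}=(1-f(p_{tx},N_E))p_{tx}$ is inherited verbatim from Lemma \ref{Lem:stability-stability-no-cooperation}, since the renewable-energy-queue analysis is insensitive to the spatial coordination structure.

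Second I would compute $\Pr[\rho_k\geq\delta]$ by conditioning on the tagged distance $r_1$. With complex Gaussian small-scale fading and zero-forcing beamforming, the desired-signal gain $|\mathbf{h}_{k,b}^T\mathbf{w}_{k,b}|^2$ and every inter-cluster interferer gain $|\mathbf{h}_{k,b'}^T\mathbf{w}_{k',b'}|^2$ are unit-mean exponentials (the latter because the interferer's beamforming weights depend only on channels independent of $\mathbf{h}_{k,b'}$). Conditioning out the desired gain and then the interference gives
\begin{align}
\Pr[\rho_k\geq\delta\mid r_1] \;=\; \exp\!\left(-\tfrac{\delta r_1^\alpha N_0}{P}\right)\,\mathcal{L}_I\!\left(\tfrac{\delta r_1^\alpha}{P}\right).\nonumber
\end{align}
Applying the PGFL of the PPP of active inter-cluster BSs I can express $\mathcal{L}_I(s)$ as the exponential of a radial integral, and bound that integral from above using $x/(1+x)\leq x$. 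The first (coarse) bound uses only the trivial exclusion $r\geq r_1$ and reproduces the no-coordination constant $C_1=2\pi r_1^2\delta/(\alpha-2)$. The second (cluster) bound exploits the fact that every inter-cluster interferer lies outside a disk of radius at least $r_{N_t-1}$ (the $(N_t{-}1)$-th nearest BS to the tagged MS); averaging the resulting conditional Laplace-exponent bound over $r_{N_t-1}$, using $r_{N_t-1}^2\sim\mathrm{Gamma}(N_t-1,\lambda\pi)$, and applying Jensen to $\exp(-c\,r_{N_t-1}^{2-\alpha})$ yields the moment $\mathbb{E}[r_{N_t-1}^{2-\alpha}]=(\lambda\pi)^{\alpha/2-1}\Gamma(N_t-\alpha/2)/\Gamma(N_t-1)$. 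The hypothesis $N_t>\alpha/2$ is exactly what keeps this moment finite; the $\mathcal{O}(N_t^{1-\alpha/2})$ asymptotic then follows from Stirling. Taking the tighter (smaller) of the two bounds inside $\mathcal{L}_I$ produces the stated $C_{N_t}$, and substituting into $\mu$ gives the sufficient condition \eqref{eqn:stability-cooperation}.

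The main obstacle is characterising the point process of active inter-cluster interferers seen from the tagged MS, because the cluster itself is carved out of the same PPP that generates those interferers and a naive conditioning would be circular. The cleanest way around this is to invoke Slivnyak's theorem to place the serving BS at the Palm point, observe that the remaining BSs still form a stationary PPP of intensity $\lambda$, and then use the stochastic ordering $r_{N_t-1}\leq r_{N_t+1}$ (the true exclusion boundary when the cluster consists of the $N_t$ nearest BSs) to argue that replacing the exclusion radius by $r_{N_t-1}$ only inflates the Laplace-transform integrand and therefore still yields a valid sufficient condition. A subsidiary technical point is verifying that the zero-forcing direction of $\mathbf{w}_{k,b}$, which is constrained by the intra-cluster channels, remains independent of the tagged small-scale fading vector $\mathbf{h}_{k,b}$ in the manner needed to preserve the exponential law of $|\mathbf{h}_{k,b}^T\mathbf{w}_{k,b}|^2$; this follows from the isotropy of the Gaussian channel model under Assumption \ref{Asump:H}.
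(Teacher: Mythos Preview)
Your proposal follows essentially the same route as the paper: dominant-network coupling to decouple the queues, Laplace-transform/PGFL treatment of the inter-cluster interference, the elementary bound $\eta(x,\alpha)\leq\tilde\eta(x,\alpha)=\tfrac{2}{\alpha-2}x^{1-2/\alpha}$, Jensen's inequality to pull the expectation over the cluster radius inside the exponent, and a Gamma-moment computation producing $(\lambda\pi)^{\alpha/2-1}\Gamma(N_t-\alpha/2)/\Gamma(N_t-1)$. The grid-power claim and the two-bound $\min\{\cdot,\cdot\}$ structure are also identical to the paper's.

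The one place where your write-up drifts is the handling of the exclusion radius \emph{conditional on} $R_1=r_1$. The paper takes the exclusion radius to be $R_{N_t}$ (the distance to the $N_t$-th nearest BS, i.e., the farthest cluster member), derives the conditional density $f_{R_{N_t}\mid R_1}(r_{N_t}\mid r_1)$ explicitly, and then bounds $R_{N_t}^{2-\alpha}\leq(R_{N_t}^2-r_1^2)^{1-\alpha/2}$, using that $R_{N_t}^2-r_1^2\mid R_1=r_1$ is $\mathrm{Gamma}(N_t-1,\lambda\pi)$ to reach the stated moment. Your Slivnyak step does not quite deliver this: conditioning on the \emph{nearest} BS being at distance $r_1$ is stronger than Palm conditioning, so the remaining BSs form a PPP only on $\{r>r_1\}$, and the unconditional law $r_{N_t-1}^2\sim\mathrm{Gamma}(N_t-1,\lambda\pi)$ you invoke is not the law that actually applies here. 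The indexing (exclusion at $R_{N_t-1}$ versus $R_{N_t}$) is correspondingly off by one. This is easy to repair---and your final formula happens to be correct---but the cleanest fix is exactly the paper's: work with $R_{N_t}$, compute its conditional density given $R_1=r_1$, and absorb the $r_1^2$ shift via the monotonicity bound above.
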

\begin{proof}
Please refer to Appendix H.
\end{proof}

\subsection{Optimization of Randomized Policy}

We are interested in maximizing
$\lambda^Q_{\max}(p_{tx},N_t)$  under \textcolor{black}{grid power} constraint $P^{G}$ w.r.t. the parameter $p_{tx}$ in the randomized control policy  for any given $N_E>0$ and $N_t\geq 1$.
Specifically, we have
\begin{align}
p_{tx}^*(N_E,N_t)=\arg\max_{p_{tx}\in[0,1]}\ & \lambda^Q_{\max}(p_{tx},N_t)
\label{eqn:max-no-coorp}\\
s.t. \quad &\overline{p^{G}_{\max}}(p_{tx},N_E)\leq
P^{G}\nonumber
\end{align}
Let  $\lambda^{Q*}_{\max}(N_E,N_t)=\lambda^Q_{\max}(p_{tx}^*,N_t)
$ denote the optimal value of the
optimization problem in \eqref{eqn:max-no-coorp}. Let $x^*(N_E)$ denote the solution to $\overline{p^{G}_{\max}}(x,N_E)=
P^{G}$ for any \textcolor{black}{given} $N_E>0$. The following theorem summarizes the optimal solution.

\begin{Thm} [Optimization Solution for Queue Stability]
$p_{tx}^*(N_E,N_t)=\min\{x^*(N_E),1, \frac{1}{C_{N_t} \lambda}\}$.  For any given $N_t\geq1$, $\lambda^{Q*}_{\max}(N_E,N_t)$ is strictly increasing in $N_E$ if $x^*(N_E)<\min\{1, \frac{1}{C_{N_t} \lambda}\}$ and is a constant for all $N_E$ if $x^*(N_E)\geq \min\{1, \frac{1}{C_{N_t} \lambda}\}$. For any given $N_E>0$, $\lambda^{Q*}_{\max}(N_E,N_t)$ is strictly increasing in $N_t$.~\hfill\QED\label{Thm:prop-stability-stability-cooperation}
\end{Thm}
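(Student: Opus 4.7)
The plan is to exploit the separable structure of the problem: the objective $\lambda^Q_{\max}(p_{tx},N_t)=p_{tx}\exp(-C_{N_t}\lambda p_{tx}-\frac{N_0}{P}\delta r_1^{\alpha})$ depends on $(p_{tx},N_t)$ only, while the constraint $\overline{p^{G}_{\max}}(p_{tx},N_E)\le P^{G}$ depends on $(p_{tx},N_E)$ only. First I will establish strict unimodality of the objective in $p_{tx}$ by differentiation: the derivative $(1-C_{N_t}\lambda p_{tx})\exp(\cdot)$ changes sign exactly once, at $p_{tx}=1/(C_{N_t}\lambda)$. Next I will unify the two branches of \eqref{eqn:stability-f} by substituting $\rho=\lambda^E/p_{tx}$ and applying the geometric-sum identity $(1-\rho)/(1-\rho^{N_E+1})=1/\sum_{i=0}^{N_E}\rho^i$, which yields the uniform representation
$$\overline{p^{G}_{\max}}(p_{tx},N_E)\;=\;\frac{\lambda^E}{\sum_{i=1}^{N_E+1}\rho^i}.$$
Since $\sum_{i=1}^{N_E+1}\rho^i$ is strictly increasing in $\rho>0$ and strictly increasing in $N_E$, this immediately gives that $\overline{p^{G}_{\max}}$ is strictly increasing in $p_{tx}$ and strictly decreasing in $N_E$; in particular the equation $\overline{p^{G}_{\max}}(x,N_E)=P^{G}$ admits a unique root $x^{*}(N_E)$ that is strictly increasing in $N_E$.

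With these structural facts in hand, the first claim is immediate: the power-constrained feasible set is the interval $[0,\min\{x^{*}(N_E),1\}]$ (independent of $N_t$), so maximizing a strictly unimodal function on this interval returns the right endpoint of the increasing branch, giving $p_{tx}^{*}(N_E,N_t)=\min\{x^{*}(N_E),1,1/(C_{N_t}\lambda)\}$. For monotonicity in $N_E$ at fixed $N_t$: if $x^{*}(N_E)<\min\{1,1/(C_{N_t}\lambda)\}$ then $p_{tx}^{*}=x^{*}(N_E)$ lies strictly inside the increasing branch of the objective, so the strict monotonicity of $x^{*}(\cdot)$ propagates through the strictly increasing map $\lambda^Q_{\max}(\cdot,N_t)$; otherwise $p_{tx}^{*}=\min\{1,1/(C_{N_t}\lambda)\}$ is independent of $N_E$, yielding the constant regime.

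For monotonicity in $N_t$ at fixed $N_E$, I will first show $C_{N_t}$ is strictly decreasing in $N_t$. Using $\Gamma(z+1)=z\Gamma(z)$,
$$\frac{\Gamma(N_t+1-\alpha/2)/\Gamma(N_t)}{\Gamma(N_t-\alpha/2)/\Gamma(N_t-1)}\;=\;\frac{N_t-\alpha/2}{N_t-1}\;<\;1\qquad\text{since }\alpha>2,$$
so the gamma-ratio, and hence $C_{N_t}$, is strictly decreasing. Then for any $N_t<N_t'$, since the feasible set is independent of $N_t$, the point $p_{tx}^{*}(N_E,N_t)$ remains feasible at $N_t'$, and
$$\lambda^{Q*}_{\max}(N_E,N_t')\;\ge\;\lambda^Q_{\max}\bigl(p_{tx}^{*}(N_E,N_t),N_t'\bigr)\;>\;\lambda^Q_{\max}\bigl(p_{tx}^{*}(N_E,N_t),N_t\bigr)\;=\;\lambda^{Q*}_{\max}(N_E,N_t),$$
where the middle strict inequality uses $C_{N_t'}<C_{N_t}$ in the explicit exponential form. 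The main obstacle is extracting clean monotonicity from the piecewise expression \eqref{eqn:stability-f}; once the geometric-sum identity collapses both cases into a single rational function in $\rho$, every subsequent monotonicity statement reduces to inspecting the sign of a derivative or adding a positive term, and the rest of the argument is just a ``feasible-substitution'' comparison.
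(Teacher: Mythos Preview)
Your proof is correct and follows the same overall strategy as the paper: differentiate the objective to locate its unconstrained maximizer at $p_{tx}=1/(C_{N_t}\lambda)$, establish monotonicity of $\overline{p^{G}_{\max}}$ in $p_{tx}$ and $N_E$ to identify the feasible interval and the behavior of $x^*(N_E)$, then combine; for the $N_t$ part, show $C_{N_t}$ is decreasing and push this through the optimal value. Your geometric-sum rewrite $\overline{p^{G}_{\max}}=\lambda^E\big/\sum_{i=1}^{N_E+1}\rho^i$ is a cleaner way to extract the constraint monotonicity than the paper's direct differentiation of $f$ (which simply asserts $\partial f/\partial p_{tx}<0$), and your feasible-substitution inequality chain for the $N_t$ monotonicity avoids the paper's case split on which term of the min is active, but these are tactical refinements of the same argument rather than a different route.
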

\begin{proof}
Please refer to Appendix I.
\end{proof}

\section{Results and Discussions}

In this section, we shall discuss the design insights from the analytical results in Section \ref{Sec:stablity-analysis}. We also compare the delay performance gain of the proposed delay-aware low complexity distributed scheme in Section \ref{sec:general_decentral_algo} and Section \ref{Sec:distributed-learning} with the following two baseline schemes using simulation.

$\bullet$ {\bf Baseline 1 [CSI-based Single Cell Scheme]:} Baseline
1 refers to the  randomized BS-DTX control and CSI-based user scheduling
without BS coordination. Each multi-antenna BS uses maximal ratio combining (MRC) and  selects  one MS with the maximum successful packet transmission probability based on the observed \textcolor{black}{local} CSI.

$\bullet$  {\bf Baseline 2 [CSI-based Clustered Coordinated MIMO Scheme]:} Baseline 2
refers to the randomized BS-DTX control and CSI-based clustered coordinated MIMO with the same coordinated beamforming as the proposed scheme.  Each CM determines the user scheduling to maximize the sum
successful packet transmission probability of each cluster based on the observed \textcolor{black}{per-cluster CSI}.

In the simulation, we consider a cellular system with 19 BSs, each has a coverage of 500m
and 2 mobiles per cell, which distribute uniformly
in the cell-edge with range [400m, 500m] from the BS. We apply the
Urban Macrocell Model in 3GPP \cite{3GPP} with path loss model given
by $PL=34.5+35\log_{10}(r)$, where $r$ (in m) is the distance from
the transmitter to the receiver. Each element of  $\mathbf h_{k,b}$ is $\mathcal{CN}(0,1)$. The
total bandwidth is 1MHz. The BS transmit power is $P_b=35$ dBm for all $b\in \mathcal B$, the threshold is $\delta_k=0.5$, and $\beta_k=1$ for all $k\in \mathcal K$. $\gamma_b$ is the same for all $b\in \mathcal B$. We consider Bernoulli arrival processes for the renewable energy and busty data arrivals. The maximum buffer size $N_Q=15$ pcks.


\subsection{Effect of BS Coordination} From Theorem
\ref{Thm:prop-stability-stability-cooperation},  we can see that for
any given $N_E>0$, the loading supported by the energy harvesting
\textcolor{black}{system} $\lambda^{Q*}_{\max}(N_E, N_t)$  increases as $N_t$ increases. Intuitively, the gain comes from  BS coordination. Fig. \ref{Fig:delay_vs_pwr}  illustrates the average delay versus the average transmit power cost for different number of transmit antennas $N_t$. It can be observed that the average delay of Baseline 2 and the proposed scheme decreases as $N_t$ increases. \textcolor{black}{This demonstrates}  that BS coordination improves the delay performance.

\subsection{Effect of Energy Buffer Size} 

From Theorem \ref{Thm:prop-stability-stability-cooperation}, we can see that for
any given $N_t\geq 1$, the loading supported by the energy harvesting
\textcolor{black}{system} $\lambda^{Q*}_{\max}(N_E, N_t)$  increases in
  $N_E$. Specifically, when $x^*(N_E)<\min\{1, \frac{1}{C_{N_t} \lambda}\}$, $\lambda^{Q*}_{\max}(N_E, N_t)$ 
increases as $N_E$ increases. The intuition is that the above
condition corresponds to  the power-limited region. By increasing
$N_E$,  more renewable energy can be accumulated due to less
renewable energy loss when the energy storage is full, and hence more
traffic loading can be supported. However, when $x^*(N_E)\geq \min\{1, \frac{1}{C_{N_t} \lambda}\}$,  $\lambda^{Q*}_{\max}(N_E,N_t)$ is
constant for all $N_E>0$. The intuition is that the above condition 
corresponds to  the interference-limited region, in which the traffic
loading supported cannot be  increased by accumulating more
renewable energy through increasing $N_E$. 
Fig.  \ref{Fig:delay_vs_NE}  illustrates the average delay versus the energy storage size $N_E$ at  average transmit \textcolor{black}{grid power} 15 dBm.  It can be observed that the average delay decreases as the energy storage size increases for all the schemes.

\subsection{Performance of the Proposed Scheme}

Fig. \ref{Fig:delay_vs_arrival} illustrates the average delay versus
per-flow loading (average arrival rate $\lambda_k$). The average delay of all the schemes increases
as the loading increases. The proposed scheme also achieves
significant gain over the baselines across a wide range of input
loading. Fig. \ref{Fig:convergence} illustrates the convergence property of the
proposed distributed online learning algorithm for
estimating the per-flow potential function and the per-flow
Q-factor. It can be observed that the proposed distributed learning algorithm converges quite
fast. Furthermore, the average delay at the
the 500-th scheduling slot is 4.1853 pcks, which is much smaller
than the other baselines.

\section{Summary}

In this paper, we propose a {\em two-timescale} delay-optimal BS-DTX control and user scheduling  for energy
harvesting downlink coordinated MIMO networks.  We show that the
two-timescale delay-optimal control problem can be modeled as a
POMDP and derive the optimal centralized control. To reduce the
complexity and facilitate the distributed implementation, we obtain a
distributed solution with the BS-DTX control at the BSC based on \textcolor{black}{the aggregation of the 
ESI and QSI} and the user scheduling at each CM based on \textcolor{black}{the per-cluster 
ESI, QSI and CSI} with guaranteed delay performance. We prove the almost-sure
convergence of the proposed distributed two-timescale algorithm.
Furthermore, we analyze the stability conditions for the data
queues in coordinated MIMO networks and discuss various design
insights.
\begin{appendix}

\section*{Appendix A: Proof of Lemma \ref{Lem:opt-re-ac-pwr-control}}

We shall prove Lemma \ref{Lem:opt-re-ac-pwr-control} using sample
path arguments. Let $\{\mathbf A^E(\omega,t)\}$, $\{\mathbf
A^Q(\omega,t)\}$ and $\{\mathbf H(\omega,t)\}$ be a given sample
path (i.e., $\omega$) of energy arrivals, packet arrivals and CSI
states. Let $\{\mathbf p(t)\}$ and $\{\mathbf s(t)\}$ be any given
sequences of feasible BS-DTX control actions and user scheduling
actions. Note that for given $\{\mathbf p(t)\}$ and $\{\mathbf
s(t)\}$, the trajectory of QSI $\{\mathbf Q(\omega,t)\}$ is uniquely
determined. Let $\{\mathbf p^E(\omega,t)\}$ and $\{\mathbf
p^{G}(\omega,t)\}$ be the sequences of the renewable power DTX
control actions and \textcolor{black}{grid power} DTX control actions satisfying the
structure in Lemma \ref{Lem:opt-re-ac-pwr-control} for the given
$\{\mathbf p(t)\}$, i.e.,
$p_b^E(\omega,t)=p_b(t)\mathbf I[E_b(\omega, t)>0]$ and
$p_b^{G}(\omega, t)=p_b(t)\mathbf I[E_b(\omega, t)=0]$), where
$\{\mathbf E(\omega,t)\}$ is the trajectory of ESI associated with
$\{\mathbf p^E(\omega,t)\}$. Let $\{\mathbf p^E{'}(\omega,t)\}$ and
$\{\mathbf p^{G}{'}(\omega,t)\}$ be any other sequences of feasible
renewable power DTX control actions and \textcolor{black}{grid power} DTX control actions
conditioned on $\{\mathbf p(\omega,t)\}$, i.e.,
$p_b^E{'}(\omega,t)+p_b^{G}{'}(\omega,t)=p_b(\omega,t)$, and
$\{\mathbf E'(\omega,t)\}$ be the trajectory of ESI associated with
$\{\mathbf p^E{'}(\omega,t)\}$.

In the following, for each $b\in \mathcal B$, we shall show that for $\mathbf E'(\omega,1)=\mathbf E(\omega,1)$, we have $\sum_{t=1}^T
p_b^{G}(\omega,t)\leq \sum_{t=1}^T p_b^{G}{'}(\omega,t)$. Let 
$\Delta p^{G}_b(\omega,T)\triangleq \sum_{t=1}^{T-1}\big(
p_b^{G}{'}(\omega,t)- p_b^{G}(\omega,t)\big)$ $\forall T\geq 2$ and $\Delta p^{G}_b(\omega,1)=0$. Then, we have $\Delta
p^{G}_b(\omega,t+1)=\Delta p^{G}_b(\omega,t)+\big(
p_b^{G}{'}(\omega, t)- p_b^{G}(\omega,t)\big)$ for all $t\geq 1$.
We shall prove $ E_b(\omega,t)+\Delta p^{G}_b(t)\geq
E_b'(\omega,t)$ and $\Delta p^{G}_b(\omega,t)\geq 0$ for all $t\geq
1$ by induction. (In the following proof, we omit $\omega$ for
notation simplicity.)
\begin{itemize}
\item Consider $t=1$. Since $E'_b(1)=
E_b(1)$ and $\Delta p^{G}_b(1)=0$ by the initial condition, we have
$ E_b(1)+\Delta p^{G}_b(1)\geq E_b'(1)$ and $\Delta p^{G}_b(1)\geq
0$.

\item For some $t\geq 1$, assume $
E_b( t)+\Delta p^{G}_b(t)\geq E_b'( t)$ and $\Delta p^{G}_b(t)\geq
0$. $E_b'(t+1)=\min\{E_b'(t)-p^E{'}(t)+A^E_b(t),N_E\}$. We shall
show the conclusions hold for $t+1$ by considering the following
three cases. (1) When $E_b(t)>0$, we have $p^E_b(t)=p_b(t)$ and
$p^{G}_b(t)=0$. Thus, we have
$E_b(t+1)=\min\{E_b(t)-p(t)+A^E_b(t),N_E\}$ and $\Delta
p^{G}_b(t+1)=\Delta p^{G}_b(t)+p^{G}_b{'}(t)-0\geq0$. In
addition, since $p^{G}_b{'}(t)=p_b(t)-p^E_b{'}(t)$, we have
$E_b(t+1)+\Delta p^{G}_b(t+1)=\min\{E_b(t)-p_b(t)+A^E_b(t)+\Delta
p^{G}_b(t)+p_b(t)-p^E_b{'}(t),N_E+\Delta p^{G}_b(t+1)\}\geq
E_b'(t+1)$. (2)  When $E_b(t)=0$ and $E_b'(t)\geq 1$, which implies
$\Delta p^{G}_b(t)\geq 1$, we have $p^E_b(t)=0$ and
$p^{G}_b(t)=p_b(t)$. Thus, we have
$E_b(t+1)=\min\{E_b(t)+A^E_b(t),N_E\}$  and $\Delta
p^{G}_b(t+1)=\Delta p^{G}_b(t)+p^{G}_b{'}(t)-p_b(t)=\Delta
p^{G}_b(t)-p^E_b{'}(t)\geq 1-1=0$, and hence, we have
$E_b(t+1)+\Delta p^{G}_b(t+1)=\min\{E_b(t) +A^E_b(t)+\Delta
p^{G}_b(t)-p^E_b{'}(t),N_E+\Delta p^{G}_b(t+1)\}\geq E_b'(t+1)$.
(3) When $E_b(t)=0$ and $E_b'(t)= 0$, we have
$p^E_b(t)=p^E_b{'}(t)=0$ and $p^{G}_b(t)=p^{G}_b{'}(t)=p_b(t)$.
Thus, we have $E_b(t+1)=\min\{0+A^E_b(t),N_E\}$,
$E_b'(t+1)=\min\{0+A^E_b(t),N_E\}$ and $\Delta p^{G}_b(t+1)=\Delta
p^{G}_b(t)+0\geq0$, and hence, $E_b(t+1)+\Delta p^{G}_b(t+1)=
E_b'(t+1)+\Delta p^{G}_b(t+1)\geq  E_b'(t+1)$.  
\end{itemize}
Therefore, by induction, we can show $\Delta p^{G}_b(\omega,t)\geq
0$ for all $t$. Since the average delay costs per stage are the
same, we have $$\frac{1}{T}\sum_{t=1}^T \left(\sum_k \beta_k
f(Q_k(\omega,t))+\sum_b \gamma_b  p^{G}_b(\omega,t)\right)\leq
\frac{1}{T}\sum_{t=1}^T \left(\sum_k \beta_k f(Q_k(\omega,t))+\sum_b
\gamma_b  p^{G}_b{'}(\omega,t)\right)$$ for any given $\{\mathbf
p(t)\}$ and $\{\mathbf s(t)\}$ and $T$. By taking expectations over
all sample paths,  $\limsup$ and optimizations over BS-DTX control
and user selection policy space, we have
$\min_{\pi}J_{\pi}^{(\boldsymbol{\beta},\boldsymbol{\gamma})}\big(\boldsymbol
\chi(1)\big)\leq
\min_{\pi'}J_{\pi'}^{(\boldsymbol{\beta},\boldsymbol{\gamma})}\big(\boldsymbol
\chi(1)\big)$, where $\pi=\{\Omega^1,\Omega^2,\cdots\}$ with
$\Omega^t$ satisfying the structure in Lemma
\ref{Lem:opt-re-ac-pwr-control}. 

\section*{Appendix B: Proof of Lemma \ref{Lem:reduced-MDP}, Corollary \ref{Cor:Bellman-Q-factor} and Corollary \ref{Cor:Bellman-postd}}
\begin{proof} [Lemma \ref{Lem:reduced-MDP}]
Based on Definition \ref{defn:conditional-action1}, we can transform
the POMDP into the MDP with a tuple of the following four objects:
state space $\boldsymbol{\mathcal E}\times \boldsymbol{\mathcal Q}$,
action space $\boldsymbol{\mathcal P} \times \boldsymbol{\mathcal
S}$ with partitioned architecture $\{\Omega^t(\mathbf E,\mathbf
Q)\}$ according to Definition \ref{defn:conditional-action1},
transition kernel $ \Pr[(\mathbf E',\mathbf Q')| (\mathbf E,\mathbf
Q), \Omega(\mathbf E,\mathbf Q)]$, per-stage cost function  $g
\big((\mathbf E,\mathbf Q), \Omega(\mathbf E,\mathbf
Q)\big)$. Since the Weak Accessibility
(WA) condition holds under our problem setup, by Proposition 4.2.3.
in \cite{Bertsekas:2007}, the optimal average cost of the
transformed MDP is the same for all initial states. In addition, by
Proposition 4.1.3. and Proposition 4.1.4. in \cite{Bertsekas:2007},
we know that the solution $\big(\theta, \{V(\mathbf E,\mathbf
Q)\}\big)$ to the Bellman equation in \eqref{eqn:Bellman3} exists.
By Proposition 4.2.1. in \cite{Bertsekas:2007}, we can complete the
proof.
\end{proof}

\begin{proof} [Corollary \ref{Cor:Bellman-Q-factor}] Define $\mathbb Q (\mathbf E,\mathbf Q, \mathbf p)
\triangleq \min_{\Omega_s(\mathbf E,\mathbf Q)}\Big\{ g
\big((\mathbf E,\mathbf Q), \mathbf p, \Omega_s(\mathbf E,\mathbf
Q)\big) + \sum_{(\mathbf E',\mathbf Q')} $ $\Pr[(\mathbf E',\mathbf
Q')| (\mathbf E,\mathbf Q), \mathbf p, \Omega_s(\mathbf E,\mathbf
Q)]V(\mathbf E',\mathbf Q')\Big\}-\theta$. 
Thus, we have $V(\mathbf E,\mathbf Q)=\min_{\mathbf p \in
\boldsymbol{\mathcal P}} \mathbb Q (\mathbf E,\mathbf Q, \mathbf p)
$. Based on \eqref{eqn:Bellman3}, we can obtain \eqref{eqn:Bellman-Q-factor}, which is in terms
of BS-DTX control Q-factor $\{\mathbb Q (\mathbf E,\mathbf Q,
\mathbf p)\}$. From Lemma \ref{Lem:reduced-MDP}, we have the optimal
BS-DTX control action given by
\eqref{eqn:opt-pattern-selection-Q-factor}.
\end{proof}

\begin{proof} [Corollary \ref{Cor:Bellman-postd}] Based on \eqref{eqn:Bellman3}, we can obtain \eqref{eqn:Bellman-postd}\cite{Thesis:Salodkar}. For
any $(\mathbf E, \mathbf Q)\in \boldsymbol{\mathcal E} \times
\boldsymbol{\mathcal Q}$, as $\mathbf p^*=\Omega_p^*(\mathbf E,
\mathbf Q)$ can by obtained by
\eqref{eqn:opt-pattern-selection-Q-factor}, we can obtain $
\Omega_s^*(\mathbf E, \mathbf Q,\mathbf H)$ by solving the R.H.S. of
\eqref{eqn:Bellman-postd} under $\mathbf p^*$ for any $\mathbf A^E$
and $\mathbf A^Q$ as follows:
\begin{align}
&\min_{\Omega_s(\mathbf E,\mathbf Q)}\Big\{ g \Big((\mathbf
E,\mathbf Q), \big(\mathbf p^*,\Omega_s(\mathbf E,\mathbf
Q)\big)\Big) + \sum_{(\widetilde{\mathbf E}',\widetilde{\mathbf
Q}')} \Pr[(\widetilde{\mathbf E}',\widetilde{\mathbf Q}')| (\mathbf
E,\mathbf Q), \big(\mathbf p^*,\Omega_s(\mathbf E,\mathbf
Q)\big)]U(\widetilde{\mathbf E}',\widetilde{\mathbf
Q}')\Big\}\nonumber\\
\stackrel{(a)}{=}& \min_{\Omega_s(\mathbf E,\mathbf Q)}\Big\{
\sum_{(\widetilde{\mathbf E}',\widetilde{\mathbf Q}')}\mathbb E
\big[\Pr[(\widetilde{\mathbf E}',\widetilde{\mathbf Q}')|\boldsymbol
\chi, \big(\mathbf p^*,\Omega_s(\boldsymbol \chi)\big)
]\big|(\mathbf E,\mathbf Q) \big]U(\widetilde{\mathbf
E}',\widetilde{\mathbf Q}')\Big\},\  \forall (\mathbf E, \mathbf
Q)\in \boldsymbol{\mathcal E} \times
\boldsymbol{\mathcal Q}\nonumber\\
\stackrel{(b)}{\Leftrightarrow}& \min_{\Omega_s(\boldsymbol
\chi)}\Big\{\sum_{(\widetilde{\mathbf E}',\widetilde{\mathbf Q}')}
\Pr[(\widetilde{\mathbf E}',\widetilde{\mathbf Q}')|\boldsymbol
\chi, \big(\mathbf p^*,\Omega_s(\boldsymbol \chi)\big)
]\big|(\mathbf E,\mathbf Q) \big]
U(\widetilde{\mathbf E}',\widetilde{\mathbf Q}')\Big\},\ \forall \boldsymbol \chi \in \boldsymbol{\mathcal X}\nonumber\\
\stackrel{(c)}{=}& \min_{\mathbf s\in \boldsymbol{\mathcal
S}(\mathbf p^*)}\Big\{\sum_{\mathbf d\in \boldsymbol{\mathcal
D}}\Big( \prod_{k\in \mathcal K}\big(1-d_k-(-1)^{d_k}\Pr[\rho_k
(\mathbf H, \mathbf p^*, \mathbf s)\geq \delta_k]\big)
U([\mathbf E-\mathbf p^*]^+,[\mathbf Q-\mathbf
d]^+)\Big)\Big\}\label{eqn:b-result}
\end{align}
where (a) is due to the definition of $g(\cdot,\cdot)$ and
$\Pr[(\widetilde{\mathbf E}',\widetilde{\mathbf Q}')| (\mathbf
E,\mathbf Q),\Omega(\mathbf E,\mathbf Q)\big)]$, (b) is due to
Definition \ref{defn:conditional-action1} and (c) is due to Assumptions \ref{Asump:general_A-Q} and \ref{Asump:general_A-E} as well as $E_b-p_b^{E*}=E_b-p_b^{*}\mathbf I[E_b>0]=[E_b-p_b^*]^+$.
\end{proof}

\section*{Appendix C: Proof of Lemma \ref{Lem:decomposition}}

We shall prove the additive property w.r.t. the potential function. Following the proofs of Corollary \ref{Cor:Bellman-Q-factor} and Corollary \ref{Cor:Bellman-postd}, the additive property  can be easily extended to the Q-factor and the post-decision potential function. 
Let $\hat \theta $ and $\hat V(\mathbf E, \mathbf Q)$ be the average cost and the potential function under $\hat \Omega$. Then, we have the following Bellman equation in terms of $(\hat \theta ,\{\hat V(\mathbf E, \mathbf Q)\})$:
\begin{align}
&\hat\theta+\hat V(\mathbf E, \mathbf
Q)=\mathbb E^{\hat \Omega_p}\left[\hat g \big((\mathbf
E,\mathbf Q), \mathbf p \big)\right ]+
\sum_{(\mathbf E',\mathbf Q')}
\mathbb E^{\hat \Omega_p}\left[\hat \Pr\left[(\mathbf E',\mathbf Q')| (\mathbf
E,\mathbf Q), \mathbf p\right]\right]\hat V(\mathbf E',\mathbf Q')\label{eqn:proof:hat-V}
\end{align}
where $\hat g \big((\mathbf
E,\mathbf Q), \mathbf p \big)=g \big((\mathbf
E,\mathbf Q), \mathbf p \big)$ and 
$ \hat \Pr\left[(\mathbf E',\mathbf Q')| (\mathbf
E,\mathbf Q), \mathbf p\right]=\mathbb E\left[\mathbb E^{\hat \Omega_s}\left[\Pr[(\mathbf E', \mathbf Q')|(\mathbf E,\mathbf Q,\mathbf H), \mathbf p, \mathbf s]|\mathbf p,\mathbf H\right]|\mathbf p\right]$.  
Let $\hat \theta_k $ and $\hat V_k(\mathbf E, \mathbf Q)$ be the per-flow average cost and  potential function under $\hat \Omega$. Then, we have the following per-flow fixed point equation in terms of $(\hat \theta_k ,\{\hat V_k(E_b, Q_k)\})$:
\begin{align}
\hat\theta_k+\hat V_k(E_b, Q_k)=\mathbb E^{\hat \Omega_p}\left[\hat g_k \big((
E_b,Q_k), p_b \big)\right ]+
\sum_{(E_b',Q_k')}
\mathbb E^{\hat \Omega_p}\left[\hat \Pr\left[( E_b', Q_k')| (
E_b,Q_k), \mathbf p\right]\right]\hat V_k(E_b',Q_k')\label{eqn:proof:hat-per-user-V}
\end{align}
Under $\hat \Omega$,  the induced Markov chain has a single recurrent class. Therefore, the solutions to \eqref{eqn:proof:hat-V} and \eqref{eqn:proof:hat-per-user-V} exist, respectively. First,  we have $\mathbb E^{\hat \Omega_p}\left[\hat g \big((\mathbf
E,\mathbf Q), \mathbf p \big)\right ]=\sum_{b\in \mathcal B}\sum_{k\in \mathcal K_b}\mathbb E^{\hat \Omega_p}\left[\hat g_k \big((E_b,Q_k), p_b \big)\right ]$.  Second, by the relationship between the joint distribution and the marginal distribution, we have $\sum_{(\mathbf E',\mathbf Q')}
\hat \Pr\left[(\mathbf E',\mathbf Q')| (\mathbf
E,\mathbf Q), \mathbf p\right]=\sum_{( E_b',Q_k')}
\hat \Pr\left[( E_b',Q_k')| (\mathbf
E,\mathbf Q), \mathbf p\right]$ $
=\sum_{( E_b',Q_k')}
\hat \Pr\left[( E_b',Q_k')| ( E_b,Q_k), \mathbf p\right]$. 
Therefore, substitute $\hat \theta=\sum_{k\in \mathcal K}\hat \theta_k$ and $\hat V(\mathbf E, \mathbf Q)=\sum_{b\in \mathcal B}\sum_{k\in \mathcal K_b}$ $\hat V_k(E_b,Q_k)$ into \eqref{eqn:proof:hat-V}, we can see that the equality holds. Therefore, we complete the proof.

\section*{Appendix D: Proof of Lemma \ref{Lem:decentralized-user-selection}}

Using the approximation in \eqref{eqn:approximate-V} and
\eqref{eqn:b-result}, we have
\begin{align}
& \min_{\mathbf s\in \boldsymbol{\mathcal S}(\hat{\mathbf
p}^*)}\Big\{\sum_{\mathbf d\in \boldsymbol{\mathcal D}}\Big(
\prod_{k\in \mathcal K}\big(1-d_k-(-1)^{d_k}\Pr[ \rho_k
\geq \delta_k]\big)\big(\sum_n \sum_{k\in \mathcal K_n} \hat
U_k([E_b-\hat p_b^*]^+,
[Q_k-d_k]^+)\big)\Big)\Big\}\nonumber\\
=&\min_{\mathbf s\in \boldsymbol{\mathcal S}(\hat{\mathbf
p}^*)}\Big\{\sum_n \sum_{k\in \mathcal K_n}\sum_{d_k\in \mathcal D}
\big(1-d_k-(-1)^{d_k}\Pr[\rho_k \geq \delta_k]\big)
\hat U_k([E_b-\hat p_b^*]^+, [Q_k-d_k]^+)\Big\}\nonumber\\
\Leftrightarrow & \min_{\mathbf s_n\in \boldsymbol{\mathcal
S_n}(\hat{\mathbf p}^*_n)} \sum_{k \in \mathcal K_n}\sum_{d_k\in \mathcal
D_k} \big(1-d_k-(-1)^{d_k}\Pr[\rho_k \geq
\delta_k]\big)\hat U_k([E_b-\hat p_b^*]^+, [Q_k-d_k]^+),\ \forall n\nonumber\\
= & \min_{\mathbf s_n\in \boldsymbol{\mathcal S_n}(\hat{\mathbf p}^*_n)}
\sum_{k \in \mathcal K_n}\Big((1-s_k)\hat U_k([E_b-\hat p_b^*]^+, Q_k)\nonumber\\
&\hspace{20mm}+s_k\big(\Pr[\rho_k \geq
\delta_k]\hat U_k([E_b-\hat p_b^*]^+,
[Q_k-1]^+)+(1-\Pr[\rho_k \geq \delta_k]))\hat
U_k([E_b-\hat p_b^*]^+, Q_k)\Big)\nonumber\\
\Leftrightarrow & \min_{\mathbf s_n\in \boldsymbol{\mathcal
S_n}(\hat{\mathbf p}^*_n)} \sum_{k \in \mathcal K_n} s_k\Pr[
\rho_k \geq \delta_k]\big(\hat U_k([E_b-\hat p_b^*]^+, [Q_k-1]^+)-\hat U_k([E_b-\hat p_b^*]^+,
Q_k)\big),\forall n \nonumber
\end{align}



\section*{Appendix E: Proof of Theorem \ref{Thm:performance-comp}}


Under the assumptions  \ref{Asump:general_A-Q} and \ref{Asump:general_A-E} as well as  $\hat \Omega$  in Definition \ref{Def:QSI-ESI-indep-policy}, Markov chain $\{\left(\mathbf E(t),\mathbf Q(t)\right)\}$ has a single recurrent class (and possibly some transient states). Thus,  $\hat \Omega$ is a unchain policy.  In addition, it is obvious that $\hat \Omega^*\neq \hat \Omega$.  Therefore, the conditions of Proposition 4.4.2 in \cite{Bertsekas:2007} are satisfied expect for the assumption that $\hat \Omega^*$ is a unchain policy. We shall modify the proof of Proposition 4.4.2 to incorporate a general $\hat \Omega^*$ as follows. We adopt the same notations as Proposition 4.4.2. ($\mu$ can be treated as $\hat \Omega$ and $\bar \mu$ can be treated as $\hat \Omega^*$). Let $(\bar{\boldsymbol \lambda},\mathbf h_{\bar \mu})$ be the gain-bias pair of a general $\bar \mu$. Thus, by Proposition 4.1.9, $(\bar{\boldsymbol \lambda},\mathbf h_{\bar \mu})$ satisfies $\bar{\boldsymbol \lambda}=\bar P \bar{\boldsymbol \lambda}$ and $\bar{\boldsymbol \lambda}+\mathbf h_{\bar \mu}=\mathbf T_{\bar \mu}\mathbf h_{\bar \mu}$.  However, let $(\lambda,\mathbf h_{\mu})$ be the gain-bias pair of a unchain $\mu$, which satisfies $ \lambda \mathbf e+\mathbf h_{ \mu}=\mathbf T_{\mu}\mathbf h_{\mu}$.  Since $\Pr[(\mathbf E',\mathbf Q')| (\mathbf E,\mathbf Q), (\mathbf p,\mathbf s) ]\neq \Pr[(\mathbf E',\mathbf Q')| (\mathbf E,\mathbf Q), (\mathbf p',\mathbf s')]$, there is strict performance improvement  under $\hat \Omega^*$ over $\hat \Omega$. Thus, we have a stronger result than (4.97), i.e. $\delta (i)>0$ $\forall i$.  To incorporate a general $\bar \mu$, we have $\boldsymbol \delta = (\lambda \mathbf e-\bar{\boldsymbol \lambda})+(\mathbf I-\bar {\mathbf P})\boldsymbol \Delta$ instead of (4.98). Since $\bar{\boldsymbol \lambda}=\bar P \bar{\boldsymbol \lambda}$, we have $\sum_{k=0}^{N-1}\bar{\mathbf P}^k\boldsymbol \delta=N(\lambda \mathbf e-\bar{\boldsymbol \lambda})+(\mathbf I-\bar {\mathbf P}^N)\boldsymbol \Delta$ in stead of (4.99), which implies $\bar{\mathbf P}^*\boldsymbol \delta=\lambda \mathbf e-\bar{\boldsymbol \lambda}$ instead of (4.100). Since $\delta (i)>0$ $\forall i$, we have $\lambda >\bar{\boldsymbol \lambda}(i)$ $\forall i$. In other words, we can show $\hat \theta^*(\mathbf E, \mathbf Q)<\hat \theta$ for all $(\mathbf E, \mathbf Q)\in \boldsymbol{\mathcal E}\times\boldsymbol{\mathcal Q}$.

\section*{Appendix F: Proof of Lemma \ref{Lem:convergence-update}}
Note that the update equations in \eqref{eqn:update-Q} and \eqref{eqn:update-postd-V} can be treated as the synchronous stochastic  versions of the synchronous relative value iterations (RVI) \cite{Bertsekas:2007} for  the Markov chains  $\{\left(E_b(t), Q_k(t), \mathbf p(t) \right )\}$ ($\{\left(E_b(t), Q_k(t) \right)\}$)with  the policy space containing only one policy $\hat \Omega$  \cite{Bertsekas:2007}.
Under $\hat \Omega$ defined in Definition \ref{Def:QSI-ESI-indep-policy}, the two Markov chains have a single recurrent class (and possibly some transient states). Therefore, the condition of Lemma 2 in \cite{YCuiULOFDMA:2010} holds, 
according to the explanation for the conditions of Proposition 4.3.2 in \cite{Bertsekas:2007}. Following the proof of Lemma 2 in \cite{YCuiULOFDMA:2010}, which is a modified version of the proof for Proposition 4.3.2 in \cite{Bertsekas:2007}, we can prove  Lemma \ref{Lem:convergence-update}. 
We omit the details here due to page limit.

\section*{Appendix G: Proof of Lemma \ref{Lem:stability-stability-no-cooperation}}

From the conditional coverage probability (conditioned on the nearest BS being at a distance $r_1$ from the randomly chosen MS) for cellular networks without BS coordination obtained in\cite{JeffSGCellularNet10}, we have  the conditional successful packets transmission probability of the randomly chosen MS given by $p_s(r_1, \lambda')\leq \exp\left(-C_1\lambda' -\frac{N_0}{P}\delta r_1^{\alpha}\right)$, where $\lambda'$ is the
density of the homogeneous PPP $\Phi'$ used to model the locations of
active BSs and the inequality is due to $\eta(x,\alpha)=\int_{x^{-\frac{2}{\alpha}}}^{\infty}\frac{1}{1+u^{\frac{\alpha}{2}}}du\leq \int_{x^{-\frac{2}{\alpha}}}^{\infty}u^{-\frac{\alpha}{2}}du=\frac{2x^{1-\frac{2}{\alpha}}}{\alpha-2}\triangleq \tilde\eta(x,\alpha)$.

Next, we shall show sufficiency by proving that
\eqref{eqn:stability-no-cooperation} guarantees stability in a
parallel dominant network, in which dummy packets are transmitted
when a data queue is empty. Sending dummy packets is only aimed to
cause interference to the other MSs and not counted as an actual
packet transmission. The dominant system stochastically dominates
the original system in the sense that the queue sizes and \textcolor{black}{grid power} costs in that system
are necessarily not smaller (bigger) than those in the original system.
Therefore, the stability conditions obtained for the dominant
systems are sufficient for the stability of the original system. In the dominant system, since $\Pr[p_b=1]=p_{tx}$, we have $\lambda'=p_{tx}\lambda $. 
Therefore, the service rate of the randomly chosen MS is $\mu(p_{tx}, \lambda)=p_{tx}p_s(r_1, \lambda')$. By Loynes' Theorem, the
queue of the randomly chosen MS is stable if $\lambda^Q<\mu(p_{tx}, \lambda)$. Thus, we complete the proof for \eqref{eqn:stability-no-cooperation}.
Note that $E_b$ is decoupled from $Q_k$  and forms a discrete-time
$M/M/1/N_E$ system with arrival rate $\lambda^E$ and departure rate
$p_{tx}$. By queueing theory, we have $\Pr[E_b>0]=f(p_{tx},N_E)$ \cite{EphremidesEH11}.  Thus, we can prove the average \textcolor{black}{grid power} cost in the dominant system is $\overline{p^{G}_{\max}}(p_{tx},N_E)=\big(1-f(p_{tx},N_E)\big)p_{tx}$.


\section*{Appendix H: Proof of Lemma \ref{Lem:stability-stability-cooperation}}

In the following proof, we shall focus on the derivation of the conditional  successful packet transmission probability $p_s(r_1,\lambda,\lambda', N_t)$. The remaining proof is similar to that in the proof of Lemma \ref{Lem:stability-stability-no-cooperation}. Let $b_i$ denote the $i$-th nearest BS among all the BSs (including those are on and off) to the randomly chosen MS $k_0$, where $i=1,\cdots, N_t$. Thus, $b_1$ is the BS of MS $k_0$. By forming a cluster $\mathcal B_0=\{b_1,\cdots, b_{N_t}\}\subset \Phi$, MS $k_0$ can achieve the highest $p_s(r_1,\lambda,\lambda', N_t)$. We shall calculate $p_s(r_1,\lambda,\lambda', N_t)$ under the favorable cluster $\mathcal B_0\subset \Phi$.  Let $R_1$ and $R_{N_t}$ denote the distance between  BS $b_1$ and MS $k_0$ as well as the distance between BS $b_{N_t}$ and MS $k_0$. First, we shall derive the conditional p.d.f. $f_{R_{N_t}|R_1}(r_{N_t}|r_1)$ and the conditional expectation $E\left[R_{N_t}^{2-\alpha}|R_1=r_1\right]$.  If $r_B\leq r_1$, we have $\Pr[R_{N_t}>r_{N_t}|R_1=r_1]=1\Rightarrow f_{R_{N_t}|R_1}(r_{N_t}|r_1)=0 $. It remains to consider $r_{N_t}>r_1$. Let $\mathcal B_2(0,r)$ denote the 2-dim ball centered in the origin with radius $r$.  Following similar techniques in \cite{HaenggiPPPDistanceDist:2005}, we have, for $r_{N_t}>r_1$, 
\begin{align}
&y=\Pr[R_{N_t}>r_{N_t}|R_1=r_1]=\Pr[0, 1,\cdots, N_t-2 \ \text{BSs in}\ \mathcal B_2(0,r_{N_t})-\mathcal B_2(0, r_1) ]\nonumber\\
=&\sum_{i=0}^{N_t-2} \frac{\left( \lambda \pi (r_{N_t}^2-r_1^2)\right)^i}{i!}\exp\left(-\lambda \pi (r_{N_t}^2-r_1^2)\right)\nonumber\\
\Rightarrow& f_{R_{N_t}|R_1}(r_{N_t}|r_1)=-\frac{d y}{d r_{N_t}}=2\lambda \pi r_b \exp\left(-\lambda \pi (r_{N_t}^2-r_1^2)\right) \sum_{i=0}^{N_t-2} \frac{\left( \lambda \pi (r_{N_t}^2-r_1^2)\right)^i}{i!}\nonumber\\
&\hspace{45mm}-\exp\left(-\lambda \pi (r_{N_t}^2-r_1^2)\right) \sum_{i=1}^{N_t-2} \frac{\lambda \pi i\left( \lambda \pi (r_{N_t}^2-r_1^2)\right)^{i-1}2 r_{N_t}}{i!}\nonumber\\
=&2\lambda \pi r_{N_t} \exp\left(-\lambda \pi (r_{N_t}^2-r_1^2)\right)  \frac{\left( \lambda \pi (r_{N_t}^2-r_1^2)\right)^{N_t-2}}{(N_t-2)!}, \quad r_{N_t}>r_1\label{eqn:proof-cond-pdf-rB}\\
\Rightarrow& \mathbb E[R_{N_t}^{2-\alpha}|R_1=r1]=\int_{0}^{\infty} r_{N_t}^{2-\alpha} f_{R_{N_t}|R_1}(r_{N_t}|r_1)d r_{N_t}\nonumber\\
=&\frac{\left( \lambda \pi \right)^{N_t-1}}{(N_t-2)!}\int_{r_1}^{\infty}  (r_{N_t}^2)^{1-\frac{\alpha}{2}} \exp\left(-\lambda \pi (r_{N_t}^2-r_1^2)\right)  (r_{N_t}^2-r_1^2)^{N_t-2}d( r_{N_t}^2-r_1^2)\nonumber\\
\stackrel{(a)}{\leq}& \frac{\left( \lambda \pi \right)^{N_t-1}}{(N_t-2)!}\int_{0}^{\infty} u^{1-\frac{\alpha}{2}}  \exp\left(-\lambda \pi u\right) u^{N_t-2}d u=(-\lambda \pi)^{\frac{\alpha}{2}-1}\frac{\Gamma(N_t-\frac{\alpha}{2})}{{\Gamma(N_t-1)}}, \quad N_t>\frac{\alpha}{2}\label{eqn:proof-cond-E}
\end{align}
 where (a) is due to $\alpha >2$ and the change of variables $u=r_{N_t}^2-r_1^2$. (a) is tight for small $r_1$. In addition, $\mathbb E[R_{N_t}^{2-\alpha}|R_1=r1]\leq r_1^{2-\alpha}$.

Next, we shall calculate $p_s(r_1,\lambda,\lambda', N_t)$. Note that the interference to MS $k_0$ comes from the active BSs in $\Phi'-\mathcal B_0\bigcap \Phi'$. In addition, the signal power $G_1$ and interference power $G_b$ (from the active BS $b\in \Phi'-\mathcal B_0\bigcap \Phi'$) due to small scale fading are exponentially distributed with mean 1\cite{JindalMIMOSISO:2011}. Let $I_{R_{N_t}}$ denote the interference, which is a function of random variable $R_B$. Therefore, we have
\begin{align}
&p_s(r_1,\lambda,\lambda', N_t)=\Pr[SINR\geq \delta|R_1=r_1]=\Pr[\frac{PG_1r_1^{-\alpha}}{N_0+I_{R_{N_t}}}\geq \delta]\nonumber\\
=&\mathbb E_{R_{N_t}}\left[\mathbb E_{I_{R_{N_t}}}\left[\Pr[G_1\geq \frac{1}{P}\delta r_1^{\alpha}(N_0+I_{R_{N_t}})|I_{R_{N_t}}]\right]\right]=\mathbb E_{R_{N_t}}\left[\mathbb E_{I_{R_{N_t}}}\left[\exp\left(-\frac{1}{P}\delta r_1^{\alpha}(N_0+I_{R_{N_t}})\right)\right]\right]\nonumber\\
=&\exp\left(-\frac{N_0}{P}\delta r_1^{\alpha}\right)\mathbb E_{R_{N_t}}\left[\mathbb E_{I_{R_{N_t}}}\left[\exp\left(-\frac{1}{P}\delta r_1^{\alpha}I_{R_{N_t}}\right)\right]\right]\label{eqn:proof-SINR}
\end{align}
Let $s=-\frac{1}{P}\delta r_1^{\alpha}$ and $R_b$ denote the distance between BS $b\in \Phi'-\mathcal B_0\bigcap \Phi'$ and MS $k_0$, we have
\begin{align}
&\mathbb E_{I_{R_{N_t}}}\left[\exp\left(-sI_{R_{N_t}}\right)\right]=\mathbb E_{\Phi', \{G_b\}}\left[\exp\left(-s\sum_{b\in\Phi'-\mathcal B_0\bigcap \Phi'}PG_b R_b^{-\alpha}\right)\right]\nonumber\\
=&\mathbb E_{\Phi', \{G_b\}}\left[\prod_{b\in \Phi'-\mathcal B_0\bigcap \Phi'}\exp\left(-sPG_b R_b^{-\alpha}\right)\right]\nonumber\\
=&\mathbb E_{\Phi'}\left[\prod_{b\in \Phi'-\mathcal B_0\bigcap \Phi'}\mathbb E_{\{G_b\}}\left[\exp\left(-sPG_b R_b^{-\alpha}\right)\right]\right]=\mathbb E_{\Phi'}\left[\prod_{b\in \Phi'-\mathcal B_0\bigcap \Phi'}\frac{1}{1+sPG_b R_b^{-\alpha}}\right]\nonumber\\
=&\exp\left(-2\lambda' \pi\int_{R_{N_t}}\left(1-\frac{1}{1+sPG_b R_b^{-\alpha}}\right)\right) vdv\stackrel{(a)}{=}\exp\left(-2\lambda' \pi\int_{r_1\frac{R_{N_t}}{r_1}}\frac{1}{1+(\frac{v}{r_1\delta^{1/\alpha}})^{-\alpha}}vdv\right) \nonumber\\
\stackrel{(b)}{=}&\exp\left(-\lambda'\pi r_1^2 \delta^{\frac{2}{\alpha}}\int_{\delta^{-\frac{2}{\alpha}}(\frac{R_{N_t}}{r_1})^2}\frac{1}{1+u^{\frac{\alpha}{2}}}vdv\right) =\exp\left(-\lambda'\pi r_1^2 \delta^{\frac{2}{\alpha}}\eta\left(\delta (\frac{r_1}{R_{N_t}})^{\alpha}, \alpha\right)\right) \nonumber\\
\Rightarrow &\mathbb E_{R_{N_t}}\left[\mathbb E_{I_{R_{N_t}}}\left[\exp\left(-\frac{1}{P}\delta r_1^{\alpha}I_{R_{N_t}}\right)\right]\right]
\stackrel{(c)}{\geq}\exp\left(-\lambda'\pi r_1^2 \delta^{\frac{2}{\alpha}} E_{R_{N_t}}\left[\eta(\delta (\frac{r_1}{R_{N_t}})^{\alpha}, \alpha)\right]\right)\nonumber\\
\stackrel{(d)}{\geq}&\exp\left(-\lambda'\pi r_1^2 \delta^{\frac{2}{\alpha}} E_{R_{N_t}}\left[\tilde\eta\left(\delta (\frac{r_1}{R_{N_t}})^{\alpha}, \alpha\right)\right]\right)
=\exp\left(-\lambda'\frac{2}{\alpha-2}\pi r_1^{2 }\delta r_1^{\alpha-2}E_{R_{N_t}}\left[(R_{N_t})^{2-\alpha}|R_1=r_1\right]\right)\nonumber\\
\stackrel{(e)}{\geq}&\exp\left(-C_{N_t}\lambda' \right)\label{eqn:proof-SIR}
\end{align}
where (a) is due to plugging in $s=-\frac{1}{P}\delta r_1^{\alpha}$, (b) is due to the change of variables $u=(\frac{v}{r_1\delta^{1/\alpha}})^{-\alpha}$, (c) is due to the convexity of the exponential function, (d) is due to $\eta(x,\alpha)=\int_{x^{-\frac{2}{\alpha}}}^{\infty}\frac{1}{1+u^{\frac{\alpha}{2}}}du\leq \int_{x^{-\frac{2}{\alpha}}}^{\infty}u^{-\frac{\alpha}{2}}du=\frac{2x^{1-\frac{2}{\alpha}}}{\alpha-2}\triangleq \tilde\eta(x,\alpha)$, (e) is due to inequality \eqref{eqn:proof-cond-E} and $\mathbb E[R_{N_t}^{2-\alpha}|R_1=r1]\leq r_1^{2-\alpha}$.  Substituting \eqref{eqn:proof-SIR} into \eqref{eqn:proof-SINR}, we have $p_s(r_1,\lambda,\lambda', N_t)\geq \exp\left(-C_{N_t}\lambda'-\frac{N_0}{P}\delta r_1^{\alpha}\right)$. Since $\lambda'=p_{tx}\lambda$, we can prove \eqref{eqn:stability-cooperation}.

\section*{Appendix I: Proof of Theorem \ref{Thm:prop-stability-stability-cooperation}}

\begin{align}
&\frac{d\lambda^Q_{\max}}{dp_{tx}}=
(1-C_{N_t}\lambda
p_{tx}\lambda) \exp\left(-C_{N_t}p_{tx}\lambda -\frac{N_0}{P}\delta r_1^{\alpha}\right)
\begin{cases}
>0, &  p_{tx}<\frac{1}{C_{N_t} \lambda}\\
\leq 0, &  p_{tx}\geq\frac{1}{C_{N_t} \lambda}
\end{cases}\label{eqn:d-y}\\
&\frac{\partial\overline{p^{G}_{\max}}(p_{tx},N_E)}{\partial p_{tx}}=1-\frac{\partial f(p_{tx},N_E)}{\partial p_{tx}}+\left(1-f(p_{tx},N_E)\right)>0, \ \forall N_E>0 \nonumber
\end{align}
where the last inequality is due to $\frac{\partial f(p_{tx},N_E)}{\partial p_{tx}}<0$ and $f(p_{tx},N_E)\leq1$. In addition, $p_{tx}\in [0,1]$. Therefore, we can easily obtain $p_{tx}^*(N_E,N_t)$. Next, we shall prove the property of $\lambda^{Q*}_{\max}(N_E,N_t)$ w.r.t. $N_E$. It is obvious that $x^*(N_E)$ increases with $N_E$. If $x^*(N_E)<\min\{1,\frac{1}{C_{N_t}\lambda}\}$ , $p_{tx}^*(N_E)=x^*(N_E)$ and $\frac{d\lambda^Q_{\max}}{dp_{tx}}\big|_{p_{tx}=p_{tx}^*(N_E,N_t)}>0$. Thus, $\lambda^{Q*}_{\max}(N_E,N_t)$ is increasing in $N_E$. If $x^*(N_E)\geq \min\{1,\frac{1}{C_{N_t} \lambda}\}$, $p_{tx}^*(N_E)=\min\{1,\frac{1}{C_{N_t} \lambda}\}$. $\lambda^{Q*}_{\max}(N_E,N_t)$ is a constant for all $N_E$. Finally, we shall show  the property of $\lambda^{Q*}_{\max}(N_E,N_t)$ w.r.t. $N_t$ for any given $N_E>0$. It can be easily verified that $C_{N_t}$ is decreasing in $N_t$.  Thus, when $p_{tx}^*(N_E,N_t)=x^*(N_E)$ or $1$, we have that $\lambda^{Q*}_{\max}(N_E,N_t)=p_{tx}^*(N_E) \exp\left(-C_{N_t}p_{tx}^*(N_E)\lambda -\frac{N_0}{P}\delta r_1^{\alpha}\right)$ is increasing in $N_t$. When $p_{tx}^*(N_E,N_t)=\frac{1}{C_{N_t} \lambda}$, we have that $\lambda^{Q*}_{\max}(N_E,N_t)=C_{N_t} \lambda \exp\left(-1 -\frac{N_0}{P}\delta r_1^{\alpha}\right)$ is increasing in $N_t$.

\end{appendix}


\begin{thebibliography}{10}
\providecommand{\url}[1]{#1}
\def\UrlFont{\rmfamily}
\providecommand{\newblock}{\relax}
\providecommand{\bibinfo}[2]{#2}
\providecommand\BIBentrySTDinterwordspacing{\spaceskip=0pt\relax}
\providecommand\BIBentryALTinterwordstretchfactor{4}
\providecommand\BIBentryALTinterwordspacing{\spaceskip=\fontdimen2\font plus
\BIBentryALTinterwordstretchfactor\fontdimen3\font minus
  \fontdimen4\font\relax}
\providecommand\BIBforeignlanguage[2]{{%
\expandafter\ifx\csname l@#1\endcsname\relax
\typeout{** WARNING: IEEEtran.bst: No hyphenation pattern has been}%
\typeout{** loaded for the language `#1'. Using the pattern for}%
\typeout{** the default language instead.}%
\else
\language=\csname l@#1\endcsname
\fi
#2}}

\bibitem{GesbertCoMIMOnewlook:2010}
D.~Gesbert, S.~Hanly, H.~Huang, S.~{Shamai Shitz}, O.~Simeone, and W.~Yu,
  ``Multi-cell {MIMO} cooperative networks: A new look at interference,''
  \emph{{IEEE} J. Select. Areas Commun.}, vol.~28, no.~9, pp. 1380--1408, Dec.
  2010.

\bibitem{WeiYuCoordinatedBF:2010}
H.~Dahrouj and W.~Yu, ``Coordinated beamforming for the multicell multi-antenna
  wireless system,'' \emph{{IEEE} Trans. Wireless Commun.}, vol.~9, no.~5, p.
  1748Ð1759, May 2010.

\bibitem{RobertNetworkMIMO:09}
J.~Zhang, R.~Chen, J.~{G. Andrews}, A.~Ghosh, and R.~{W. Heath, Jr.},
  ``Networked {MIMO} with clustered linear precoding,'' \emph{{IEEE} Trans.
  Wireless Commun.}, vol.~8, pp. 1910--1921, Apr. 2009.

\bibitem{KBCellularCoodinateMIMO:2011}
K.~Huang and J.~{G. Andrews}, ``Characterizing multi-cell cooperation via the
  outage-probability exponent,'' in \emph{Proc. of IEEE Int.\ Conf.\ on Commun.
  (ICC)}, Ottawa, Canada, Jun 2012.

\bibitem{Jun2007:networkmimo}
J.~Zhang, R.~Chen, J.~{G. Andrews}, and R.~{W. Heath, Jr.}, ``Coordinated
  multi-cell {MIMO} systems with cellular block diagonalization,'' in
  \emph{Proc. of Asilomar Conference on Signals, Systems and Computers
  (ACSSC)}, Pacific Grove, CA, USA, Nov. 2007, pp. 1669--1673.

\bibitem{Howard2007:networkmimo}
F.~Boccardi and H.~Huang, ``Limited downlink network coordination in cellular
  networks,'' in \emph{IEEE International Symposium on Personal, Indoor and
  Mobile Radio Communications (PIMRC)}, Athens, Greece, Sept. 2007, pp. 1--5.

\bibitem{Brueck2010:networkmimo}
S.~Brueck, L.~Zhao, J.~Giese, and M.~{A. Amin}, ``Centralized scheduling for
  joint transmission coordinated multi-point in {LTE}-advanced,'' in
  \emph{Proc. of International ITG Workshop on Smart Antennas (WSA)}, Bremen,
  Feb. 2010, pp. 177--184.

\bibitem{NeelyEHMobihoc11}
L.~Huang and M.~{J. Neely}, ``Utility optimal scheduling in energy harvesting
  networks,'' in \emph{Proc. of Mobihoc}, Paris, France, May 2011.

\bibitem{TassiulasEH10}
M.~Gatzianas, L.~Georgiadis, and L.~Tassiulas, ``control of wireless networks
  with rechargeable batteries,'' \emph{{IEEE} Trans. Wireless Commun.}, vol.~9,
  no.~2, pp. 581--593, 2010.

\bibitem{KarEH06}
K.~K.~Kar, A.~Krishnamurthy, and N.~Jaggi, ``Dynamic node activation in
  networks of rechargeable sensors,'' \emph{{IEEE/ACM} Trans. Networking},
  vol.~14, no.~1, pp. 15--26, Feb. 2006.

\bibitem{Bertsekas:2007}
D.~{P. Bertsekas}, \emph{Dynamic Programming and Optimal Control},
  3rd~ed.\hskip 1em plus 0.5em minus 0.4em\relax Massachusetts: Athena
  Scientific, 2007.

\bibitem{JeffSGCellularNet10}
J.~{G. Andrews}, F.~Baccelli, and R.~{K. Ganti}, ``A tractable approach to
  coverage and rate in cellular networks,'' \emph{{IEEE} Trans. Commun.}, under
  revision.

\bibitem{SGbookHaenggi08Now}
M.~Haenggi and R.~K. Ganti, ``{Interference in Large Wireless Networks},''
  \emph{Foundations and Trends in Networking}, vol.~3, no.~2, pp. 127--248,
  2008, available at \url{http://www.nd.edu/~mhaenggi/pubs/now.pdf}.

\bibitem{Meuleau:1999}
N.~Meuleau, K.~E. Kim, L.~P. Kaelbling, and A.~R. Cassandra, ``Solving {POMDP}s
  by searching the space of finite policies,'' in \emph{Proc. of the Fifteenth
  Conf. on Uncertainty in AI}, 1999, pp. 417--426.

\bibitem{Gozalvezgreenradio10}
J.~Gozalvez, ``Green radio technologies,'' \emph{{IEEE} Vehicular Technology
  Mag.}, vol.~5, no.~1, pp. 9--14, Mar. 2010.

\bibitem{BPCmultiCELL}
A.~Gjendemsj, G.~{E. $\phi$ien}, and D.~Gesbert, ``Binary power control for
  multi-cell capacity maximization,'' in \emph{Proc. of IEEE 8th Workshop on
  Signal Processing Advances in Wireless Communications (SPAWC)}, Helsinki,
  Finland, 2007, pp. 1--5.

\bibitem{Thesis:Salodkar}
N.~Salodkar, ``{Online Algorithms for Delay Constrained Scheduling over a
  Fading Channel},'' Ph.D. dissertation, Indian Institute of Technology, May
  {2008}.

\bibitem{SA:Q_learning}
D.~B. J.~Abounadi and V.~S. Borkar, ``{Learning algorithms for Markov decision
  processes with average cost},'' \emph{{SIAM J. Control Optim.}}, vol.~40, pp.
  681--698, 2001.

\bibitem{SA:ODE:twoscale}
V.~S. Borkar, ``{Stochastic approximation with two time scales},''
  \emph{{Systems and Control Lett.}}, vol.~29, pp. 291--294, 1997.

\bibitem{Szpankowskistabilityconditions:93}
W.~Szpankowski, ``Stability conditions for some distributed systems: Buffered
  random access systems,'' \emph{Buffered Random Access Systems, Adv. Appl.
  Probab}, vol.~26, pp. 498--515, 1993.

\bibitem{EphremidesEH11}
J.~Jeon and A.~Ephremides, ``The stability region of random multiple access
  under stochastic energy harvesting,'' in \emph{Proc. of IEEE International
  Symposium on Information Theory (ISIT)}, Saint-Petersburg, Russia, Aug. 2011,
  pp. 1796--1800.

\bibitem{Raointeractingqueue:88}
R.~Rao and A.~Ephremides, ``On the stability of interacting queues in a
  multiple-access system,'' \emph{{IEEE} Trans. Inform. Theory}, vol.~34,
  no.~5, pp. 918--930, Sept. 1988.

\bibitem{3GPP}
\BIBentryALTinterwordspacing
3rd Generation Partnership Project~({3GPP}), \emph{TR 25.996 v8.0.0, Spatial
  channel model for Multiple Input Multiple Output ({MIMO}) simulations}.
  [Online]. Available: \url{http://www.3gpp.org/technologies}
\BIBentrySTDinterwordspacing

\bibitem{YCuiULOFDMA:2010}
Y.~Cui and V.~{K. N. Lau}, ``Distributive stochastic learning for delay-optimal
  {OFDMA} power and subband allocation,'' \emph{{IEEE} Trans. Signal
  Processing}, vol.~58, no.~9, pp. 4848 -- 4858, Sept. 2010.

\bibitem{HaenggiPPPDistanceDist:2005}
M.~Haenggi, ``On distance in uniformly random networks,'' \emph{{IEEE} Trans.
  Inform. Theory}, vol.~51, no.~10, pp. 3584--3586, Oct. 2005.

\bibitem{JindalMIMOSISO:2011}
N.~Jindal, J.~{G. Andrews}, and S.~Weber, ``Multi-antenna communication in ad
  hoc networks: Achieving {MIMO} gains with {SIMO} transmission,'' \emph{{IEEE}
  Trans. Commun.}, vol.~59, no.~2, pp. 529--540, Feb. 2011.

\end{thebibliography}

\begin{figure}[h]
\begin{center}
  \subfigure[\textcolor{black}{System Model}]
  {\resizebox*{5cm}{8cm}{\includegraphics{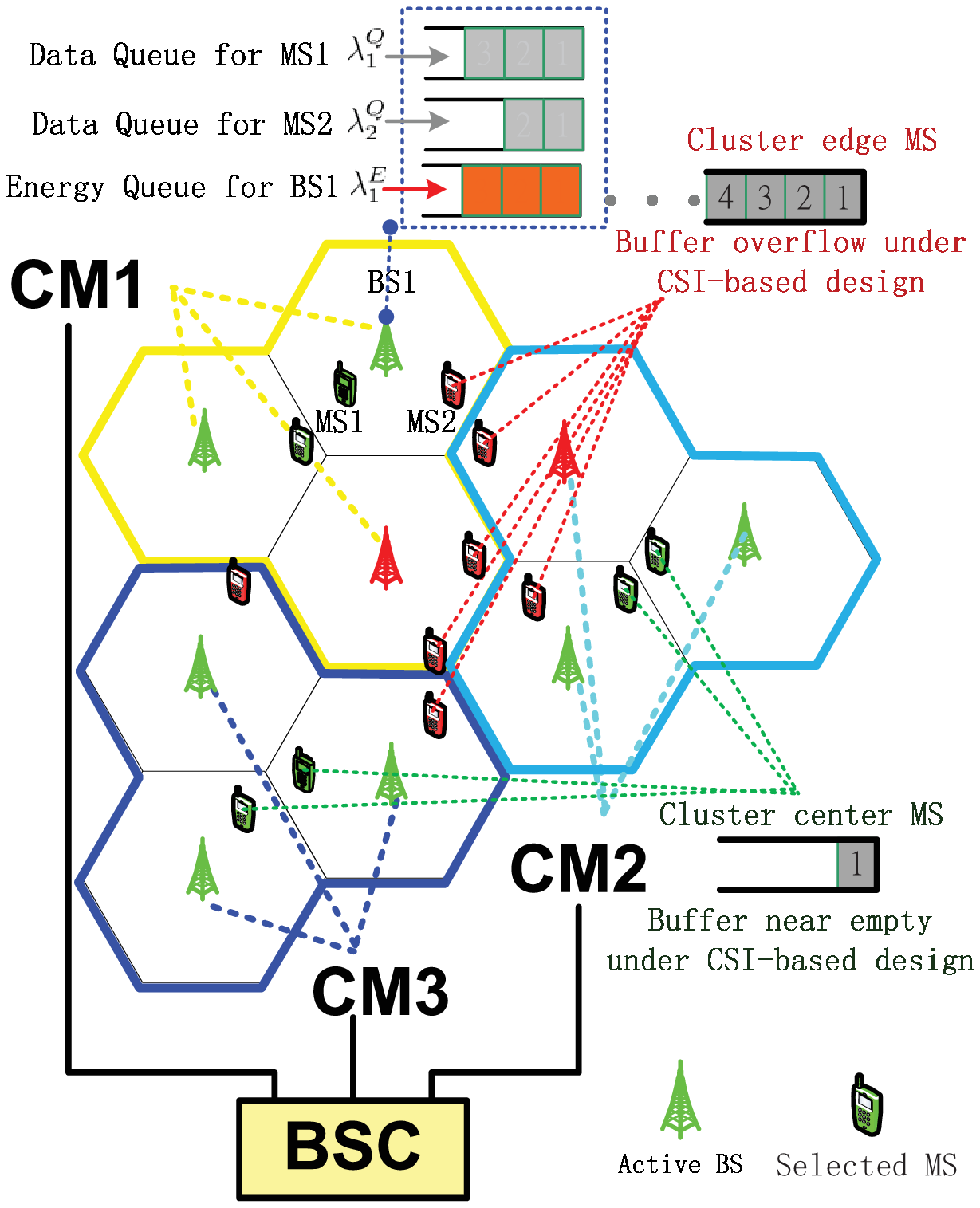}}}
  \subfigure[Control Architecture]
  {\resizebox*{10cm}{7cm}{\includegraphics{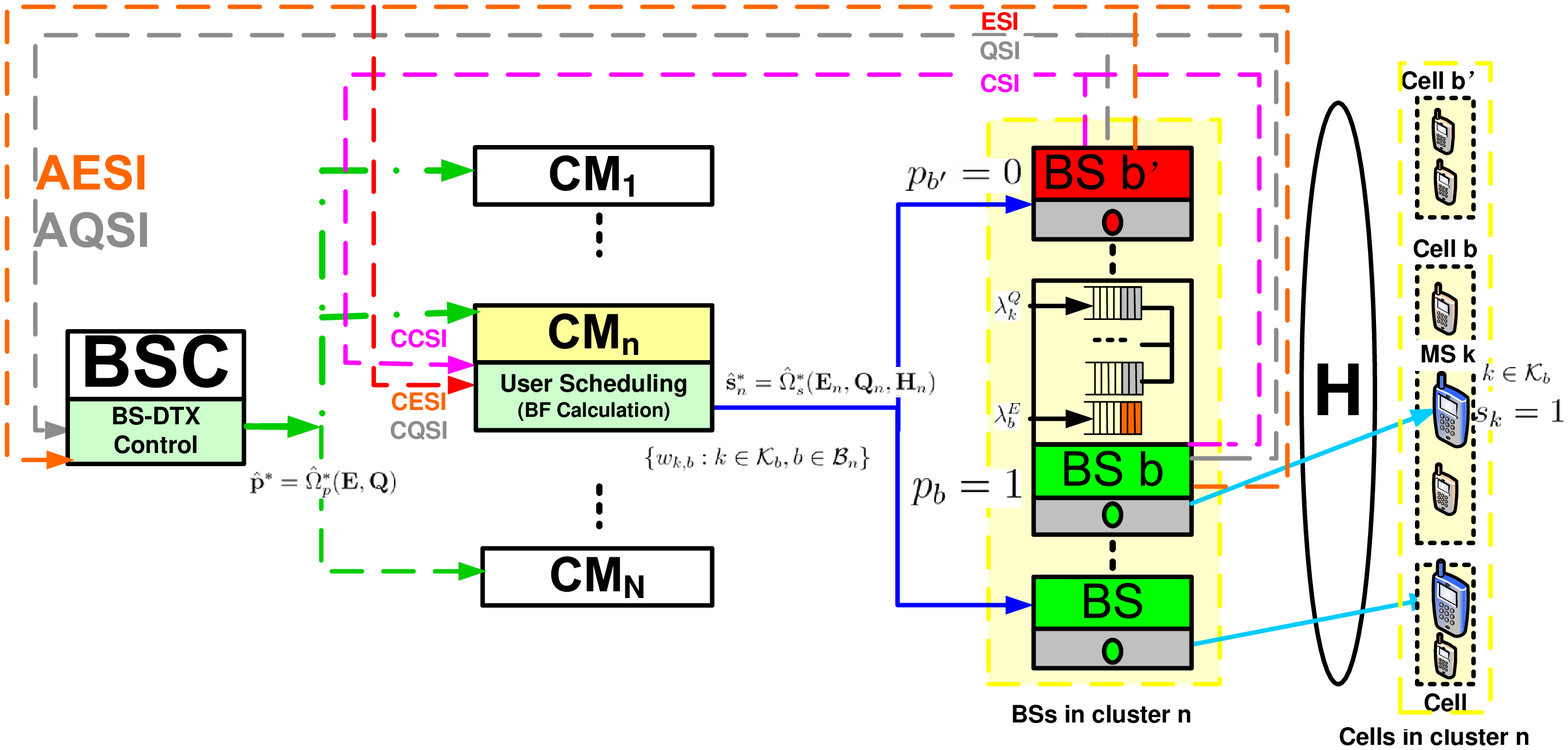}}}
  \end{center}
  \caption{System model and control architecture of the downlink coordinated MIMO  systems. The dotted lines and solid
lines on Fig \ref{fig:system-model}. (b) denote the control path and
data path, respectively.}
  \label{fig:system-model}
\end{figure}

\begin{figure}[t]
\begin{center}
\includegraphics[height=8cm, width=10cm]{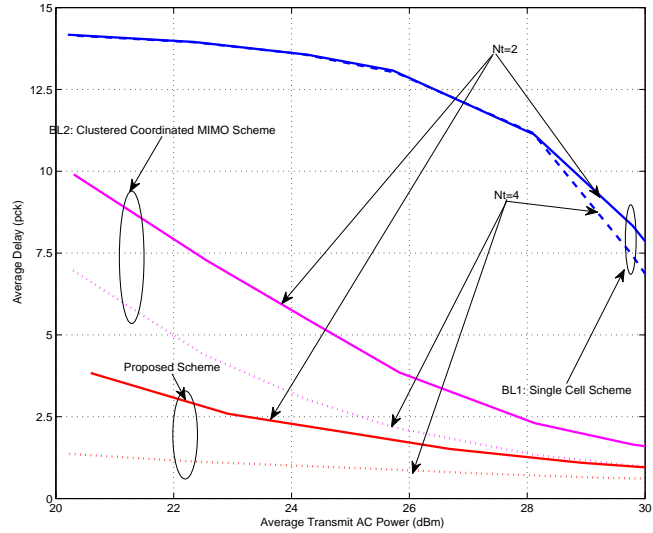}
\caption{Average delay versus average transmit \textcolor{black}{grid power} (dBm) at $N_t=2$ and $N_t=4$. The average data
arrival rate is $\lambda^Q_k=0.4$ pck/slot, the average renewable energy arrival rate is $\lambda^E_b=0.5$ unit/slot,  and the energy storage size $N_E=4$ units for all $k\in \mathcal K$ and $b\in \mathcal B$.}
\label{Fig:delay_vs_pwr}
\end{center}
\end{figure}

\begin{figure}[t]
\begin{center}
\includegraphics[height=8cm, width=10cm]{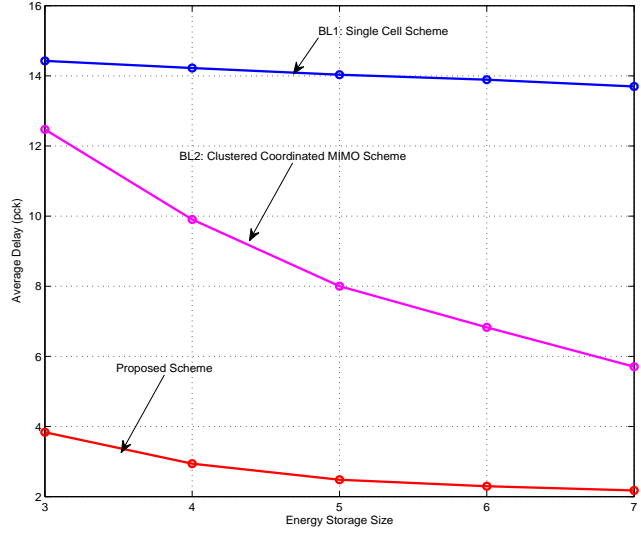}
\caption{Average delay versus energy storage size at average transmit \textcolor{black}{grid power} 15 dBm. The average data
arrival rate is $\lambda^Q_k=0.4$ pck/slot and the average renewable energy arrival rate is $\lambda^E_b=0.6$ unit/slot  for all $k\in \mathcal K$ and $b\in \mathcal B$.}
\label{Fig:delay_vs_NE}
\end{center}
\end{figure}

\begin{figure}[t]
\begin{center}
\includegraphics[height=8cm, width=10cm]{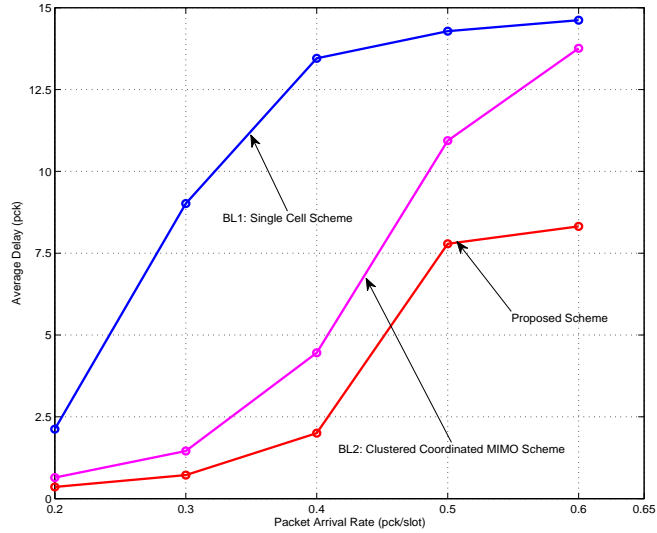}
\caption{Average delay versus average data arrival rate at average transmit \textcolor{black}{grid power} 25 dBm. The average renewable energy arrival rate is $\lambda^E_b=0.5$ unit/slot and the energy storage size $N_E=4$ for all $b\in \mathcal B$.}
\label{Fig:delay_vs_arrival}
\end{center}
\end{figure}

\begin{figure}[t]
\begin{center}
\includegraphics[height=8cm, width=10cm]{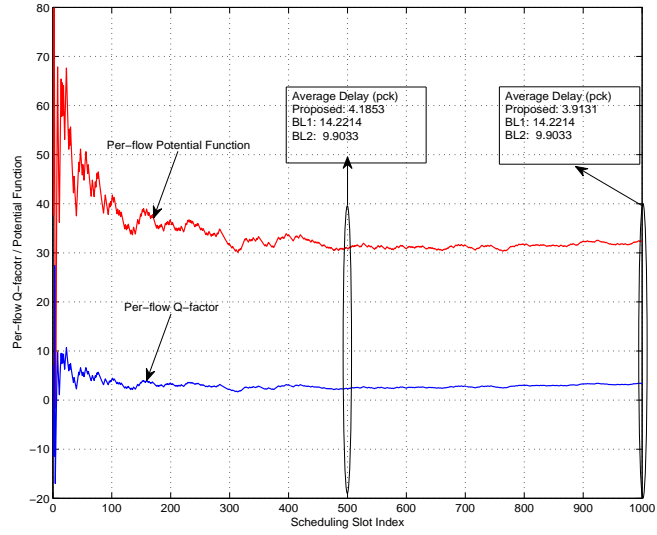}
\caption{Convergence property of the proposed distributed
online learning algorithm at average transmit \textcolor{black}{grid power} 20 dBm. The average data
arrival rate is $\lambda^Q_k=0.4$ pck/slot, the average renewable energy arrival rate is $\lambda^E_b=0.5$ unit/slot,  and the energy storage size $N_E=4$ for all $k\in \mathcal K$ and $b\in \mathcal B$. The figure illustrate the instantaneous
per-flow post-decision potential function value  $\hat U_k^t(\widetilde E_b, \widetilde Q_k)$ and the
instantaneous per-flow Q-factor value $\hat {\mathbb Q}_k^t(E_b, Q_k,\mathbf p)$ respectively (during the
online iterative updates in \eqref{eqn:update-postd-V} and
\eqref{eqn:update-Q}) versus instantaneous slot index, where $k=1$, $b=1$, $E_b=1$, $Q_k=1$, $\widetilde E_b=1$, $\widetilde Q_k=1$ and $\mathbf p=\{p_b=1:b\in \mathcal B\}$. The boxes
indicate the average delay performance of various schemes at the two
selected slot indices.}
\label{Fig:convergence}
\end{center}
\end{figure}

\end{document}